\newtheorem{theorem}{Theorem}
\newtheorem{assumption}{Assumption}
\newtheorem{lemma}{Lemma}
\newtheorem{remark}{Remark}
\newcommand{\mb}[1]{\mathbb{#1}}
\newcommand{\mf}[1]{\mathbf{#1}}
\newcommand{\mr}[1]{\mathrm{#1}}
\newcommand{\mc}[1]{\mathcal{#1}}
\newcommand{\bs}[1]{\boldsymbol{#1}}
\newcommand{\E}{\mb E}
\newcommand{\beq}{\begin{equation}}
\newcommand{\eeq}{\end{equation}}
\newcommand{\bea}{\begin{eqnarray}}
\newcommand{\eea}{\end{eqnarray}}
\newcommand{\bi}{\begin{itemize}}
\newcommand{\ei}{\end{itemize}}
\definecolor{mygray}{gray}{0.95}
\begin{document}

\title{
  Inference for Regression with Variables Generated by AI or Machine Learning\thanks{Authors are in alphabetical order. This paper supersedes our February 2024 working paper \url{https://arxiv.org/pdf/2402.15585v1}, which was circulated under a different title. SH acknowledges funding from ERC Consolidator Grant 864863, which supported his and LB's time.  We thank Nick Bloom, Germain Gauthier, Evan Munro, Ashesh Rambachan, David Rossell, and Leif Thorsrud for feedback, and seminar participants at Aarhus, Bocconi, BSE, Bates, Brown, Columbia, ETH Zurich, LSE, Kent, Reserve Bank of Australia, UCSD, UPenn, USC, Warwick, the 3rd Monash-Warwick-Zurich Text-as-Data Workshop, 2024 BSE Summer Institute, 2024 FinEML Conference, 2024 UChicago Machine Learning in Economics Summer Conference, 2024 ISNPS Conference, 2024 ECONDAT Fall Meeting, and the 2024 NASM, ESIF-AIML, and ESAM Conferences of the Econometric Society. Konrad Kurczynski provided excellent research assistance.}
}
\author{
  \begin{tabular}{c@{\hskip 1in}c}
    Laura Battaglia\thanks{Department of Statistics, University of Oxford. \texttt{battaglia@stats.ox.ac.uk}}
     & Timothy Christensen\thanks{Department of Economics, Yale University. \texttt{timothy.christensen@yale.edu}} \\[1ex]
    Stephen Hansen\thanks{Department of Economics, University College London, IFS, and CEPR. \texttt{stephen.hansen@ucl.ac.uk}}
     & Szymon Sacher\thanks{Graduate School of Business, Stanford University. \texttt{sacher@stanford.edu}}
  \end{tabular}
}

\date{April 29, 2025}
\maketitle

\begin{abstract}
  \noindent Researchers now routinely use AI or other machine learning methods to estimate latent variables of economic interest, then plug-in the estimates as covariates in a regression.  We show both theoretically and empirically that naively treating AI/ML-generated variables as ``data'' leads to biased estimates and invalid inference. To restore valid inference, we propose two methods: (1) an explicit bias correction with bias-corrected confidence intervals, and (2) joint estimation of the regression parameters and latent variables.  We illustrate these ideas through applications involving label imputation, dimensionality reduction, and index construction via classification and aggregation.
\end{abstract}

\bigskip

\noindent \textbf{JEL Codes}: C11, C51, C55

\bigskip

\noindent \textbf{Keywords}: Measurement Error, Artificial Intelligence, Large Language Models, Topic Models, Inference

\thispagestyle{empty}
\clearpage{}
\pagenumbering{arabic}

\newpage
\onehalfspacing


\section{Introduction} \label{sec:intro}

Economists now routinely use artificial intelligence (AI) or machine learning (ML) algorithms to generate new variables.  These technologies are used to quantify unstructured data such as text and images, to measure subtle concepts like uncertainty and sentiment, and to create new data sets of variables that were previously too costly, labor-intensive, or otherwise infeasible to collect.

The variables generated by AI and ML algorithms are rarely of interest in themselves, but rather are used in econometric models to address questions of cause and effect, produce forecasts, or estimate counterfactuals. In pioneering work, \citet{bakerMeasuringEconomicPolicy2016} quantifies economic policy uncertainty from news text and uses it as a covariate in regressions and vector autoregressions (VARs). In more recent examples, \citet{magnolfiTripletEmbeddingsDemand2025} measures product differentiation from survey data and \citet{compianiDemandEstimationText2025} measures product substitutability with text and images from online platforms. Both papers then use the derived measures in demand models. \citet{gorodnichenkoVoiceMonetaryPolicy2023} measures tone-of-voice from audio recordings of central bank press conferences, then runs predictive regressions of financial variables on tone. \citet{gabaixAssetEmbeddings2023} imputes firm characteristics from investor holdings data and uses them to explain asset returns. \cite{einavProducingHealthMeasuring2024} measures patient health status from surveys then uses it in an econometric model of nursing home value added. \cite{vafaDecomposingChangesGender2023} measures labor market experience from CVs and uses it to study the gender wage gap. 

In standard practice, AI- or ML-generated variables are treated as regular numerical data when estimating and performing inference in downstream econometric models. We refer to this as the \emph{two-step strategy}: variables are generated in the first step, then used as covariates in the second step. While a pragmatic initial approach, the two-step strategy has largely unknown statistical properties.  One natural concern is that estimators are biased due to measurement error in the AI/ML-generated variables.  Another is that inference suffers from a generated regressor problem \citep{paganEconometricIssuesAnalysis1984}.  Conversely, results in the time-series literature suggest that plugging in estimated variables need not lead to inference problems \citep{stockForecastingUsingPrincipal2002,bernankeMeasuringEffectsMonetary2005,baiConfidenceIntervalsDiffusion2006}.  Without a coherent framework for analyzing the problem, it is difficult to assess which of these perspectives is correct.  More generally, characterizing the statistical guarantees\textemdash or lack thereof\textemdash of the two-step strategy is an important step in developing reliable inference methods for working with variables generated by AI or ML, an area that is still very much in its infancy.

This paper makes two main contributions. First, we show formally that the two-step strategy can lead to invalid inference on downstream regression parameters, even in modern settings where high-performance algorithms are deployed on large data sets.  Second, more constructively, we propose two methods for valid inference: (i) bias-corrected confidence intervals, and (ii) joint estimation of the regression coefficients and latent variables. We document the performance of the methods in several empirical settings.

We consider a downstream regression where the outcome variable $Y_i$ depends on vectors of latent variables $\boldsymbol\theta_i$ and observed variables $\mathbf{q}_i$.  For each observation, the researcher also has an unstructured or high-dimensional data set $\mathbf{x}_i$ for estimating $\boldsymbol\theta_i$. To accommodate many scenarios, we stay agnostic on the form that $\mathbf{x}_i$ takes. For instance, it may be a sequence of words with textual data, an array of RGB values with image data, or a sequence of amplitudes with audio data.  One scenario is \textit{label imputation}, where $\boldsymbol\theta_i$ is a vector of binary labels (e.g., race or gender indicators).  In this case, $\mathbf{x}_i$ (e.g., images) is used as an input to a classifier which produces predicted values $\hat{\bs \theta}_i$.  Another scenario is \textit{dimensionality reduction}, in which a low-dimensional representation (or embedding) $\hat{\bs \theta}_i$ of $\mathbf{x}_i$ is generated with an unsupervised learning model, as in \cite{hansenTransparencyDeliberationFOMC2018}, \cite{bybeeBusinessNewsBusiness2024}, and \cite{ashMoreLawsMore2025} to name a few.  A third scenario is \textit{index construction}.  For example, $\mathbf{x}_i$ could be a set of texts (e.g., articles, paragraphs, or sentences) which are individually classified as containing positive or negative sentiment, then aggregated and normalized to produce a sentiment score $\hat{\bs \theta}_i$, as in \cite{bakerMeasuringEconomicPolicy2016}, \cite{caldaraMeasuringGeopoliticalRisk2022}, and \cite{gorodnichenkoVoiceMonetaryPolicy2023}.

In the two-step strategy, the researcher first computes an estimate $\hat{\boldsymbol\theta}_i$ of $\boldsymbol{\theta}_i$ from $\mf x_i$ for each observation, then regresses $Y_i$ on $\hat{\boldsymbol\theta}_i$ and $\mathbf{q}_i$, and reports point estimates and confidence intervals using standard OLS methods (i.e., treating $\hat{\boldsymbol\theta}_i$ as regular numeric data).  Depending on the context, one may wish to do inference on the coefficients of the latent or observed variables.  In either case, the key question is whether this approach leads to valid inference.

To this end, we introduce an asymptotic framework in which the magnitude of the measurement error and sampling uncertainty remain comparable as the sample size increases. This framework delivers tractable approximations to the finite-sample distribution faced in practice, in which both sources of error play a role.\footnote{Our use of sequences of DGPs to better approximate the finite-sample behavior of estimators has a precedent in a number of contexts in economics. See, e.g., \cite{phillipsUnifiedAsymptoticTheory1987}, \cite{chesherEffectMeasurementError1991}, \cite{staigerInstrumentalVariablesRegression1997}, and \cite{hahnAsymptoticallyUnbiasedInference2002}.}  It also captures the prevailing trend of analyzing increasingly large data sets with increasingly accurate algorithms.  

In this framework, we derive two new results about the two-step strategy.  First, the asymptotic distribution of OLS estimators has a first-order bias due to measurement error.  The bias is increasing in the scale of measurement error relative to sampling uncertainty in the downstream model.  Second, the asymptotic variance of the OLS estimator is the same as if $Y_i$ were regressed on the true $\bs{\theta}_i$ and $\mathbf{q}_i$.  Moreover, OLS standard errors are consistent.  As a result, two-step confidence intervals 
have the correct width but incorrect centering, making them invalid for inference. This differs from a generated regressor problem, where the variance is inflated but there is no location shift. To the extent that the empirical economics literature acknowledges the two-step strategy might be a problem, concerns typically focus on standard errors. Our analysis shows these concerns are misplaced: the primary issue is bias, not incorrect standard errors.

For the case of imputed labels, the potential for AI/ML-generated variables to bias downstream estimators has been flagged in recent work, mainly in data science and political science. See \cite{fongMachineLearningPredictions2021},  \cite{allonMachineLearningPrediction2023}, \cite{angelopoulosPredictionpoweredInference2023,angelopoulosPPIEfficientPredictionPowered2023}, \cite{zhangDebiasingMachineLearningAIGenerated2023}, \cite{zrnicCrosspredictionpoweredInference2024} \cite{miaoTaskAgnosticMachineLearningAssistedInference2024}, \cite{klugerPredictionPoweredInferenceImputed2025} and \cite{sanfordAdversarialDebiasingUnbiased2025} for general ML-generated variables, and \cite{egamiUsingImperfectSurrogates2023,egamiUsingLargeLanguage2024} and \cite{ludwigLargeLanguageModels2025} for variables generated by large language models.\footnote{There is also recent work in economics that considers the complementary problem of imputed dependent variables as opposed to imputed covariates; see \cite{rambachanProgramEvaluationRemotely2025} and \cite{modarressiCausalInferenceOutcomes2025}.} These works demonstrate the inconsistency of OLS estimators in settings 
where the magnitude of measurement error remains fixed as the sample size increases. However, this asymptotic framework isn't necessarily appropriate in modern use cases, where high-quality algorithms are deployed on large data sets. Our analysis provides a new set of results for such cases.

Furthermore, these works propose bias corrections that require a validation sample in which both the true $\bs \theta_i$ and its AI/ML-generated estimate $\hat{\bs \theta}_i$ are observed alongside $(Y_i,\mathbf q_i)$.\footnote{This approach is related to an older literature on estimation with auxiliary data \citep{chenSemiparametricEfficiencyGMM2008}.} The idea is to use the validation data to estimate bias, then  bias-correct estimates from the main sample in which only $\hat{\bs \theta}_i$ is available. Such an approach is possible when the researcher can, albeit at some cost, scrutinize $\mf x_i$ and assign a ground-truth $\bs \theta_i$. But in that case one could simply estimate the model on the validation data alone: the AI/ML-generated data is only useful insofar as it may help improve efficiency. More problematically,  $\bs \theta_i$ is latent in most economic use cases---for instance, one never observes true policy uncertainty, risk, or sentiment---so validation data is unavailable and these existing methods are inapplicable.

Our first inference approach is based on bias correction, but unlike existing approaches it does not require validation data. Instead, we rely on our theoretical results, which characterize the first-order asymptotic bias of OLS estimators and establish consistency of two-step standard errors. This allows us to perform an analytical bias correction, then 
re-center the usual confidence intervals at the bias-corrected estimator to perform valid inference. Our bias corrections are general and widely applicable. We specialize them to AI/ML-generated binary labels and dimension reduction. For the former, bias-correction can be performed without validation data provided one has a measure of the classifier's expected false-positive rate.\footnote{The expected false-positive rate may be estimated from a validation sample, but it may also be available externally. To give a recent example, \cite{bursztynImmigrantNextDoor2024} uses a ML algorithm to classify charitable donors' names by ethnicity. They estimate the accuracy of the classifier using an external sample of North Carolina voter registration data which contains self-reported ethnicity (but not data on donations or other controls).} For the latter, bias correction can be performed using the estimated low-dimensional representation.

It is important to note that the measurement error in AI/ML generated variables may be ``nonclassical'' (i.e., correlated with the true latent $\bs \theta_i$). This makes it difficult for researchers to know even the sign of the bias ex ante: there may be attenuation or amplification. Indeed, Section~\ref{sec:simulations.lessons} shows that bias can be positive or negative in the case of index construction. Nonclassical measurement error is also much more difficult to correct for than classical measurement error, requiring specialized methods\textemdash see, e.g., \cite{schennachMeasurementSystems2022} for a discussion. 

Our bias corrections are convenient to apply, but they may not be available for all types of AI or ML algorithms. They also rely on the magnitudes of measurement error and sampling uncertainty being comparable.  To perform inference without validation data in settings where this is not the case, we introduce a second approach based on joint maximum likelihood estimation of the 
models for latent variable estimation and regression.
In this approach, the model linking $\mf x_i$ with the latent $\bs \theta_i$ is analogous to an ``observation equation'' in state-space models. This requires some more careful modeling of how $\bs \theta_i$ and $\mf x_i$ are related, but we demonstrate its feasibility with three distinct and non-exhaustive applications: AI/ML generated binary labels, dimension reduction, and AI/ML generated indices. 

While joint estimation is straightforward in theory, it presents a computational challenge due to the large number of latent $\bs \theta_i$ that must be integrated out of the likelihood.  To address this, we use Hamiltonian Monte Carlo, a Markov Chain Monte Carlo algorithm that uses information on the gradient of a distribution to sample from it. Implementation is greatly simplified with the use of modern probabilistic programming languages: one simply specifies the likelihood in code, which is then ``automatically'' compiled to perform sampling.\footnote{Previous papers that have performed inference using the joint likelihood approach with unstructured data include \citet{gentzkowMeasuringGroupDifferences2019}, \citet{ruizSHOPPERProbabilisticModel2020}, and \citet{munroLatentDirichletAnalysis2022}.  These typically require custom code to estimate, which makes adapting the model difficult for non-specialists.}

We introduce three applications to illustrate the theoretical results.  The first illustrates label imputation.  \citet{hansenRemoteWorkJobs2023} uses a Large Language Model to classify
each job posting in the Lightcast dataset as offering remote work or not. 
 These imputed labels can be merged with other posting-level metadata to study 
 the causes and consequence of remote work adoption.  We focus on the relationship between wage inequality and remote work by regressing the posted wage on the remote indicator and controls.  The classifier achieves a high test-set accuracy of 99\%, so one might expect that measurement error is inconsequential. 
  But our theory shows that the important quantity is measurement error \textit{relative} to sampling uncertainty, and the Lightcast dataset has hundreds of millions of individual observations.  We show via a case study that bias correction and joint estimation both estimate notably stronger effects of remote work on posted wages than the two-step strategy.

The second application illustrates regressors derived from dimension-reduction algorithms.  \citet{bandieraCEOBehaviorFirm2020} conducts a time-use survey to document behavioral differences among CEOs and their impact on firm performance. The authors use latent Dirichlet allocation \citep{bleiLatentDirichletAllocation2003}\textemdash a factor model for discrete data\textemdash to represent CEO time-use behavior in a low-dimensional space.  This representation is then included as a covariate along with other firm controls in a sales regression.  We replicate this two-step strategy and find the estimated impact of behavior on performance aligns with estimates obtained via bias correction and the joint estimation strategy.  Our theory predicts this will hold when measurement error is low compared to sampling uncertainty. 
We then re-estimate the model using a 10\% subsample of time units for each CEO to scale-up measurement error. Here we find the two-step strategy produces insignificant behavioral effects, while both corrections produce significant effects.

The third application illustrates index construction  via classification and aggregation.  Central bank communication has become a major research and policy topic over the past decade, and linking market reactions to communication often involves quantifying the latter from unstructured data.  We replicate the hawkish sentiment measure from \citet{gorodnichenkoVoiceMonetaryPolicy2023} which classifies individual paragraphs of FOMC statements as hawkish or dovish.  Paragraphs are then aggregated to form a meeting-level share which proxies continuous, latent sentiment.  Following the two-step strategy, we regress the path factor \citep{gurkaynakActionsSpeakLouder2005}\textemdash a measure of movement in the long end of the yield curve\textemdash on sentiment and find weakly positive effects.  With joint estimation, however, the estimated effect size and $R^2$ are both nearly three times larger,  which shows the value of our correction for prediction as well as for inference.

Finally, in simulation exercises calibrated to the empirical applications, we find that the two-step strategy performs poorly\textemdash in terms of bias in estimated coefficients and coverage of confidence intervals\textemdash relative to both bias correction and joint estimation.  This provides further evidence that 
 measurement error distorts inference in the two-step strategy, while our corrections reduce bias and restore valid inference even in challenging empirical settings.

Our overall message is that the increasingly common practice of using regressors generated by AI or ML can lead to invalid inference, but practical solutions exist. We view our proposed bias correction and joint estimation approaches as robust, widely applicable starting points for empirical analysis.  For instance, an emerging line of research uses text-derived sentiment indices as inputs into forecasting models.  
Our analysis can be extended to show how errors in these indices lead to biased forecasts, while our solutions can improve the performance of these forecasting methods.  Likewise, the industrial organization literature increasingly uses embedded representations of firms and products to model market behavior and demand.  Our solutions can be adapted to these settings as well. Going forward, it is important to establish which algorithms and econometric models are most susceptible to measurement error and associated inference problems.  More generally, inference problems arising from the use of AI/ML-generated variables should more widely recognized in order to fully harness the potential of AI/ML methods in empirical economics.

The rest of the paper proceeds as follows.  Section~\ref{sec:warmup} provides a simple setting illustrating why the two-step strategy leads to biased inference and how our proposed solutions can help.   Section~\ref{sec:applications} introduces the more general framework and presents three empirical applications. 
 Sections~\ref{sec:model} and~\ref{sec:integrated} present, respectively, the main theoretical analysis of the two-step strategy and the proposed solutions.  Section~\ref{sec:simulations} presents simulation results and Section~\ref{sec:conclusion} concludes.


\section{A Simple Example} \label{sec:warmup}

This section presents a simple model to illustrate how the two-step strategy leads to biased inference, and how our proposed methods can restore valid inference.

\subsection{Model}

The model is loosely based on \cite{bakerMeasuringEconomicPolicy2016}. Suppose we are interested in the effect $\gamma$ of $\theta_i$ (policy uncertainty in month $i$) on $Y_i$ (employment or investment, say, in month $i+1$) in the regression model
\begin{equation}
    Y_i = \alpha + \gamma \theta_i + \varepsilon_i . \label{eq:toy.obs}
\end{equation}
Policy uncertainty is a nebulous concept that is difficult to precisely define let alone observe. \cite{bakerMeasuringEconomicPolicy2016} forms EPU indices from monthly counts of articles in ten newspapers containing certain terms, which they convert to an index. 
Evidently there is measurement error due to the sampling of articles: one could change the set of newspapers surveyed and obtain a quantitatively different (but related) measure.\footnote{Misclassification of articles is a second source of measurement error. We sidestep this for now for sake of exposition, but account for it in later sections.} To capture this, consider
\begin{equation}
    X_i \sim \mathrm{Binomial}(C_i,\theta_i) , \label{eq:toy.state}
\end{equation}
where $C_i$ is the number of articles sampled in month $i$, $X_i$ is the number of these that relate to uncertainty,
and $\theta_i$ is true policy uncertainty.
We observe $X_i$, $Y_i$, and $C_i$ but not $\theta_i$. One can estimate $\theta_i$ using $\hat\theta_i = X_i/C_i$, as done by \citeauthor{bakerMeasuringEconomicPolicy2016} (\citeyear[p.~1599]{bakerMeasuringEconomicPolicy2016}).

\subsection{Problem with the Two-Step Strategy}

In this example, the two-step strategy computes the OLS estimate $\hat \gamma$ from regressing $Y_i$ on $\hat{\theta}_i$, then performs standard OLS inference for $\gamma$. This approach ignores the fact that $\hat\theta_i$ is a noisy estimate of $\theta_i$, potentially leading to biased estimates and invalid inference.

We use asymptotics to tractably approximate the finite-sample problem faced by the researcher, where $\hat \gamma$ is computed from $(Y_i,X_i,C_i)_{i=1}^n$. Both measurement error and sampling error affect the properties of $\hat \gamma$ in finite samples. We therefore consider a sequence of populations indexed by the sample size $n$, where the distribution of $(Y_i,X_i,\theta_i)$ conditional on $C_i$ is fixed but the distribution of $C_i$ is changing with $n$ so that
\begin{equation} \label{eq:kappa.warm-up}
    \sqrt n \times \E \left[ \frac{1}{C_i} \right] \to \kappa \in [0,\infty) .
\end{equation}
In this sequence of DGPs, the variance of $\hat \theta_i$, which is proportional to $C_i^{-1}$, is of the same order of magnitude as sampling uncertainty. 
The parameter $\kappa$ controls the relative importance of measurement error, with larger values of $\kappa$ giving relatively greater importance to measurement error. Working with this sequence of DGPs therefore allows us to gain insights about how $\hat{\gamma}$ behaves when both measurement and sampling error are present.

Under suitable regularity conditions (see Theorem~\ref{theorem:drifting.general}), one can show that
\begin{equation}\label{eq:toy.result}
    \begin{aligned}
         & \sqrt n(\hat \gamma - \gamma )  \to_d N \left( -\kappa \, \gamma \frac{\E[ \theta_i (1 - \theta_i) ]}{\mathrm{Var}(\theta_i)} , \frac{\E[ \varepsilon_i^2 (\theta_i - \E[\theta_i])^2] }{\mathrm{Var}(\theta_i)^2} \right)  , \\
         & \frac{\sum_{i=1}^n \hat \epsilon_i^2 (\hat \theta_i - \bar \theta_n)^2}{\sum_{i=1}^n (\hat \theta_i - \bar \theta_n)^2} \to_p \frac{\E[ \varepsilon_i^2 (\theta_i - \E[\theta_i])^2] }{\mathrm{Var}(\theta_i)^2}  ,
    \end{aligned}
\end{equation}
where $\bar \theta_n$ is the sample mean of $\hat \theta_i$ and $\hat \epsilon_i$ is the OLS residual.
The first result shows $\hat \gamma$ is asymptotically normally distributed with the same variance as if $Y_i$ was regressed on the true latent $\theta_i$, but with a centering that differs from zero when $\kappa > 0$. The second result shows OLS standard errors are consistent, irrespective of $\kappa$. Taken together, these results imply that two-step confidence intervals (given by $\hat \gamma$ $\pm$ $1.96$ times the OLS standard error) have the correct width, but incorrect centering whenever $\kappa > 0$. Moreover, the asymptotic bias of $\hat \gamma$, and therefore the degree of under-coverage of two-step CIs, is increasing in $\kappa$.

\subsection{Proposed Solutions}

\subsubsection{Bias Correction}

Our first proposed solution is a straightforward bias correction. This approach simply constructs an estimate of the bias and adds it back to the two-step estimator $\hat \gamma$. The bias correction follows easily from (\ref{eq:toy.result}), so the bias corrected estimator $\hat \gamma^{bc}$ is
\[
    \hat \gamma^{bc} = \left( 1 + \frac{\hat \kappa}{\sqrt n} \frac{\sum_{i=1}^n \hat \theta_i (1-\hat \theta_i)}{\sum_{i=1}^n (\hat \theta_i - \bar \theta_n)^2} \right) \hat \gamma ,
\]
where $\hat \kappa = \frac{1}{\sqrt n} \sum_{i=1}^n C_i^{-1}$.
Bias corrected confidence intervals are then simply $\hat \gamma^{bc}$ $\pm$ $1.96$ times the OLS standard error. See Theorem~\ref{theorem:CI} for a formal justification for this approach.

\subsubsection{Joint Estimation}

A second approach is joint maximum likelihood estimation of (\ref{eq:toy.obs}) and (\ref{eq:toy.state}). This approach treats (\ref{eq:toy.state}) analogously to an observation equation in a state-space model, with $\theta_i$ as a latent variable.

We start by assuming the error terms in (\ref{eq:toy.obs}) have probability density function $\sigma^{-1} f(\varepsilon/\sigma)$. Combining with  (\ref{eq:toy.state}), this yields the likelihood
\[
    f(Y_i,X_i|C_i,\theta_i;(\gamma,\alpha,\sigma)) \propto \frac 1 \sigma f \left( \frac{Y_i - \alpha - \gamma \theta_i}{\sigma} \right) (\theta_i)^{X_i}(1-\theta_i)^{C_i - X_i}.
\]
Since $\theta_i$ is latent we proceed in the spirit of random effects and assume $\theta_i$ is drawn from a distribution with probability density function $g$ on $[0,1]$.\footnote{One can easily allow the distribution of $\theta_i$ to depend on covariates, as in correlated random effects. We suppress this for now, but adopt such an approach in the empirical applications.} Note that the effect of this prior will be dominated by the data as $C_i$ becomes large. We then integrate out $\theta_i$ to produce a likelihood in terms of the observed data:
\[
    f(Y_i,X_i|C_i;(\gamma,\alpha,\sigma)) = \int_0^1 f(Y_i,X_i|C_i,\theta_i;(\gamma,\alpha,\sigma)) g(\theta_i) \, d \theta_i.
\]
We estimate $\gamma$ by maximizing the log-likelihood
\[
    L_n((\gamma,\alpha,\sigma)) = \sum_{i=1}^n \log  f(Y_i,X_i|C_i;(\gamma,\alpha,\sigma)).
\]
Inference is performed using standard asymptotics for maximum likelihood estimators.


\section{General Setup and Applications} \label{sec:applications}

In the general model, we wish to estimate and perform inference on the parameters $\bs \gamma$ and $\bs \alpha$ of the linear regression model
\begin{equation}
  \label{eq:reg}
  Y_i = \boldsymbol{\gamma}^T \boldsymbol{\theta}_i  + \boldsymbol{\alpha}^T \mathbf{q}_i + \varepsilon_i,
\end{equation}
where $\boldsymbol{\theta}_i$ is now extended to be a vector of \emph{latent} variables of interest, 
$\mathbf{q}_i$ is a vector of \emph{observed} quantitative variables, 
and $\E \left[ \varepsilon_i (\boldsymbol{\theta}_i , \mathbf{q}_i) \right] = 0 $. For each observation $i$ we also have unstructured or high-dimensional data $\mf x_i$, from which an estimate $\hat{\bs{\theta}}_i$ of $\bs{\theta}_i$ can be derived. Thus, the researcher's dataset is a random sample $(Y_i, \mf q_i, \mf x_i)_{i=1}^n$. The parameter $\boldsymbol \gamma$ is typically the key object of interest, but in some cases (e.g., \citealp{aviviArePatentExaminers2024}) $\boldsymbol \alpha$ may be the focus, with  $\boldsymbol{\theta}_i$ serving as a control variable derived from unstructured data.

The dominant two-step strategy can be summarized as follows:
\begin{enumerate}
  \item[(i)] Compute estimates $\hat{\boldsymbol{\theta}}_i$ of $\boldsymbol{\theta}_i$ for all observations $i = 1,\ldots,n$.
  \item[(ii)] Regress $Y_i$ on $\hat{\boldsymbol{\theta}}_i$ and $\mathbf{q}_i$. Compute standard errors and confidence intervals, treating the $\hat{\boldsymbol{\theta}}_i$ as if they are regular numeric data.
\end{enumerate}
Evidently there is a measurement error problem: the estimates $\hat{\boldsymbol{\theta}}_i$ are proxies for the true latent covariates $\boldsymbol{\theta}_i$ in (\ref{eq:reg}). Step (ii)~overlooks this issue and treats the estimates $\hat{\boldsymbol{\theta}}_i$ as regular numeric data.  This raises the possibility that two-step estimators of $\bs \gamma$ and $\bs \alpha$ are biased. Moreover, conventional standard errors and confidence intervals do not account for the additional variation arising from using $\hat{\boldsymbol{\theta}}_i$ instead of $\boldsymbol{\theta}_i$, raising the possibility of a generated regressors problem. To understand the forces at play, in Section \ref{sec:model} we shall analyze the two-step strategy and formally demonstrate why it can lead to biased estimates and inference.  Many specific cases can be captured by this setup. We consider three here.\footnote{The previous version of the paper considers further extensions, for example regression onto similarity measures between vector representations of unstructured data.}

\paragraph{Application 1: AI/ML-Generated Labels.}
Economists now routinely use AI or ML methods to impute missing covariates. A leading use case involves regressions of an outcome $Y_i$ on a latent binary variable $\theta_i$ (e.g., indicating positive/negative sentiment of a news article or racial group membership) and observed controls $\mf q_i$.\footnote{We present the case of scalar $\bs\theta_i$ in the main text and defer the case of multiple categories to Appendix~\ref{sec:appendix_examples}.} Examples include \cite{goldsmith-pinkhamGenderGapHousing2023}, \cite{adams-prasslFirmConcentrationJob2023}, \cite{argyleRacialDisparitiesBias2025}, and \cite{wuBehavioralResponsesEstate2024}. Unstructured data $\mf x_i$ (e.g., article text or voter registration data) is often used to predict $\theta_i$ using a classification algorithm. The two-step strategy entails first generating a prediction $\hat{\theta}_i$ of $\theta_i$ then regressing $Y_i$ on $\hat{\theta}_i$ and $\mf q_i$. Here the source of measurement error is misclassification: $\hat \theta_i$ may differ from $\theta_i$ for some observations. 
Although sophisticated modern classifiers have low error rates, they are often used to impute missing observations for large data sets. As a result, measurement error from misclassification may be non-negligible relative to sampling error in the downstream regression, invalidating two-step inference.

\paragraph{Application 2: Topic Models.}

A large empirical literature uses topic models to reduce the dimension of unstructured data. 
Examples include \citet{hansenTransparencyDeliberationFOMC2018}, \citet{muellerReadingLinesPrediction2018}, \citet{larsenValueNewsEconomic2019}, \citet{thorsrudWordsAreNew2020}, \citet{adamsDeathCommitteeAnalysis2021}, \citet{bybeeBusinessNewsBusiness2024}, and \citet{ashMoreLawsMore2025} with text, \citet{dracaHowPolarizedAre2021}, and \citet{munroLatentDirichletAnalysis2022} with survey data, and \citet{nimczikJobMobilityNetworks2017} and \citet{olivellaDynamicStochasticBlockmodel2021} with network data.

Here $\mathbf{x}_i = (x_{i,j})_{j=1}^V$ is a $V$-dimensional vector of feature counts, with $x_{i,j}$ counting the number of times feature $j$ appears in observation $i$. For instance, in text applications, $x_{i,j}$ counts the number of times word $j$ appears in document $i$.
The vector $\mathbf{x}_i$ follows a Multinomial distribution with a factor structure. There are $K < V$ distributions $\boldsymbol{\beta}_1,\ldots,\boldsymbol{\beta}_K \in \Delta^{V-1}$, the $(V-1)$-dimensional simplex. Each $\boldsymbol{\beta}_k$ represents a common factor (or ``topic'').  Each observation $i$ is characterized by a latent vector $\boldsymbol{w}_i \in \Delta^{K-1}$. The elements $w_{i,k}$ of $\bs w_i$ represent the weight of $\boldsymbol{\beta}_k$ in generating $\mathbf{x}_i$.  The count probabilities for observation $i$ are $\mathbf{p}_i = \sum_{k=1}^K \boldsymbol{\beta}_{k} w_{i,k} = \mathbf{B}^T \boldsymbol{w}_i$, where $\mathbf{B}^T = [\boldsymbol{\beta}_1,\ldots,\boldsymbol{\beta}_K]$.
Combining these elements yields
\begin{equation}
  \label{eq:obs.1}
  \mf{x}_i|(C_i, \boldsymbol{w}_i)  \sim \mbox{Multinomial}(C_i, \mathbf{B}^T \boldsymbol{w}_i) ,
\end{equation}
where $C_i = \sum_{v=1}^V x_{i,v}$ is the total feature count for  observation $i$. Finally, the sub-vector $\bs \theta_i$ of $\bs w_i$ collects the topic weights for inclusion in the regression.

In the two-step approach, $\mf B$ and $(\bs \theta_i)_{i=1}^n$ are estimated using Latent Dirichlet Allocation \citep[LDA]{bleiLatentDirichletAllocation2003} or more recent methods (e.g., \cite{bingOptimalEstimationSparse2020}, \cite{wuSparseTopicModeling2023}, \cite{keUsingSVDTopic2022}), then $Y_i$ is regressed on $\hat{\bs \theta}_i$ and $\mf q_i$. The source of measurement error is sampling error in the estimated topic weights $(\hat{\bs{\theta}_i})_{i=1}^n$, which is proportional to $C_i^{-1}$. The quantity  $\E[C_i^{-1}]$ controls the overall rate of measurement error.

\paragraph{Application 3: AI/ML-Generated Indices.}

A third use case involves constructing indices by classification and aggregation. For instance, \cite{bakerMeasuringEconomicPolicy2016} constructs policy uncertainty indices by first classifying news articles based on whether they relate to policy uncertainty, and then aggregating the results over time into monthly or quarterly indices.  See also \cite{caldaraMeasuringGeopoliticalRisk2022} and \cite{gorodnichenkoVoiceMonetaryPolicy2023}, among others.  

Extending the simple example of Section~\ref{sec:warmup}, let $C_i$ denote the 
number of articles to be classified in month $i$, and let $N_i$ represent the number of articles classified as pertaining to policy uncertainty. The quantity $\hat \theta_i = N_i/C_i$ is a natural measure of the true latent uncertainty $\theta_i \in [0,1]$, where $\theta_i = 0$ indicates no uncertainty and $\theta_i = 1$ indicates maximal uncertainty. As each individual article classification may be subject to some error, there are now two sources of measurement error: misclassification error and sampling uncertainty (the set of articles are a sample from the broader corpus of news). Both can be accounted for using a topic model. Suppose the misclassification rates are constant across observations. Then $\mf n_i = (N_i, C_i - N_i)^T$ follows the distribution in (\ref{eq:obs.1}), with 
\[
 \mathbf B^T = \left[ \begin{array}{cc}
 \beta_1 & \beta_0 \\ (1-\beta_1) & (1-\beta_0) \end{array} \right], \quad \quad \bs w_i = \left[ \begin{array}{c} \theta_i \\ 1-\theta_i \end{array} \right],
\]
where $\beta_1$ is the probability that an article relating to uncertainty is correctly classified, and $\beta_0$ is the probability that an article not relating to uncertainty is misclassified. 

\bigskip

As in the simple example in Section \ref{sec:warmup}, we consider two strategies to correct bias and restore valid inference.  The first strategy involves bias-corrected estimators and confidence sets, which we formally develop and provide theoretical justification for in Section~\ref{subsection:bias_correction}.  The second is joint estimation, the implementation details of which are discussed in Section \ref{sec:one-step}.  In the remainder of the section, we illustrate the problems with two-step estimation\textemdash and how the proposed solutions fix them\textemdash across three applications drawn from diverse empirical literatures.  Each application corresponds to one of the examples above.

\subsection{Remote Work and Wage Inequality}
\label{sec:applications.1}

Since the COVID-19 pandemic, the incidence of remote work has risen remarkably \citep{barreroWhyWorkingHome2021,aksoyWorkingHomeWorld2022}.  But much of the evidence on remote work comes from surveys which are limited in sample size.  This makes tracking the evolution of remote work across narrow geographies and firms infeasible.  \citet{hansenRemoteWorkJobs2023} instead develops a dataset (available at \url{https://wfhmap.com/}) that measures remote work from a vast corpus of online job postings provided by Lightcast.  Each job posting contains 
metadata on occupation, firm, location, job title, and the posted wage, and a textual description of the job. 
\citet{hansenRemoteWorkJobs2023} uses the posting text to impute a binary label indicating whether or not the posting offers remote work. 
The authors collect human labels from Amazon Mechanical Turk for a sample of postings and use them to fine-tune DistilBERT \citep{sanhDistilBERTDistilledVersion2020}, a Large Language Model. 
The classifier achieves high overall accuracy in a held-out test set (98\%-99\%). 
The authors then impute a remote work label for every posting in the corpus.

The data can be used to answer numerous questions about the causes and consequences of remote work.  One important question is the degree of wage inequality in remote work arrangements, which the data has previously been used to study \citep{lambertResearchGrowingInequality2023}.  To explore this issue, we use regression \eqref{eq:reg} taking $Y_i$ as the log of the advertised wage for posting $i$.\footnote{We form the advertised wage by averaging the \textit{salary\_from} and \textit{salary\_to} fields in the Lightcast data.}  Here $\theta_i \in \{0,1\}$ is a latent indicator of whether posting $i$ offers remote work.  The two-step approach replaces the latent $\theta_i$ with the predicted label $\hat{\theta}_i \in \{0,1\}$ from the classifier. 
 Given the remote work classifier achieves high test-set accuracy, one may believe that such classification error is unlikely to affect inference meaningfully in this application.  However, what matters is not measurement error \textit{per se} but its magnitude relative to sampling error. 
In regressions with a large number of observations, sampling error is potentially small enough that even small classification errors can distort inference.

\begin{table}[t]
    \centering
    \captionsetup{justification=centering}
    \caption{Estimates of Impact of Remote Work on Posted Wages (San Diego, NAICS 72)}
    \label{tab:fe_estimates}
    {\small
    \begin{tabular}{lccc}
        \\[-8pt]
\hline \hline \\[-10pt]
          Controls                                       & \multicolumn{3}{c}{{Estimation Strategy}}                                             \\[2pt]
        \cline{2-4} \\[-8pt]
         & Two-Step                                & Bias Correction & Joint \\ \\[-10pt]
        None
                                                 & \makecell{0.648                                                                             \\ \small [0.600, 0.697]}
                                                 & \makecell{1.052                                                                             \\ \small [0.778, 1.327]}
                                                 & \makecell{0.563                                                                             \\ \small [0.532, 0.595]} \\ \\[-10pt]
        SOC2 effects
                                                 & \makecell{0.364                                                                             \\ \small [0.322, 0.406]}
                                                 & \makecell{0.641                                                                             \\ \small [0.446, 0.836]}
                                                 & \makecell{0.448                                                                             \\ \small [0.415, 0.480]} \\[12pt] \hline \hline
    \end{tabular}
}
    \vspace{0.5em}

    \parbox{0.95\textwidth}{\footnotesize
        \emph{Note}: Point estimates and 95\% confidence intervals for the slope coefficient in a regression of posted log salary on a binary remote work indicator, with and without occupation fixed effects at the SOC2 level. The sample consists of 16,315 job postings for 2022 and 2023 with ``San Diego, CA'' recorded as the city and ``72'' recorded as the NAICS2 industry code of the advertising firm.}
\end{table}

To illustrate the impact of classification error, we begin by organizing the data into NAICS2 industry $\times$ city cells for all job postings in 2022 and 2023 in the United States.  These years coincide with peak post-pandemic incidence of remote work. 
 Our theory in Section \ref{sec:model} shows in which situations classification error is likely to produce the largest distortions, 
 and we use it to select for case study NAICS 72 industry (Accommodation and Food Services) postings in San Diego.  After removing internships and observations with missing salary data, there are 16,315 observations.  The first column of Table \ref{tab:fe_estimates} contains two-step estimates of $\gamma$.  In the initial regression model, we include no controls and find an estimated 65 log point effect on wages of remote work.  Since this may in part reflect occupation composition, we next include SOC2 fixed effects in $\mathbf{q}_i$, which reduces the estimated $\gamma$ to 36 log points.

The positive association between remote work and posted wages is consistent with the descriptive evidence in \citet{lambertResearchGrowingInequality2023}.  Nevertheless, its strength may be distorted in the two-step approach.  A key quantity governing the bias is the expected false positive rate (FPR), which can be estimated by reading a random sample of postings and counting the number that were mis-classified as positive.  In this example, we took a random sample of 1000 postings, read the 26 that were classified as positive, and found nine were classified incorrectly, so the expected FPR is 0.009.  Despite the small FPR, bias correction produces much larger effects: with no controls, the effect increases by 62\% (to 105 log points) and, with controls, by 76\% (to 64 log points).\footnote{Bias-corrected confidence intervals are wider to account for the uncertainty in the estimated FPR.}  In many economic applications of AI and ML methods, classification accuracy is well below its level here.  Finding large effects even in this setting shows that classification error can be of first-order importance.

In addition, we formulate a joint model over $\theta_i$, $\hat{\theta_i}$, and $Y_i$.  While this produces a smaller estimated $\gamma$ without controls, in the preferred specification with occupation effects we continue to find a larger effect that falls outside the two-step confidence intervals.  One reason why joint estimation and bias correction may not coincide is that human labels themselves may not represent the ground truth even in the best-designed audit.  For example, the 2025 Economic Report of the President \citep{councilofeconomicadvisersEconomicReportPresident2025} discusses strengths and weaknesses of the \citet{hansenRemoteWorkJobs2023} database and points out that not every posting will explicitly mention remote work even when the firm in practice offers it.\footnote{The relevant passage is \begin{quote} Job openings data can also shed light on whether remote work is here to stay. While the information can be murky—given that not every hybrid or remote job advertises itself as such, and the tendency to mention remote work in job postings may change over time—examining recent trends is useful \citep{councilofeconomicadvisersEconomicReportPresident2025}. \end{quote}} 

\subsection{CEO Time Use and Firm Performance}
\label{sec:applications.2}

The role of CEOs in shaping firm performance is important for many academic and policy debates, but until recently little data existed on what CEOs do with their time.  To fill this evidence gap, \citet{bandieraCEOBehaviorFirm2020} collects and analyzes survey data on CEO time use in a sample of manufacturing firms.  The paper describes salient differences in executive time use and relates those differences to firm and CEO characteristics and firm outcomes.

The survey consists of five questions with categorical responses: (Q1) the type of activity (meeting, public event, etc.); (Q2) duration of activity (15m, 30m, etc.); (Q3) whether the activity is planned or unplanned; (Q4) the number of participants in the activity; (Q5) the functions of the participants in the activity (HR, finance, suppliers, etc.).  Survey responses are recorded for each 15-minute interval of a given week, e.g. Monday 8am-8:15am, Monday 8:15am-8:30am, and so forth.  The sample consists of 916 CEOs. 

The data are modeled as a topic model, with $V = 654$ answer combinations observed across the five questions, and  $x_{i,j}$ denoting the number of times combination $j$ appears in the diary of CEO $i$.
The authors reduce the dimensionality of the feature space using LDA with $K=2$ topics.  The estimated $\hat{\boldsymbol{\beta}}_1$ places relatively higher mass on features associated with ``management'' like visiting production sites and one-on-one meetings with employees or suppliers, while $\hat{\boldsymbol{\beta}}_2$ places relatively higher mass on features associated with ``leadership'' like communicating with other C-suite executives and holding large, multi-function meetings. The leadership weight $\hat{\theta}_i$\textemdash which the authors call a ``behavior index''\textemdash is thus a measure of the tendency of CEO $i$ to engage in leadership activities.  \citet{bandieraCEOBehaviorFirm2020} regress log sales on $\hat{\theta}_i$ and firm controls, finding a positive association between leadership and firm performance.  The paper's use of the two-step strategy may, however, lead to invalid inference.

To explore this possibility, we first replicate the authors' two-step strategy, estimating $\theta_i$ by LDA, then regressing the log sales of each CEO's firm on $\hat \theta_i$ and controls $\mf q_i$ including log employment, country fixed effects, and survey-wave fixed effects. 
Results reported in Table~\ref{tab:regression_time_use} show that, according to the two-step approach, moving from a CEO who only spends time in management to one who only spends time in leadership is associated with a 40 log point increase in sales.
What's more, neither bias correction nor joint estimation shifts the estimated effect size by a large amount.  This shows that bias is not an inevitable part of using AI/ML-generated variables, but rather depends on the empirical setting.

\begin{table}[t]
    \centering
    \captionsetup{justification=centering}
    \caption{Estimates of Impact of CEO Behavior on Firm Performance}
    \label{tab:regression_time_use}
    {\small
    \begin{tabular}{lccc}
        \\[-8pt]
\hline \hline \\[-10pt]
          Sample                                       & \multicolumn{3}{c}{{Estimation Strategy}}                                             \\[2pt]
        \cline{2-4} \\[-8pt]
         & Two-Step                                & Bias Correction & Joint \\ \\[-10pt]
        Full
                        & \makecell{0.405                                                                              \\ \small [0.224, 0.585]}
                        & \makecell{0.474                                                                              \\ \small [0.294, 0.655]}
                        & \makecell{0.402                                                                              \\ \small [0.240, 0.603]} \\
        \addlinespace
        10\% Subsample
                        & \makecell{0.227                                                                              \\ \small [-0.038, 0.492]}
                        & \makecell{1.054                                                                              \\ \small [0.789, 1.319]}
                        & \makecell{0.439                                                                              \\ \small [0.153, 0.711]} \\[12pt] \hline \hline
                            \end{tabular}}

    \vspace{0.5em}

    \parbox{0.95\textwidth}{\footnotesize
        \emph{Note}: The first row presents point estimates and 95\% confidence intervals for a regression of log sales on the CEO behavior index from the replication data of \cite{bandieraCEOBehaviorFirm2020}.  The second row presents estimates and confidence intervals using a 10\% subsample of time use responses per CEO.
    }

\end{table}

One reason why measurement error is less important here lies in the $\kappa$ expression \eqref{eq:kappa.warm-up} for the simple model.  While not directly applicable here as the topic model structure is more complex, it still allows one to qualitatively compare sampling error (reflected by $\sqrt{n}$) and measurement error (reflected by $\E \left[ C_i^{-1} \right]$).  The empirical analogue of this expression is $0.44$.  In other words, there are a relatively large number of survey responses per CEO 
compared to the number of surveyed CEOs, meaning the measurement error in $\hat \theta_i$ is small relative to sampling error.  To increase measurement error, we take a random 10\% subsample of survey responses for each CEO, which can be thought of as observing half a day of behavior rather than a five-day workweek.  The two-step approach now estimates a smaller insignificant effect, while both bias correction and joint estimation continue to estimate a significant effect.

\subsection{Central Bank Communication}
\label{sec:applications.3}

Public communication is an increasingly important tool for central banks \citep{blinderCentralBankCommunication2008,blinderCrystalBallDarkly2018} and disentangling its effects is an ongoing research challenge.  Communication events are often accompanied by rich, unstructured data like the content of speeches and press conferences, which require some quantification prior to econometric analysis.  To illustrate this, we consider market reactions to FOMC policy announcements.  Using high-frequency yield curve movements around FOMC announcements, \citet{gurkaynakActionsSpeakLouder2005} extracts separate ``target'' and ``path'' factors which account for large shares of observed variation in short- and long-run yields, respectively.  One view is that policy rate news drives short rates (i.e., the target factor) while 
communication in the form of 
written statements drives long rates (i.e., the path factor) by shifting expectations of future actions.

To test this, we estimate \eqref{eq:reg} with $Y_i$ as the path factor for FOMC meeting $i$ (as updated in \citealp{acostaConstructingHighfrequencyMonetary2024}), $\theta_i$ as the hawkish sentiment of the FOMC written statement, and $\mathbf{q}_i$ as a constant and the shadow short rate \citep{wuMeasuringMacroeconomicImpact2016}, which accounts for policy variation during the zero-lower-bound period.  The sample period is Feb 1995 through June 2023, during which 200 FOMC meetings take place.\footnote{The sample is determined by the availability of the path factor and shadow short rate data.}

\begin{figure}[p]
    \centering
    \subfloat[Unscaled Series]{\label{fig:sentiment_1}\includegraphics[scale=0.5]{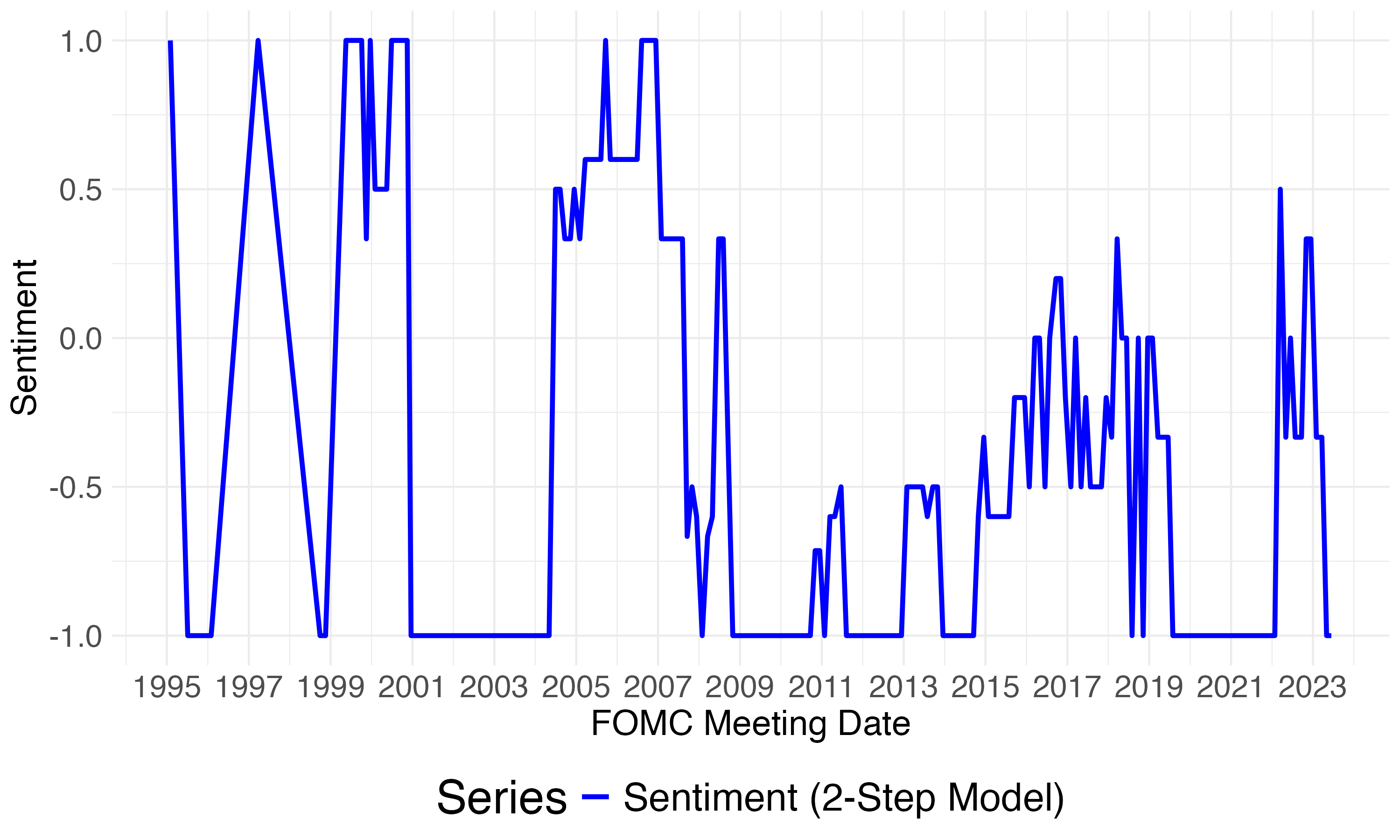}}

    \subfloat[Scaled Series]{\label{fig:sentiment_2}\includegraphics[scale=0.5]{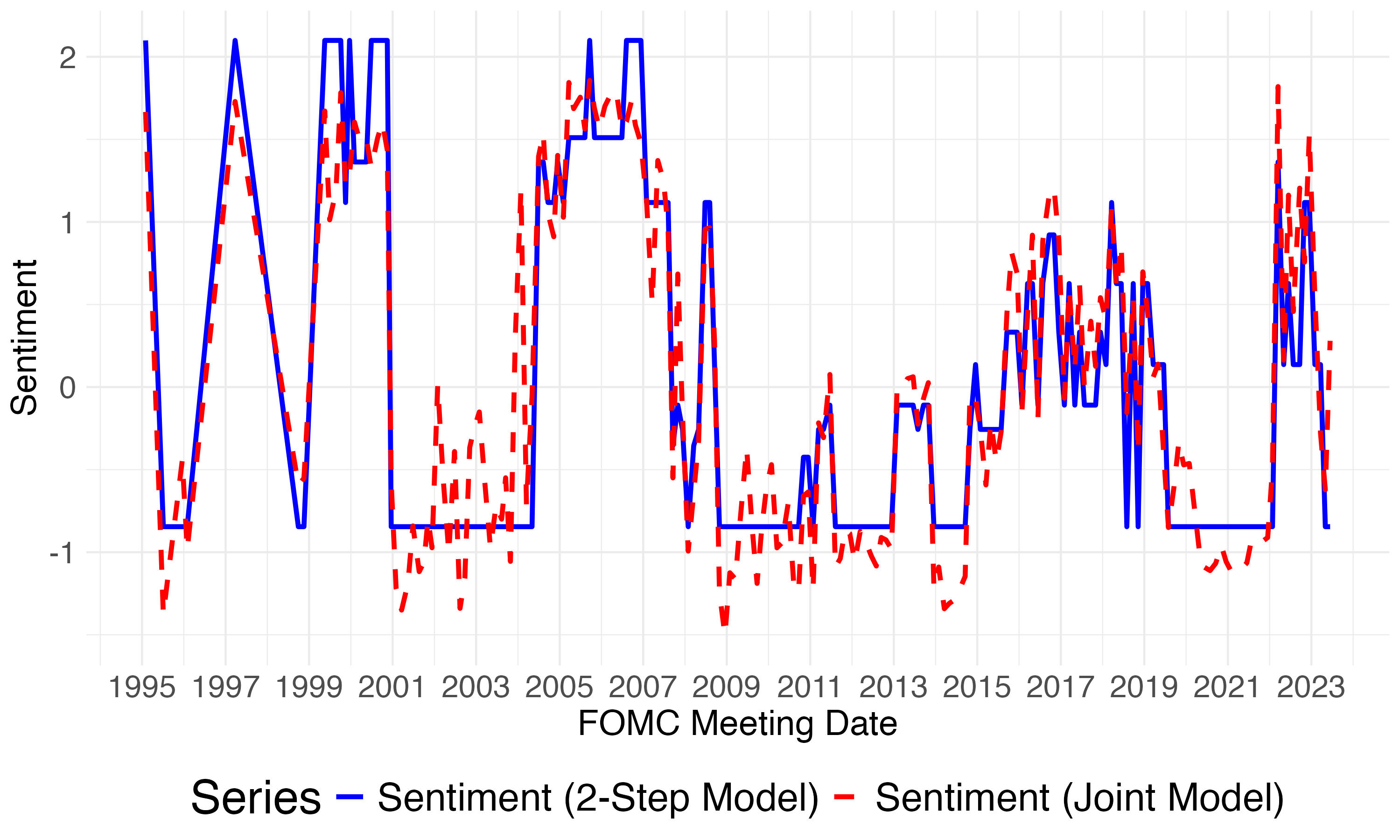}}
    
    \caption{Time Series of FOMC Statement Sentiment \label{fig:sentiment_series}} 
    
        \vspace{1em}
    
    \parbox{0.95\linewidth}{\footnotesize
            \emph{Note}: The left panel plots two-step sentiment $\hat\theta_i$.  The right panel plots standardized series for $\hat\theta_i$ and estimated sentiment from the joint model.  The sample period covers 200 FOMC meetings from Feb 1995 through June 2023.
    }
    
    \vspace{0.5em}
    
\end{figure}

Of course, hawkish sentiment $\theta_i$ is latent.  To estimate it, we replicate the approach of \citet{gorodnichenkoVoiceMonetaryPolicy2023} which uses classification and aggregation to build a sentiment index.\footnote{\citet{gorodnichenkoVoiceMonetaryPolicy2023} studies the market impact of the tone of voice of FOMC Chairs during post-meeting press conferences, and uses the hawkish sentiment of the written statement as a control.}  \cite{gorodnichenkoVoiceMonetaryPolicy2023} provides 1,243 sentences with human labels from FOMC statements from 1997 through 2010.  Each sentence assigned to one of three mutually exclusive categories: \textit{hawkish} (243 sentences), \textit{dovish} (511), or \textit{neutral} (489).  We split the sentences into 1,118 training observations and 125 test observations and fine-tune BERT \citep{devlinBERTPretrainingDeep2019}\textemdash a well-known Large Language Model\textemdash on the training data for label prediction.\footnote{More precisely, we fine tune \texttt{bert-base-uncased} via the Hugging Face library in Python.}  We assign each sentence in the test data to its most likely class based on the trained model and obtain test-set accuracy of 0.85, slightly higher than in the original paper (0.81).  We then predict a label for all paragraphs from FOMC statements in our sample.\footnote{Here we follow \cite{gorodnichenkoVoiceMonetaryPolicy2023} and classify individual \textit{paragraphs} in the full sample although BERT is trained on human-labeled \textit{sentences}.  The optimal level of document granularity for classification prior to aggregation is an open question.  \cite{gorodnichenkoVoiceMonetaryPolicy2023} classifies individual paragraphs of FOMC press conferences in addition to FOMC statements, but press conferences only began in 2011 where statements have been released since the mid-1990s.}  The overall sentiment measure for FOMC meeting $i$ is $\hat\theta_i = N_i/C_i$ where $C_i$ is the number of sentences classified as hawkish or dovish, and $N_i$ is the number classified as hawkish.  Figure~\ref{fig:sentiment_1} shows the evolution of this measure over the sample period.

The first column of Table \ref{tab:regression_sentiment} contains the results of two-step estimation.  There is some evidence consistent with communication rather than policy driving the path factor.  The coefficient on $\hat\theta_i$ is positive and significant while that on the policy rate is essentially zero.  At the same, the effect is somewhat weak.  The standard deviation of the path factor in the sample is 0.103, so the estimated $\hat\gamma$ implies a one-third standard deviation effect of $\theta_i$ moving from $0$ to $1$. The regression also has a low $R^2$.

\begin{table}[t]
    \centering
    \captionsetup{justification=centering}
    \caption{Impact of FOMC Statement Sentiment on Longer-Term Yields}
    \label{tab:regression_sentiment}
{\small
    \begin{tabular}{lcc}
     \\[-8pt]
\hline \hline \\[-10pt]
                                                 & \multicolumn{2}{c}{{Estimation Strategy}}                                             \\[2pt]
        \cline{2-3} \\[-8pt]
         & Two-Step                                &  Joint \\ \\[-10pt]
      
        Sentiment ($\theta_i$) & 0.038                                            & 0.114                   \\
                               & \small{[0.005, 0.071]}                           & \small{[0.027, 0.198]}  \\
        Policy Rate ($q_i$)    & -0.004                                           & -0.003                  \\
                               & \small{[-0.013, 0.004]}                          & \small{[-0.011, 0.004]} \\[2pt]
        \hline \\[-10pt]
        $\beta_0$              &                                                  & 0.009                   \\
                               &                                                  & \small{[0.001, 0.026]}  \\
        $\beta_1$              &                                                  & 0.676                   \\
                               &                                                  & \small{[0.585, 0.768]}  \\[2pt]
        \hline \\[-10pt]
        $R^2$                  & 0.0425                                           & 0.1429                  \\[2pt] \hline \hline
    \end{tabular}
}
    \vspace{0.5em}

    \parbox{0.95\textwidth}{\footnotesize
        \emph{Note}: The first column reports point estimates and 95\% confidence intervals using the two-step approach for estimating the relationship between hawkish sentiment in FOMC statements and the path factor of \citet{gurkaynakActionsSpeakLouder2005} updated by \citet{acostaConstructingHighfrequencyMonetary2024}.  The second column reports results from joint estimation.
    }
\end{table}

We do not develop a bias correction for index construction.\footnote{Our bias corrections are valid provided  measurement error and sampling error are comparable. The relatively short length of FOMC announcements and large misclassification errors we report below suggests measurement error may be sizable in this case. Joint estimation remains valid in such settings.} We do however implement joint estimation.  The second column of Table~\ref{tab:regression_sentiment} shows the results.  The effect size on sentiment nearly triples, as does the $R^2$ value. Confidence intervals are a little wider using this method, as it accounts for the fact that some FOMC announcements are relatively short, making their corresponding $\theta_i$ values difficult to infer.  Figure~\ref{fig:sentiment_2} shows that estimated sentiment from the two approaches co-move strongly, but estimated sentiment from the joint model is smoother.\footnote{While the joint approach is based on maximizing the integrated likelihood, our inference algorithm is based on sampling from the posterior distribution over $\theta_i$ and regression parameters.  Sentiment from the joint model is the average value of $\theta_i$ draws.}  In short, we find results consistent with measurement error in the two-step sentiment estimate which weakens the estimated market reaction to FOMC statements.

These results also have implications for the large literature that regresses outcomes on sentiment.  \citet{shapiroMeasuringNewsSentiment2022} compares a variety of classifiers for predicting human labels for news article sentiment, including BERT, and reports substantial error even for the highest-performing methods (see Tables 1-3 in the paper).  Approaches like ours can restore valid inference in such settings.

Table \ref{tab:regression_sentiment} also shows estimates of $\beta_0$ and $\beta_1$, and shows the latter is well below that implied by the classifier.  Economic theory predicts that the difference between realized and expected hawkishness is the relevant object for generating market reactions, whereas the human labels only classify realized hawkishness.  While a full analysis is outside the scope of the paper, joint estimation in principle allows $\beta_0$ and $\beta_1$ to adjust to account for this broader notion of misclassification.


\section{Why Two-Step Inference is Biased} \label{sec:model}

The empirical applications show that measurement error in AI/ML-generated variables can bias inference in a variety of settings.  Each application has a particular algorithm for estimating $\hat{\bs\theta}_i$ that differs in important respects: supervised vs unsupervised learning, whether $\hat{\bs\theta}_i$ is the aggregation of multiple or single classified documents, and so forth.  Further varieties are also common in the literature.  To better understand the biases arising from measurement error and how to fix them, we begin by abstracting away from algorithmic-specific details of how $\hat{\bs\theta}_i$ is estimated. This allows us to develop a general, broadly applicable framework, which we then tailor to some specific use cases.

\subsection{Theory for the Two-Step Strategy}

We first introduce some notation.
Let
\[
  \bs \psi = \left[ \begin{array}{c} \bs \gamma \\ \bs \alpha \end{array} \right] , \quad \quad
  \bs \xi_i = \left[ \begin{array}{c} \bs \theta_i \\ \mf q_i \end{array} \right] , \quad \quad
  \hat{\bs \xi}_i = \left[ \begin{array}{c} \hat{\bs \theta}_i \\ \mf q_i \end{array} \right] .
\]
The OLS estimator of $\boldsymbol{\psi}$  in the two-step strategy is given by
\begin{equation}
  \label{eq:ols}
  \hat{\boldsymbol{\psi}} = \left( \frac 1n \sum_{i=1}^n \hat{\bs{\xi}}_i^{\phantom{T}}\!\!\!\;\hat{\bs{\xi}}_i^T \right)^{-1} \left( \frac 1n \sum_{i=1}^n \hat{\bs{\xi}}_i Y_i \right).
\end{equation}
The OLS estimators $\hat{\bs\gamma}$ and $\hat{\bs\alpha}$ of $\bs\gamma$ and $\bs \alpha$ are the upper and lower blocks of $\hat{\bs \psi}$.

We use asymptotic theory to derive tractable approximations to the finite-sample distribution of $\hat{\bs \psi}$. Both measurement error in $\hat{\bs \theta}_i$ and downstream sampling error determine the finite-sample distribution. To ensure that asymptotics deliver a useful approximation, we adopt a framework in which both sources of error remain present as the sample size  $n$ becomes large. We do so be allowing the precision of $\hat{\bs\theta}_i$ to increase with $n$ at an appropriate rate. Formally, we consider a sequence of populations in which the distribution of $(Y_i,\boldsymbol{\theta}_i,\mathbf{q}_i)$ is held fixed and the conditional distribution of $\hat{\boldsymbol{\theta}}_i$ given $(Y_i,\boldsymbol{\theta}_i,\mathbf{q}_i)$ varies with $n$, so that
\begin{equation}\label{eq:kappa}
  \frac{1}{\sqrt n} \sum_{i=1}^n \hat{\bs{\theta}}_i(\hat{\boldsymbol{\theta}}_i - \bs{\theta}_i)^T \to_p \kappa \, \bs \Omega
\end{equation}
as $n \to \infty$, where $\bs \Omega$ is a finite non-random matrix and $\kappa$ is a non-negative constant. The constant $\kappa$  represents the \emph{relative} magnitudes of measurement error and sampling error. The matrix $\bs \Omega$ is related to the variance of measurement error in settings where the error in $\hat{\bs \theta}_i$ is ``classical''. However, condition (\ref{eq:kappa}) also allows for ``non-classical'' measurement error, which is needed to accommodate a number of important use cases\textemdash including imputed categorical labels \citep{aignerRegressionBinaryIndependent1973}.
 We show below that (\ref{eq:kappa}) holds for AI/ML-generated labels and topic models and derive expressions for $\kappa$ and $\bs \Omega$, illustrating how $\kappa$ links measurement error in $\hat{\bs \theta}_i$ to the sample size.

We view this asymptotic framework as appropriate for approximating modern use cases where high-performance algorithms are deployed on large data sets.\footnote{In such scenarios, conventional measurement error frameworks in which the size of the measurement error remains fixed seem inappropriate, since measurement error bias eventually dominates and estimators are inconsistent. For completeness, we provide a set of results for this case in Appendix~\ref{sec:appendix_examples}.} While it is not meant to be taken literally, a heuristic interpretation is that it captures the prevailing trend whereby increasingly large datasets are analyzed by increasingly accurate algorithms.

Having introduced the asymptotic framework, we now introduce the assumptions. We first present a general ``high-level'' set of assumptions, then verify them below within the context of the running examples. In what follows, notions of convergence in probability and distribution should be understood as holding along this sequence of populations satisfying condition~(\ref{eq:kappa}). Let $\hat \varepsilon_i = Y_i - \hat{\bs{\psi}}^T \hat{\bs{\xi}}_i$. Throughout, we let $\mf 0$ denote a matrix or vector of zeros whose dimension is determined by the context.

\begin{assumption}\label{assumption:drifting.general}
  \begin{enumerate}[label=(\roman*),nosep]
    \item \label{assumption:drifting.moments}
          $\E\Big[ \|\boldsymbol \xi_i\|^2 \Big] < \infty$, $\E\Big[ \|\varepsilon_i \boldsymbol \xi_i\|^2 \Big] < \infty$, and $\E\Big[\bs{\xi}_i^{\phantom{T}}\!\!\!\;\bs{\xi}_i^T\Big]$ has full rank.

    \item \label{assumption:drifting.lln}
          $\frac 1n \sum_{i=1}^n \hat{\bs{\xi}}_i^{\phantom{T}}\!\!\!\;\hat{\bs{\xi}}_i^T \to_p \E\Big[{\bs{\xi}}_i^{\phantom{T}}\!\!\!\;{\bs{\xi}}_i^T\Big]$,
          $\frac{1}{\sqrt n} \sum_{i=1}^n \hat{\bs{\theta}}_i(\hat{\boldsymbol{\theta}}_i - \bs{\theta}_i)^T \to_p \kappa \, \mathbf \Omega$, and
          $\frac{1}{\sqrt n} \sum_{i=1}^n (\hat{\boldsymbol{\theta}}_i^{\phantom{T}} - {\bs{\theta}}_i^{\phantom{T}}\!\!\!\;)\mf{q}_i^T \to_p \mathbf 0$ as $n \to \infty$.

    \item \label{assumption:drifting.clt}
          $\frac{1}{\sqrt n} \sum_{i=1}^n \hat{\bs{\xi}}_i \varepsilon_i \to_d N(\mathbf 0, \E[\varepsilon_i^2 {\bs{\xi}}_i^{\phantom{T}}\!\!\!\;{\bs{\xi}}_i^T] )$ and
          $\frac 1n \sum_{i=1}^n \hat \varepsilon_i^2 \hat{\bs{\xi}}_i^{\phantom{T}}\!\!\!\;\hat{\bs{\xi}}_i^T \to_p \E[\varepsilon_i^2 {\bs{\xi}}_i^{\phantom{T}}\!\!\!\;{\bs{\xi}}_i^T]$ as $n \to \infty$.
  \end{enumerate}
\end{assumption}

Assumption~\ref{assumption:drifting.general}\ref{assumption:drifting.moments} is standard. Assumption~\ref{assumption:drifting.general}\ref{assumption:drifting.lln} can be verified using appropriate laws of large numbers. The first part of Assumption~\ref{assumption:drifting.general}\ref{assumption:drifting.clt} imposes a standard CLT condition while the second part is only used to establish consistency of standard errors. We focus on the case where the data are independent and identically distributed to simplify exposition, though this can be relaxed and the results can easily be extended to general types of dependence.  

We now present our main result for this section, which shows that $\hat{\boldsymbol{\psi}}$ is consistent, derives its asymptotic distribution, and establishes consistency of standard errors. Let
\[
  \mathbf V = \E\Big[\bs{\xi}_i^{\phantom{T}}\!\!\!\;\bs{\xi}_i^T\Big]^{-1} \E\Big[\varepsilon_i^2 \bs{\xi}_i^{\phantom{T}}\!\!\!\;\bs{\xi}_i^T\Big] \E\Big[\bs{\xi}_i^{\phantom{T}}\!\!\!\;\bs{\xi}_i^T\Big]^{-1}
\]
denote the asymptotic variance of the OLS estimator in the infeasible regression of $Y_i$ on the true latent $\bs \theta_i$ and $\mathbf q_i$. Also let
\begin{equation} \label{eq:vhat}
  \hat{\mathbf V} = \left( \frac 1n \sum_{i=1}^n \hat{\bs{\xi}}_i^{\phantom{T}}\!\!\!\;\hat{\bs{\xi}}_i^T \right)^{-1}
  \left( \frac 1n \sum_{i=1}^n \hat \varepsilon_i^2 \hat{\bs{\xi}}_i^{\phantom{T}}\!\!\!\;\hat{\bs{\xi}}_i^T \right)
  \left( \frac 1n \sum_{i=1}^n \hat{\bs{\xi}}_i^{\phantom{T}}\!\!\!\;\hat{\bs{\xi}}_i^T \right)^{-1}
\end{equation}
denote the covariance matrix estimator for the regression of $Y_i$ on $\hat{\bs \theta}_i$ and $\mathbf q_i$.

\begin{theorem} \label{theorem:drifting.general}
  Suppose that Assumption~\ref{assumption:drifting.general} holds. Then as $n \to \infty$:
  \begin{enumerate}
    \item The OLS estimator $\hat{\bs \psi}$ from regressing $Y_i$ on $\hat{\bs \theta}_i$ and $\mathbf q_i$ is consistent and asymptotically normally distributed, but with a centering that may differ from zero:
          \begin{equation}
            \label{eq:two-step.drifting.result.1}
            \sqrt n ( \hat{\bs{\psi}} - \bs{\psi} ) \\[4pt]
            \to_d
            N \left( -\kappa \, \E\Big[\bs{\xi}_i^{\phantom{T}}\!\!\!\;\bs{\xi}_i^T\Big]^{-1} \left[ \begin{array}{cc}
                \bs{\Omega} & \mathbf 0 \\
                \mathbf 0   & \mathbf 0\end{array} \right] \bs \psi , \,\mathbf V \right);
          \end{equation}
    \item Two-step standard errors are consistent:
          \begin{equation} \label{eq:two-step.drifting.result.2}
            \hat{\mathbf V} \to_p \mathbf V.
          \end{equation}
  \end{enumerate}
\end{theorem}

Theorem~\ref{theorem:drifting.general} shows that two-step inference is  \emph{invalid} when $\kappa > 0$. In this case, standard errors are consistent but the asymptotic distribution is centered away from the origin due to measurement error bias. As a result, confidence intervals based on the usual two-step strategy have the correct width but incorrect centering, leading to coverage rates below nominal coverage.\footnote{We omit discussion of the case $\kappa = +\infty$ where measurement error dominates sampling error. In that case, the coverage rates of standard OLS confidence intervals approach zero as $n$ becomes large.}
The bias---and thus the degree of under-coverage---increases in $\kappa$. 
Simulations reported in Section~\ref{sec:simulations} show that coverage distortions can be severe even for small values of $\kappa$. Moreover, because measurement error can be nonclassical, it can be difficult to know even the sign of the bias: there may be attenuation or amplification.
These critiques apply to inference on $\bs \alpha$ as well as $\bs \gamma$, and are therefore relevant for researchers using unstructured data to create control variables.

On the other hand,  two-step inference is \emph{valid} when $\kappa = 0$. In this case, measurement error is of smaller order than sampling error and can effectively be ignored.  Here $\hat{\bs \psi}$ has the same $N(\mathbf 0, \mathbf V)$ asymptotic distribution as the (infeasible) OLS estimator obtained by regressing $Y_i$ on the true latent $\boldsymbol{\theta}_i$ and standard errors computed using $\hat{\boldsymbol{\theta}}_i$ are consistent.

\begin{remark} \normalfont
  These implications contrast with a generated regressors problem, where the asymptotic variance is inflated but there is no location shift. In the classical generated regressor problem  \citep{paganEconometricIssuesAnalysis1984}, the $\hat{\boldsymbol{\theta}}_i$ depend on a common finite-dimensional parameter that is estimated in the first stage. This across-observation dependence causes the term in~(\ref{eq:kappa}) to converge to a random variable rather than a constant, leading to the variance inflation.
\end{remark}

\begin{remark} \normalfont
  Our asymptotic framework is related to an econometrics literature on ``small'' classical measurement error (e.g., \citet{chesherEffectMeasurementError1991}), in which the variance of measurement error shrinks to zero at rate $n^{-1/2}$. Recently, \citet{evdokimovSimpleEstimationSemiparametric2023} show how to bias-correct GMM estimators in this context. Their approach imposes no structure on the source of measurement error and uses instrumental variables to identify its variance. Our setting is 
  different: measurement error arises due to first-stage estimation of $\bs \theta_i$, allowing us to analytically characterize and correct bias without an instrument. 
\end{remark}

We now derive expressions for $\kappa$ and $\Omega$ in the running examples. We shall use these in the next section to perform bias corrections.

\subsubsection{Application 1: AI/ML-Generated Labels.}
\label{sec:labels.theory.main}

To mimic scenarios where high-performance classifiers are deployed at scale, we consider a sequence of DGPs where the distribution of $(Y_i, \bs \theta_i, \mf q_i)$ is fixed but the distribution of $\mf x_i|(Y_i, \bs \theta_i, \mf q_i)$ varies with $n$ so that $\mf x_i$ becomes increasingly more informative about $\bs \theta_i$. For brevity we present conditions for the case of a deterministic classifier: $\hat \theta_i = \pi(\mf x_i)$ for some function $\pi$ taking values in $\{0,1\}$. We treat $\pi$ as deterministic, conditioning on the external training data set where necessary. Appendix~\ref{sec:appendix_examples.labels} presents a more general treatment allowing stochastic classifiers and multiple categories.

\begin{assumption}\label{assumption:drifting.ml.main}
  \begin{enumerate}[label=(\roman*),nosep]

    \item \label{assumption:kappa.ml.main}
          $\sqrt n \, \E[\hat \theta_i(1-\theta_i)] \to \kappa$.

    \item \label{assumption:dgp.ml.2.main}
          $\E\left[ \|\mathbf q_i\|^4 \right] < \infty$, $\E\left[\varepsilon_i^4\right] < \infty$, and $\E\left[\bs{\xi}_i^{\phantom{T}}\!\!\!\;\bs{\xi}_i^T\right]$ has full rank.

    \item \label{assumption:q.unbiased.2.main}
          $\E[ (\hat \theta_i - \theta_i) \mf{q}_i] = \mf 0$.

    \item \label{assumption:epsilon.2.main}
          $\E[ \hat \theta_i \varepsilon_i] = \mf 0$.
  \end{enumerate}
\end{assumption}

Assumption~\ref{assumption:drifting.ml.main}\ref{assumption:kappa.ml.main} is the key drifting-sequence condition. It says that the false-positive rate $\E[\hat \theta_i(1-\theta_i)]$ goes to zero at rate $n^{-1/2}$ (or faster). This allows for the classifier to produce misclassifications which individually occur with low probability, but whose cumulative effect will be non-negligible relative to sampling error when $\kappa > 0$. Assumption~\ref{assumption:drifting.ml.main}\ref{assumption:dgp.ml.2.main} is standard. Assumption~\ref{assumption:drifting.ml.main}\ref{assumption:q.unbiased.2.main} says $\mathbf q_i$ and the prediction errors $\hat \theta_i - \theta_i$ are orthogonal.  It is straightforward to relax this condition; doing so will simply add nonzero off-diagonal terms in the bias expression in~(\ref{eq:two-step.drifting.result.1}) without altering our main point: the two-step strategy can lead to biased inference. Finally, Assumption~\ref{assumption:drifting.ml.main}\ref{assumption:epsilon.2.main} says the true regression errors $\varepsilon_i$ are uncorrelated with the AI/ML-generated prediction $\hat \theta_i$. As $\varepsilon_i$ and $\theta_i$ are assumed uncorrelated, in effect this condition simply requires that the prediction error $\hat \theta_i - \theta_i$ and $\varepsilon_i$ are uncorrelated.

\begin{theorem} \label{theorem:drifting.ml.main}
  Suppose that Assumption~\ref{assumption:drifting.ml.main} holds. Then Assumption~\ref{assumption:drifting.general} holds and the OLS estimator $\hat{\bs \psi}$ has asymptotic distribution given by (\ref{eq:two-step.drifting.result.1}) with $\kappa = \lim_{n \to \infty} \sqrt n \, \E[\hat \theta_i(1-\theta_i)]$ and $\Omega = 1$. Moreover, two-step standard errors are consistent.
\end{theorem}

\subsubsection{Application 2: Topic Models.}
\label{sec:topic.theory.main}

In many modern empirical settings, there may be a large number of observations (large $n$) and a large amount of unstructured data per observation (large $C_i$). To mimic such scenarios, we consider a sequence of populations in which the distribution of $(Y_i,\mf q_i,\bs w_i)$ is fixed and the conditional distribution of $(\mf x_i, C_i)$ given $(Y_i,\mf q_i,\bs w_i)$ varies with $n$ so that (\ref{eq:obs.1}) holds and
\begin{equation}\label{eq:kappa.topic}
  \sqrt n \,\E \left[ \frac{1}{C_i} \right] \to \kappa .
\end{equation}
Smaller values of $\kappa$ correspond to settings where there is a relatively more unstructured data per observation, and hence relatively less measurement error. Also note that since $C_i$ enters via its inverse,  if most documents are large but a few are small, then $\kappa$ may still be large.

To simplify some expressions, in what follows we implicitly assume that the document size $C_i$ is independent of $(\boldsymbol{w}_i, \mathbf{q}_i,Y_i)$. We also assume that $\mf x_i$ and $\mf q_i$ are independent conditional on $(C_i,\bs{w}_i)$, and that $\varepsilon_i$ and $(\mf x_i,C_i)$ are independent conditional on $(\boldsymbol{w}_i,\mathbf q_i)$. In effect, the latter two assumptions ensure the multinomial sampling error and regression errors are uncorrelated.  These assumptions seem very reasonable and can be relaxed: doing so simply complicates the expressions below. We also slightly strengthen (\ref{eq:reg}) to require that $\E[\varepsilon_i(\bs w_i, \mf q_i)] = \mf 0$. That is, no relevant topic weights have been omitted from the regression.

\begin{assumption}\label{assumption:drifting}

  \begin{enumerate}[label=(\roman*),nosep]

    \item \label{assumption:kappa.topic}
          $ \sqrt n \, \E \left[ \frac{1}{C_i} \right] \to \kappa$.

    \item \label{assumption:B.rank.2}
          $\mf{B}$ has full rank.

    \item \label{assumption:B.rate}
          $\sqrt n (\hat{\mf{B}} - \mf{B}) \to_p \mathbf 0$.

    \item \label{assumption:theta.rate}
          $\sqrt n \max_{1 \leq i \leq n} \| \hat{\bs{\theta}}_i - \bs S(\hat{\mf{B}} \hat{\mf{B}}^T)^{-1} \hat{\mf{B}} (\mf x_i/C_i) \| \to_p 0$.

    \item \label{assumption:dgp.q.2}
          $\E\left[ \|\mathbf q_i\|^4 \right] < \infty$, $\E\left[\varepsilon_i^4\right] < \infty$, and $\E\left[\bs{\xi}_i^{\phantom{T}}\!\!\!\;\bs{\xi}_i^T\right]$ has full rank.

    \item \label{assumption:C}
          $C_i \gtrsim (\log n)^{1+\epsilon}$ almost surely for some $\epsilon > 0$.

  \end{enumerate}

\end{assumption}

Assumption~\ref{assumption:drifting}\ref{assumption:kappa.topic} is the key drifting sequence condition.
Assumption~\ref{assumption:drifting}\ref{assumption:B.rank.2} says that none of the topics are redundant.
\cite{bingOptimalEstimationSparse2020}, \cite{wuSparseTopicModeling2023}, and \cite{keUsingSVDTopic2022} show various estimators $\hat{\mathbf{B}}$ converge at the optimal rate $(nC)^{-1/2}$ (up to log terms) where, for simplicity, all $C_i$ are of the same order $C$ (i.e., $C_i \asymp C$). Hence, their estimators all satisfy Assumption~\ref{assumption:drifting}\ref{assumption:B.rate} when $C$ grows with $n$, as we have here by (\ref{eq:kappa.topic}).
Assumption~\ref{assumption:drifting}\ref{assumption:theta.rate} imposes some structure on the $\hat{\bs \theta}_i$ to facilitate derivations. This condition is not vacuous: we have $\boldsymbol{\theta}_i = \boldsymbol S(\mathbf{B} \mathbf{B}^T)^{-1}\mathbf{B} \E[\mathbf x_i/C_i|C_i, \boldsymbol w_i]$ by display (\ref{eq:obs.1}) and Assumption~\ref{assumption:drifting}\ref{assumption:B.rank.2}. Hence, one could take $\hat{\boldsymbol{\theta}}_i = \boldsymbol{S}(\hat{\mf{B}} \hat{\mf{B}}^T)^{-1} \hat{\mf{B}} (\mathbf x_i/C_i)$, in which case Assumption~\ref{assumption:drifting}\ref{assumption:theta.rate} trivially holds.
Assumption~\ref{assumption:drifting}\ref{assumption:dgp.q.2} is standard.
Assumption~\ref{assumption:drifting}\ref{assumption:C} is made to simplify arguments establishing consistency of standard errors and can be relaxed. 
This condition trivially holds in view of (\ref{eq:kappa.topic}) when all $C_i$ are of the same order, since in that case $C_i \gtrsim n^{1/2}$.
We note that these conditions implicitly assume $\mf B$ is identified. We defer discussion of identification to Appendix~\ref{sec:appendix_examples.topic}.

\begin{theorem} \label{theorem:two-step.drifting}
  Suppose that Assumption~\ref{assumption:drifting} holds. Then Assumption~\ref{assumption:drifting.general} holds and the OLS estimator $\hat{\bs \psi}$ has asymptotic distribution given by (\ref{eq:two-step.drifting.result.1}) with $\kappa = \lim_{n \to \infty} \sqrt n \, \E[C_i^{-1}]$ and
  \[
    \mathbf \Omega = \boldsymbol S (\mf{B} \mf{B}^T)^{-1} \mf{B}\, \mr{diag}(\mf{B}^T \E[\bs{w}_i]) \mf{B}^{T} (\mf{B} \mf{B}^T)^{-1} \boldsymbol S^T - \E\left[\bs{\theta}_i^{\phantom{T}}\!\!\!\;\bs{\theta}_i^T\right] .
  \]
  Moreover, two-step standard errors are consistent.
\end{theorem}

\begin{remark} \normalfont
  A conceptually related problem involves factor-augmented regressions. In their simplest form, latent factors $\mathbf{F}_t$ are imputed from a vector of $N$ predictor variables $\mathbf{x}_t$ using PCA, then the estimated factors $\hat{\mathbf{F}}_t$ are used as covariates in a regression.
  \cite{baiConfidenceIntervalsDiffusion2006} show this approach leads to valid inference provided $\sqrt{T}/N \to 0$, where $T$ is the time-series dimension  and $N$ is the cross-sectional dimension.\footnote{See \cite{cahanFactorbasedImputationMissing2023} and references therein for the related problem of using factor models to impute missing observations.} Their $T$ is analogous to our $n$, and, within the context of topic models, their $1/N$ is analogous to our $\E[C_i^{-1}]$. Thus, their condition $\sqrt{T}/N \to 0$ is analogous to $\kappa = 0$. \cite{goncalvesBootstrappingFactoraugmentedRegression2014} show that if $\sqrt{T}/N$ converges to a constant, analogous to $\kappa > 0$, then there is a bias that shifts the location of the asymptotic distribution. At an abstract level, Theorem~\ref{theorem:drifting.general} can be seen as generalizing this finding to a broad class of scenarios. Our theory for AI/ML-generated labels and topic models is new and does not follow from these existing works.
\end{remark}


\section{How To Debias Inference} \label{sec:integrated}

We now propose two methods for performing valid inference on $\bs \gamma$ and $\bs \alpha$: (1) bias corrected estimators and confidence intervals, and (2) joint estimation of the upstream and downstream regression models.  These are the approaches illustrated in the applications in Section~\ref{sec:applications}.

Each method has its strengths and weaknesses.  The bias correction is simple to implement and scales well, but the formulas for the bias we develop are for particular settings.  While these encompass leading applications, they are not exhaustive, and other formulas will need to be derived for other settings by specializing our general characterization of bias.  Joint estimation is more flexible and can handle cases where the general bias formula may be difficult to specialize, but it is also more computationally demanding.

\subsection{Bias-Corrected Estimators and Confidence Intervals \label{subsection:bias_correction}}

Theorem~\ref{theorem:drifting.general} shows that the asymptotic bias of the two-step estimator $\hat{\bs \psi}$ takes the form
\[
  \boldsymbol{b} = -\kappa \, \E\Big[\bs{\xi}_i^{\phantom{T}}\!\!\!\;\bs{\xi}_i^T\Big]^{-1} \left[ \begin{array}{cc}
      \bs{\Omega} & \mathbf 0 \\
      \mathbf 0   & \mathbf 0\end{array} \right] \boldsymbol \psi.
\]
We use this formula to construct bias-corrected estimators and confidence intervals (CIs) for $\bs \gamma$ and $\bs \alpha$. Given consistent estimators $\hat \kappa$ and $\hat{\bs \Omega}$ of $\kappa$ and $\bs \Omega$, one can construct the following bias-corrected estimators
\begin{align*}
  \hat{\boldsymbol{\psi}}^{bca} & = \left( \mathbf I + \frac{\hat \kappa}{\sqrt n} \left( \frac 1n \sum_{i=1}^n \hat{\boldsymbol{\xi}}_i^{\phantom{T}}\!\!\!\;\hat{\boldsymbol{\xi}}_i^T \right)^{-1} \left[ \begin{array}{cc}
                                                                                                                                                                                                                   \hat{\boldsymbol{\Omega}} & \mathbf 0 \\
                                                                                                                                                                                                                   \mathbf 0                 & \mathbf 0\end{array} \right] \right) \hat{\boldsymbol{\psi}} ,     \\[4pt]
  \hat{\boldsymbol{\psi}}^{bcm} & = \left( \mathbf I - \frac{\hat \kappa}{\sqrt n}\left( \frac 1n \sum_{i=1}^n \hat{\boldsymbol{\xi}}_i^{\phantom{T}}\!\!\!\;\hat{\boldsymbol{\xi}}_i^T \right)^{-1} \left[ \begin{array}{cc}
                                                                                                                                                                                                                \hat{\boldsymbol{\Omega}} & \mathbf 0 \\
                                                                                                                                                                                                                \mathbf 0                 & \mathbf 0\end{array} \right] \right)^{-1} \hat{\boldsymbol{\psi}} .
\end{align*}
The first estimator $\hat{\boldsymbol{\psi}}^{bca} $ performs an additive bias correction to the OLS estimator $\hat{\boldsymbol \psi}$. Simulations below show that this estimator performs well when the bias in $\hat{\boldsymbol \psi}$ is relatively small. The additive correction may not be sufficient when the bias in $\hat{\boldsymbol \psi}$ is large, as it relies upon $\hat{\boldsymbol \psi}$ being a reasonable estimator of $\boldsymbol \psi$. The bias corrected estimator $\hat{\boldsymbol{\psi}}^{bcm}$ performs a more aggressive (multiplicative) bias correction. We recommend this second estimator when the bias in $\hat{\boldsymbol \psi}$ is expected to be large, provided the maximum eigenvalue of
\[
  \frac{\hat \kappa}{\sqrt n}\left( \frac 1n \sum_{i=1}^n \hat{\boldsymbol{\xi}}_i^{\phantom{T}}\!\!\!\;\hat{\boldsymbol{\xi}}_i^T \right)^{-1} \left[ \begin{array}{cc}
      \hat{\boldsymbol{\Omega}} & \mathbf 0 \\
      \mathbf 0                 & \mathbf 0\end{array} \right]
\]
is less than one in absolute value.\footnote{This condition ensures invertibility of
  $
    \left( \mathbf I - \frac{\hat \kappa}{\sqrt n}\left( \frac 1n \sum_{i=1}^n \hat{\boldsymbol{\xi}}_i^{\phantom{T}}\!\!\!\;\hat{\boldsymbol{\xi}}_i^T \right)^{-1} \left[ \begin{array}{cc}
          \hat{\boldsymbol{\Omega}} & \mathbf 0 \\
          \mathbf 0                 & \mathbf 0\end{array} \right] \right).
  $
  }
We discuss how to consistently estimate $\kappa$ and $\bs\Omega$ within the context of our running examples below.

Bias corrected CIs for the regression coefficients are constructed by centering at the bias-corrected estimators and using OLS standard errors. A valid $100(1-a)$\% CI for the $j$th component $\psi_j$ of $\bs \psi$ is given by
\[
  \mr{CI}_n(\psi_j) = \left[ \hat \psi_j^{bc} - z_{1-a/2} \frac{\hat \sigma_j}{\sqrt n} , \hat \psi_j^{bc} + z_{1-a/2} \frac{\hat \sigma_j}{\sqrt n} \right],
\]
where $\hat \psi_j^{bc}$ denotes the $j$th entry of $\hat{\bs \psi}^{bca}$ or $\hat{\bs \psi}^{bcm}$, $z_{1-a/2}$ is the $1-a/2$ quantile of the normal distribution (e.g., 1.96 for a 95\% CI), and $\hat \sigma_j$ denotes the square root of the $j$th diagonal entry of $\hat{\mathbf V}$ from (\ref{eq:vhat}).

The following result shows that the bias-corrected estimators are asymptotically normal with the correct centering, and the bias-corrected confidence intervals 
are valid:

\begin{theorem}[Validity of Bias-Corrected Inference]\label{theorem:CI}
  Suppose that Assumption~\ref{assumption:drifting.general} holds, that $\E[\varepsilon_i^2 \bs{\xi}_i^{\phantom{T}}\!\!\!\;\bs{\xi}_i^T]$ has full rank, and that $\hat \kappa \to_p \kappa$ and $\hat{\boldsymbol \Omega} \to_p \boldsymbol \Omega$. Then as $n \to \infty$:
  \begin{enumerate}
    \item Bias-corrected estimators are first-order asymptotically equivalent and asymptotically normally distributed with the correct centering:
          \[
            \sqrt n \left( \hat{\bs{\psi}}^{bcm} - \bs{\psi} \right) = \sqrt n \left( \hat{\bs{\psi}}^{bca} - \bs{\psi} \right) + o_p(1) \to_d N (  \bs 0 \,, \mf V );
          \]
    \item Bias-corrected confidence intervals have correct coverage:
          \[
            \lim_{n \to \infty}\Pr( \psi_j \in \mr{CI}_n(\psi_j) ) = 1-a.
          \]
  \end{enumerate}
\end{theorem}

We now show how to apply bias correction in the context of our two running examples. In each case, we propose consistent estimators $\hat \kappa$ and $\hat{\bs\Omega}$ of $\kappa$ and $\bs\Omega$. Theorem~\ref{theorem:CI} then implies that the bias-corrected CIs have correct coverage. We also recommend reporting $\hat \kappa$ as a diagnostic for assessing the relative importance of measurement error and sampling error.

\subsubsection{Application 1: AI/ML-Generated Labels.}

Here $\kappa = \lim_{n \to \infty} \sqrt n \, \E[ \hat \theta_i(1-\theta_i)]$ and $\bs \Omega = 1$. 
One may estimate $\kappa$ by taking a sample of observations of size $m \ll n$. Then, each observation $i = 1,\ldots,m$ for which $\hat \theta_i = 1$ is inspected, and assigned a true label $\theta_i$. The estimator of $\kappa$ is
\[
  \hat \kappa = \sqrt n \widehat{FPR}, \quad \quad \widehat{FPR} = \frac 1m \sum_{i=1}^m \hat \theta_i (1-\theta_i).
\]
We establish validity of this approach allowing $m/n \to 0$ asymptotically. This is important for accommodating modern use cases where ML/AI methods are deployed to impute labels on massive data sets (large $n$), but where correctly labeling data can be costly (small $m$). For instance, \cite{boxellJournalistIdeologyProduction2022} impute binary labels representing political slant for a corpus of millions of newspaper articles using a validation sample size in the tens of thousands. Other approaches recently advocated in the literature for correcting measurement error (e.g., \cite{fongMachineLearningPredictions2021,allonMachineLearningPrediction2023,egamiUsingImperfectSurrogates2023}) are shown to be valid when $m/n \to c > 0$, so that the validation and original sample sizes are comparable. As far as we are aware, the theoretical properties of these proposed methods are unknown in modern scenarios where $m/n \to 0$.

We emphasize that our approach does not require constructing a full validation sample of size $m$ (and thus performing costly inspection of all $m$ observations), but only those for which $\hat \theta_i = 1$. This can greatly reduce the burden on the researcher. For instance, in our empirical application to remote work, we take a subsample of size $m = 1000$. Of these, only 26 observations have $\hat \theta_i = 1$, so only 26 job postings need to be inspected. By contrast, constructing a validation sample would require inspecting all $m = 1000$ observations.

Finally, our bias correction can be implemented using external data in which $Y_i$ and/or $\mf q_i$ are missing, since it only requires a subsample of $\theta_i$ and $\hat \theta_i$. This makes our approach more broadly applicable than other methods that require a full validation data set. For instance, \cite{bursztynImmigrantNextDoor2024} use a ML algorithm to classify a data set of charitable donors' names by ethnicity. As true ethnicity is latent, they estimate the accuracy of the classifier using an external sample of North Carolina voter registration data which contains self-reported ethnicity (but is missing data on charitable donations). 

The following result shows that $\hat \kappa$ is consistent, allowing $m/n \to 0$.
\begin{lemma}\label{lem:kappa.ex1}
  Suppose that $\sqrt n \, \E[\hat \theta_i(1-\theta_i)] \to \kappa > 0$ and $n/m^2 \to 0$. Then $\hat \kappa \to_p \kappa$.
\end{lemma}

Consistency of $\hat\kappa$ suffices for asymptotic validity of the bias-corrected CIs. However, the estimation of $\hat \kappa$ from a small subsample can introduce additional variability that, while asymptotically negligible, may be important to account for in finite samples. To this end, we introduce the following finite-sample correction to standard errors. Recall $\hat{\mf V}$ from (\ref{eq:vhat}). Also let
\[
  \hat{\bs \Gamma} =  \left( \frac 1n \sum_{i=1}^n \hat{\bs{\xi}}_i^{\phantom{T}}\!\!\!\;\hat{\bs{\xi}}_i^T \right)^{-1}
  \left( \begin{array}{cc} 1 & \mf 0 \\ \mf 0 & \mf 0 \end{array} \right).
\]
The adjusted covariance matrix estimators $\hat{\mf V}^{bca}$ and $\hat{\mf V}^{bcm}$ for $\hat{\bs \psi}^{bca}$ and $\hat{\bs \psi}^{bcm}$ are given by
\[
  \begin{aligned}
    \hat{\mf V}^{bca} & = (\mf I + \widehat{FPR} \, \hat{\bs \Gamma}) \hat{\mf V} (\mf I + \widehat{FPR} \, \hat{\bs \Gamma}^T) + \frac 1m \widehat{FPR}(1-\widehat{FPR}) \hat{\bs \Gamma} \left( \hat{\mf V} + n \hat{\bs \psi} \hat{\bs \psi}^T \right) \hat{\bs \Gamma}^T  ,          \\
    \hat{\mf V}^{bcm} & = (\mf I - \widehat{FPR} \, \hat{\bs \Gamma})^{-1} \hat{\mf V} (\mf I - \widehat{FPR} \, \hat{\bs \Gamma}^T)^{-1} + \frac 1m \widehat{FPR}(1-\widehat{FPR}) \hat{\bs \Gamma} \left( \hat{\mf V} + n \hat{\bs \psi} \hat{\bs \psi}^T \right) \hat{\bs \Gamma}^T , 
  \end{aligned}
\]
respectively.
The form of these adjustments follows from the law of total variance. Under the conditions of Theorem~\ref{theorem:drifting.general} and Lemma~\ref{lem:kappa.ex1}, we have $\hat{\mf V}^{bca} \to_p \mf V$ and $\hat{\mf V}^{bcm} \to_p \mf V$, with $\mf V$ the asymptotic variance derived in Theorem~\ref{theorem:drifting.general}. In practice, we recommend reporting standard errors computed from $\hat{\mf V}^{bca}$ if using $\hat{\bs \psi}^{bca}$ or $\hat{\mf V}^{bcm}$ if using $\hat{\bs \psi}^{bcm}$.

Codes to implement these bias corrections and standard error formulas are available in the Python package \texttt{ValidMLInference}.

\subsubsection{Application 2: Topic Models.}

In view of the expressions for $\kappa$ and $\bs \Omega$ derived in Theorem~\ref{theorem:two-step.drifting}, consider
\[
  \begin{aligned}
    \hat \kappa      & = \frac{1}{\sqrt n} \sum_{i=1}^n C_i^{-1} ,                                                                                                                                                                                                                           &
    \hat{\bs \Omega} & = \boldsymbol S (\hat{\mf{B}} \hat{\mf{B}}^T)^{-1} \hat{\mf{B}}\, \mr{diag}(\hat{\mf{B}}^T \bar{\bs w}_n) \hat{\mf{B}}^{T} (\hat{\mf{B}} \hat{\mf{B}}^T)^{-1} \boldsymbol S^T - \frac 1n \sum_{i=1}^n \hat{\bs{\theta}}_i^{\phantom{T}}\!\!\!\;\hat{\bs{\theta}}_i^T,
  \end{aligned}
\]
where $\bar{\bs w}_n = \frac 1n \sum_{i=1}^n \hat{\bs w}_i$. The following result shows that these estimators are consistent.

\begin{lemma}\label{lem:kappa.ex2}

  Suppose that Assumption~\ref{assumption:drifting} holds and that $\bar{\bs w}_n \to_p \E[\bs{w}_i]$. Then $\hat \kappa \to_p \kappa$ and $\hat{\bs \Omega} \to_p \bs \Omega$.

\end{lemma}

\subsection{Joint Estimation \label{sec:one-step}}

Our second approach for correcting the bias from the two-step strategy begins by formulating a likelihood $l\left(Y_i, h(\mathbf{x}_i), \bs \theta_i \mid \mathbf{q}_i, \mathbf{v}_i, \bs\gamma, \bs\alpha, \bs\zeta\right)$. Here $h$ is a function of the high-dimensional or unstructured observables, $\mathbf{v}_i$ are covariates that potentially enter the model beyond the downstream regression, and $\bs\zeta$ are nuisance parameters.  We discuss how to form a likelihood below. Integrating the latent $\bs \theta_i$ out yields a likelihood $l\left(Y_i, h(\mathbf{x}_i) \mid \mathbf{q}_i, \mathbf{v}_i, \bs\gamma, \bs\alpha, \bs\zeta\right)$ depending only on observables, which can then be used for maximum likelihood estimation of model parameters.  This idea can be applied generically, and here we adapt it to each of the three applications in Section~\ref{sec:applications}. 

\subsubsection{Remote Work and Wage Inequality}

In the remote work application, $Y_i$ is log posted wages, $\theta_i \in \{0,1\}$ is an indicator for whether job posting $i$ offers remote work ($\theta_i = 1$) or not ($\theta_i = 0$), and $\mathbf{x}_i$ is job posting text.  We use $h(\mathbf{x}_i) = \hat\theta_i \in \{0,1\}$ and thereby formulate a likelihood over the predicted class label associated with $\mathbf{x}_i$ rather than over $\mathbf{x}_i$ directly.  No additional covariates enter the model so $\mathbf{v}_i$ is empty.  The integrated likelihood is
\begin{equation}
  \label{eqn:one_step_remote}
l(Y_i ,\hat \theta_i = d;(\gamma, \bs\alpha, \bs \zeta)) = \omega_{d1} MN(Y_i - \gamma - \bs\alpha^T \mathbf{q}_i; \bs \lambda_1) + \omega_{d0} MN(Y_i - \bs\alpha^T \mathbf{q}_i; \bs \lambda_0) ,
\end{equation}
for $d \in \{0,1\}$,  $\bs \omega = (\omega_{00}, \omega_{10}, \omega_{01}, \omega_{11}) \in \Delta^3$ with $\omega_{ab} = \Pr(\hat \theta_i = a, \theta_i = b)$, and $MN(\,\cdot\,;\bs \lambda) = \sum_{l=1}^L \lambda_{l} \phi(\,\cdot\,;\lambda_{\mu l},\lambda_{\sigma^2 l})$ is a mixture of normal distributions with $L$ components and mixing weights $\lambda_{1},\ldots,\lambda_{L} \in \Delta^{L-1}$, where $\phi(\,\cdot\,;\mu,\sigma^2)$ denotes the $N(\mu,\sigma^2)$ density, and the component means are normalized so that $\sum_{l=1}^L \lambda_{l} \lambda_{\mu l} = 0$. Thus, here $\bs \zeta = (\bs \omega, \bs \lambda_0,\bs \lambda_1)$. We use $L = 3$ for the results in Section~\ref{sec:applications.1}.

\subsubsection{CEO Time Use and Firm Performance}

In the CEO time use application, $Y_i$ is log sales of firm $i$, $\theta_i \in [0,1]$ is a behavior index, and $\mathbf{x}_i$ is the vector of counts of time-use feature combinations across all surveyed 15-minute time intervals.  Here the likelihood is directly over $\mathbf{x}_i$, i.e. $h(\mathbf{x}_i) = \mathbf{x}_i$.  In the spirit of correlated random effects in panel data models, we specify a distribution for the behavior index $\theta_i$ conditional on $J$ covariates $\mathbf g_i$, which may include $\mathbf q_i$ or other variables.\footnote{\citet{robertsStructuralTopicModels2014} presents a model in which a logistic normal distribution over topic shares is parameterized by covariates but without a downstream regression.  \citet{bleiSupervisedTopicModels2010} and \citet{ahrensBayesianTopicRegression2021} present models in which linear combinations of topic shares explain a normally distributed response variable, but do not allow covariates to enter the distribution over topic shares.}  The likelihood is implied by 
\begin{equation}
\begin{aligned}
  \theta_i \mid (C_i,\mathbf q_i,\mathbf{g}_i) &\sim \text{LogisticNormal}\left(\bs\phi^T\mathbf{g}_i, \sigma_\theta^2 \right), \label{eqn:one_step_ceo} \\
  \mathbf x_{i} \mid (\theta_i,C_i,\mathbf q_i,\mathbf{g}_i) &\sim \text{Multinomial}\left(C_i, \mathbf{B}^T \mathbf{w}_i \right), \\
  Y_i \mid (\theta_i,C_i,\mathbf q_i,\mathbf{g}_i) &\sim \text{Normal}\left(\gamma\theta_i + \boldsymbol\alpha^T \mathbf{q}_i, \sigma_Y^2\right),
\end{aligned}
\end{equation}
where $\mathbf{w}_i = (1-\theta_i, \theta_i)^T$.  \citet{bandieraCEOBehaviorFirm2020} shows log employment and an indicator for whether the CEO has an MBA are correlated with behavior, so we include these in $\mathbf g_i$ together with a constant.  As the likelihood also depends on the number of observed time units $C_i$, we have $\mathbf{v}_i = (C_i, \mathbf g_i)$. 

\subsubsection{Central Bank Communication}

In this application, $Y_i$ is the path factor in FOMC meeting $i$, $\theta_i \in [0,1]$ is the hawkish sentiment of the FOMC written statement, and $\mathbf{x}_i$ is the text of the statement.  We take $h(\mathbf{x}_i) = (N_i, C_i)$ and form a likelihood over the number of paragraphs in FOMC statements classified as hawkish ($N_i$) conditional on the total number classified as hawkish or dovish ($C_i$). We treat $\mathbf{v}_i$ as empty, but our analysis could be extended to allow the prior for $\theta_i$ to depend on covariates.  The likelihood is implied by 
\begin{equation} \label{eqn:one_step_sentiment}
\begin{aligned}
  \theta_i  \mid (C_i,\mathbf{q}_i) &\sim U[0,1] , \\
  N_i \mid (\theta_i,C_i,\mathbf{q}_i) &\sim \text{Binomial}\left(C_i, (1-\theta_i)\beta_0 + \theta_i\beta_1 \right) , \\
  Y_i \mid (\theta_i,C_i,\mathbf{q}_i) &\sim \text{Normal}\left(\gamma\theta_i + \boldsymbol\alpha^T \mathbf{q}_i, \sigma_Y^2\right) .
\end{aligned}
\end{equation}
In addition, we include additional terms in the likelihood for the testing data:
\begin{equation}
\begin{aligned}
N_{10} &\sim \text{Binomial}(N_{10} + N_{00}, \beta_0) , & 
N_{11} &\sim \text{Binomial}(N_{11} + N_{01}, \beta_1) ,
\end{aligned}
\end{equation}
where $N_{ab}$ is the number of test-set observations classified an $a$ with true label $b$.\footnote{In the application, $N_{11} = 14$, $N_{00} = 54$, $N_{10} = 1$, $N_{01} = 1$.}

\subsection{Inference Approach for Intractable Likelihoods}

In certain cases, one can directly integrate-out $\bs \theta_i$ from the likelihood, as in \eqref{eqn:one_step_remote}.  However, in more complex problems, such as \eqref{eqn:one_step_ceo}, there are two challenges.  First, the integration for computing the likelihood has no closed-form solution and must be performed numerically. Moreover, this numerical integration must be done observation-by-observation. As such, standard likelihood-based estimation may not be computationally feasible.  

In such cases, one may use Bayesian computation to perform valid frequentist inference.  In this approach, we introduce a prior for the model parameters $\boldsymbol \delta = (\bs\gamma, \bs\alpha, \bs\zeta)$ and treat the latent $\boldsymbol{\theta}_i$ as ``parameters'' drawn from a prior distribution as illustrated above.  We sample from the posterior distribution of $\left(\bm\delta,(\boldsymbol \theta_i)_{i=1}^n\right)$ conditional on the observed data $(Y_i, h(\mathbf x_i), \mathbf q_i, \mathbf v_i)_{i=1}^n$. The marginal draws for $\boldsymbol \delta$ represent draws from the posterior distribution for $\boldsymbol \delta$ based on the integrated likelihood $l\left(Y_i, h(\mathbf{x}_i) \mid \mathbf{q}_i, \mathbf{v}_i, \bs\gamma, \bs\alpha, \bs\zeta\right)$. Thus,  $\bs{\theta}_i$ is implicitly integrated of the likelihood as part of the sampling procedure.

It is important to emphasize that while our suggested approach uses Bayesian computation, inference is frequentist. 
The maximum likelihood estimator $\hat{\boldsymbol{\delta}}$ of $\boldsymbol{\delta}$ is asymptotically normal under standard regularity conditions (e.g., Theorem 5.41 of \citealp{vaartAsymptoticStatistics1998}). By the Bernstein--von Mises Theorem (see Theorem 10.1 of \citealp{vaartAsymptoticStatistics1998} and discussion), the posterior mean $\bar{\boldsymbol{\delta}}$ of $\boldsymbol{\delta}$ is first-order asymptotically equivalent to the MLE $\hat{\boldsymbol{\delta}}$. Moreover, the posterior distribution of $\boldsymbol\delta$ is asymptotically normal with mean $\bar{\boldsymbol \delta}$ and variance (when appropriately scaled with $n$) equal to the asymptotic variance of the MLE. As such, Bayesian credible sets for elements of $\boldsymbol{\delta}$\textemdash or any of its components such as $\boldsymbol{\gamma}$ or $\boldsymbol{\alpha}$\textemdash are valid frequentist confidence sets. 
This approach is also \emph{efficient} for inference on $\boldsymbol{\delta}$ and its components, as it is asymptotically equivalent to likelihood-based inference.

We implement this approach for the CEO time use and central bank communication applications.  For the former, we follow standard practice in the topic modeling literature and use $\text{Dirichlet}(0.2)$ priors for $\bm\beta_1$ and $\bm\beta_2$.  For the logistic normal prior for $\boldsymbol{\theta}_i$, we set $\sigma_\theta^2 = 1$ and use a $\text{Normal}(0,4)$ prior for each element of $\bs\phi$.  We also use $\text{Normal}(0, 100)$ priors for all downstream regression coefficients, and a $\text{Gamma}(1,10)$ prior for $\sigma_Y$.  

For the central bank communication application, we use the same priors in the downstream regression.  We use a $\text{Beta}(2, 5)$ prior for $\beta_0$ and a $\text{Beta}(5, 2)$ prior for $\beta_1$.  We adopt asymmetric priors on the classification probabilities to resolve the label switching problem inherent in topic models.

We perform sampling with Hamiltonian Monte Carlo implemented in \texttt{NumPyro} \citep{phanComposableEffectsFlexible2019}.  See \citet{sacherHamiltonianMonteCarlo2024} for more details on HMC and probabilistic programming.


\section{Simulation Evidence} \label{sec:simulations}

In this section, we provide simulation evidence illustrating the finite-sample performance of our methods for correcting bias and performing valid inference.

\subsection{AI/ML-Generated Labels}
\label{sec:simulations.app1}

This simulation is calibrated to the remote work empirical application in Section~\ref{sec:applications.1}. The same contains 16,315 job postings. Bias-corrected estimates of the intercept and slope are approximately 10 and 1. The residual standard deviation is approximately 0.3 (respectively, 0.5) for observations with $\hat \theta_i = 0$ ($\hat \theta_i = 1$). We therefore generate data according to
\[
 Y_i = 10 + \theta_i + (0.3 + 0.2 \theta_i) \varepsilon_i,
\]
with $\varepsilon_i \sim N(0,1)$. The sample mean of $\hat \theta_i$ is 0.025 and  $\widehat{FPR} = 0.009$, which corresponds to $\hat \kappa \approx 1.1$. We draw samples of size $n =$ 8,000, 16,000, and 32,000, with $\theta_i \sim \text{Bernoulli}(p)$ for $p =$ 0.025, 0.05, and 0.5, and $\kappa =$ 0.5, 1, and 2, with 1000 simulations for each configuration.

We implement the two-step strategy, additive and multiplicative bias corrections, and joint estimation. For the latter, we use the likelihood from (\ref{eqn:one_step_remote}) with a single Gaussian component ($L = 1$).
To implement both bias corrections, we generate a sample of $(\theta_i,\hat \theta_i)$ of size $m = 1000$. We use two estimators of the false-positive rate. The first is the empirical frequency $\widehat{FPR} = \frac 1m \sum_{i=1}^m \hat \theta_i(1-\theta_i)$. We may interpret this as the posterior mean of $r := \E[\hat \theta_i(1-\theta_i)]$ under an improper $r^{-1}(1-r)^{-1}$ prior. This prior puts most of its mass at the endpoints of the interval $[0,1]$. However, our approach is based on the premise that $r$ should be small. We therefore consider a Bayes estimator $\widehat{FPR}_B = \frac{\sum_{i=1}^m \hat \theta_i(1-\theta_i) + \frac 12}{m + \frac{5}{2}}$, which is the posterior mean of $r$ under a proper $r^{-1/2}(1-r)$ prior. Results are presented in Tables~\ref{tab:wfhsim.1}-\ref{tab:wfhsim.3}.

\begin{table}[t!]
\begin{center}
\caption{\label{tab:wfhsim.1}Simulation Results: Generated Labels, $p = 0.025$}
{\small
\begin{tabular}{lcrrrcrrrcrrr}
\\[-8pt]
\hline \hline \\[-10pt]
 & & \multicolumn{3}{c}{Bias} & & \multicolumn{3}{c}{RMdSE} & & \multicolumn{3}{c}{Coverage} \\ \cline{3-5} \cline{7-9} \cline{11-13} \\[-8pt]
$\kappa$ & & \multicolumn{1}{c}{0.5} & \multicolumn{1}{c}{1} & \multicolumn{1}{c}{2} & 
		   & \multicolumn{1}{c}{0.5} & \multicolumn{1}{c}{1} & \multicolumn{1}{c}{2} & 
		   & \multicolumn{1}{c}{0.5} & \multicolumn{1}{c}{1} & \multicolumn{1}{c}{2} \\[2pt]
 & & \multicolumn{11}{c}{$n = 8000$} \\ \cline{3-13} \\[-10pt]
2-Step & & -0.228 & -0.459 & -0.916 & & 0.228 & 0.459 & 0.916 & & 0.001 & 0.000 & 0.000 \\
BCA-1 & & -0.055 & -0.214 & -0.843 & & 0.076 & 0.214 & 0.843 & & 0.899 & 0.589 & 0.000 \\
BCA-2 & & -0.039 & -0.203 & -0.841 & & 0.073 & 0.204 & 0.841 & & 0.929 & 0.643 & 0.000 \\
BCM-1 & & -0.003 & -0.006 & -0.715 & & 0.091 & 0.170 & 0.806 & & 0.887 & 0.758 & 0.265 \\
BCM-2 & & 0.023 & 0.030 & -0.729 & & 0.090 & 0.177 & 0.827 & & 0.904 & 0.773 & 0.257 \\
Joint & & 0.000 & 0.002 & -0.008 & & 0.045 & 0.056 & 0.111 & & 0.935 & 0.930 & 0.817 \\
 \\
 & & \multicolumn{11}{c}{$n = 16000$} \\ \cline{3-13} \\[-10pt]
2-Step & & -0.161 & -0.323 & -0.648 & & 0.161 & 0.323 & 0.648 & & 0.000 & 0.000 & 0.000 \\
BCA-1 & & -0.028 & -0.110 & -0.423 & & 0.058 & 0.111 & 0.423 & & 0.884 & 0.794 & 0.049 \\
BCA-2 & & -0.011 & -0.097 & -0.417 & & 0.055 & 0.101 & 0.417 & & 0.928 & 0.841 & 0.058 \\
BCM-1 & & -0.001 & -0.005 & -0.012 & & 0.071 & 0.110 & 0.303 & & 0.874 & 0.832 & 0.542 \\
BCM-2 & & 0.024 & 0.024 & 0.037 & & 0.070 & 0.114 & 0.313 & & 0.915 & 0.851 & 0.548 \\
Joint & & 0.002 & 0.002 & 0.002 & & 0.030 & 0.034 & 0.055 & & 0.952 & 0.951 & 0.897 \\
 \\
 & & \multicolumn{11}{c}{$n = 32000$} \\ \cline{3-13} \\[-10pt]
2-Step & & -0.116 & -0.231 & -0.459 & & 0.116 & 0.231 & 0.459 & & 0.000 & 0.000 & 0.000 \\
BCA-1 & & -0.022 & -0.056 & -0.214 & & 0.048 & 0.069 & 0.214 & & 0.878 & 0.859 & 0.505 \\
BCA-2 & & -0.004 & -0.040 & -0.204 & & 0.044 & 0.062 & 0.204 & & 0.938 & 0.907 & 0.566 \\
BCM-1 & & -0.009 & -0.007 & -0.015 & & 0.055 & 0.085 & 0.165 & & 0.874 & 0.868 & 0.735 \\
BCM-2 & & 0.014 & 0.019 & 0.022 & & 0.054 & 0.085 & 0.168 & & 0.927 & 0.896 & 0.758 \\
Joint & & -0.002 & 0.000 & 0.001 & & 0.019 & 0.024 & 0.027 & & 0.950 & 0.952 & 0.957 \\[2pt] \hline \hline
\end{tabular}
}
\vskip 2pt
\parbox{0.95\textwidth}
{
\footnotesize
\emph{Note:} Median bias (Bias), root median square error (RMdSE), and coverage of 95\% confidence intervals (Coverage) across different $\kappa$ and $n$. Results are presented for the two-step strategy (2-step), additive bias correction using $\widehat{FPR}$ (BCA-1) and $\widehat{FPR}_B$ (BCA-2), multiplicative bias correction using $\widehat{FPR}$ (BCM-1) and $\widehat{FPR}_B$ (BCM-2), and joint estimation (Joint).
}
\end{center}
\end{table}

\begin{table}[t!]
\begin{center}
\caption{\label{tab:wfhsim.2}Simulation Results: Generated Labels, $p = 0.05$}
{\small
\begin{tabular}{lcrrrcrrrcrrr}
\\[-8pt]
\hline \hline \\[-10pt]
 & & \multicolumn{3}{c}{Bias} & & \multicolumn{3}{c}{RMdSE} & & \multicolumn{3}{c}{Coverage} \\ \cline{3-5} \cline{7-9} \cline{11-13} \\[-8pt]
$\kappa$ & & \multicolumn{1}{c}{0.5} & \multicolumn{1}{c}{1} & \multicolumn{1}{c}{2} & 
		   & \multicolumn{1}{c}{0.5} & \multicolumn{1}{c}{1} & \multicolumn{1}{c}{2} & 
		   & \multicolumn{1}{c}{0.5} & \multicolumn{1}{c}{1} & \multicolumn{1}{c}{2} \\[2pt]
 & & \multicolumn{11}{c}{$n = 8000$} \\ \cline{3-13} \\[-10pt]
2-Step & & -0.116 & -0.236 & -0.469 & & 0.116 & 0.236 & 0.469 & & 0.019 & 0.000 & 0.000 \\
BCA-1 & & -0.015 & -0.058 & -0.224 & & 0.039 & 0.065 & 0.224 & & 0.931 & 0.868 & 0.319 \\
BCA-2 & & -0.005 & -0.050 & -0.219 & & 0.037 & 0.060 & 0.219 & & 0.956 & 0.893 & 0.335 \\
BCM-1 & & -0.002 & -0.005 & -0.010 & & 0.043 & 0.063 & 0.131 & & 0.927 & 0.884 & 0.751 \\
BCM-2 & & 0.010 & 0.008 & 0.009 & & 0.041 & 0.064 & 0.130 & & 0.944 & 0.899 & 0.760 \\
Joint & & 0.000 & -0.003 & -0.004 & & 0.028 & 0.031 & 0.043 & & 0.956 & 0.951 & 0.927 \\
 \\
 & & \multicolumn{11}{c}{$n = 16000$} \\ \cline{3-13} \\[-10pt]
2-Step & & -0.081 & -0.165 & -0.332 & & 0.081 & 0.165 & 0.332 & & 0.018 & 0.000 & 0.000 \\
BCA-1 & & -0.007 & -0.031 & -0.110 & & 0.031 & 0.044 & 0.110 & & 0.916 & 0.893 & 0.690 \\
BCA-2 & & 0.003 & -0.022 & -0.103 & & 0.029 & 0.040 & 0.104 & & 0.948 & 0.917 & 0.719 \\
BCM-1 & & -0.001 & -0.004 & 0.000 & & 0.033 & 0.048 & 0.090 & & 0.910 & 0.886 & 0.817 \\
BCM-2 & & 0.011 & 0.008 & 0.014 & & 0.034 & 0.049 & 0.088 & & 0.938 & 0.915 & 0.823 \\
Joint & & 0.000 & 0.001 & 0.002 & & 0.018 & 0.021 & 0.023 & & 0.948 & 0.953 & 0.935 \\
 \\
 & & \multicolumn{11}{c}{$n = 32000$} \\ \cline{3-13} \\[-10pt]
2-Step & & -0.059 & -0.118 & -0.236 & & 0.059 & 0.118 & 0.236 & & 0.006 & 0.000 & 0.000 \\
BCA-1 & & -0.009 & -0.015 & -0.060 & & 0.025 & 0.034 & 0.063 & & 0.897 & 0.904 & 0.835 \\
BCA-2 & & 0.001 & -0.006 & -0.052 & & 0.024 & 0.032 & 0.056 & & 0.949 & 0.935 & 0.864 \\
BCM-1 & & -0.006 & -0.004 & -0.005 & & 0.027 & 0.041 & 0.065 & & 0.892 & 0.902 & 0.871 \\
BCM-2 & & 0.005 & 0.008 & 0.008 & & 0.027 & 0.039 & 0.062 & & 0.938 & 0.929 & 0.880 \\
Joint & & -0.001 & -0.001 & 0.000 & & 0.013 & 0.015 & 0.016 & & 0.957 & 0.949 & 0.961 \\[2pt] \hline \hline
\end{tabular}
}
\vskip 2pt
\parbox{0.95\textwidth}
{
\footnotesize
\emph{Note:} Median bias (Bias), root median square error (RMdSE), and coverage of 95\% confidence intervals (Coverage) across different $\kappa$ and $n$. Results are presented for the two-step strategy (2-step), additive bias correction using $\widehat{FPR}$ (BCA-1) and $\widehat{FPR}_B$ (BCA-2), multiplicative bias correction using $\widehat{FPR}$ (BCM-1) and $\widehat{FPR}_B$ (BCM-2), and joint estimation (Joint).
}
\end{center}
\end{table}

\begin{table}[t!]
\begin{center}
\caption{\label{tab:wfhsim.3}Simulation Results: Generated Labels, $p = 0.5$}
{\small
\begin{tabular}{lcrrrcrrrcrrr}
\\[-8pt]
\hline \hline \\[-10pt]
 & & \multicolumn{3}{c}{Bias} & & \multicolumn{3}{c}{RMdSE} & & \multicolumn{3}{c}{Coverage} \\ \cline{3-5} \cline{7-9} \cline{11-13} \\[-8pt]
$\kappa$ & & \multicolumn{1}{c}{0.5} & \multicolumn{1}{c}{1} & \multicolumn{1}{c}{2} & 
		   & \multicolumn{1}{c}{0.5} & \multicolumn{1}{c}{1} & \multicolumn{1}{c}{2} & 
		   & \multicolumn{1}{c}{0.5} & \multicolumn{1}{c}{1} & \multicolumn{1}{c}{2} \\[2pt]
 & & \multicolumn{11}{c}{$n = 8000$} \\ \cline{3-13} \\[-10pt]
2-Step & & -0.022 & -0.045 & -0.089 & & 0.022 & 0.045 & 0.089 & & 0.351 & 0.002 & 0.000 \\
BCA-1 & & 0.000 & -0.002 & -0.008 & & 0.009 & 0.011 & 0.014 & & 0.944 & 0.943 & 0.925 \\
BCA-2 & & 0.002 & 0.000 & -0.007 & & 0.009 & 0.010 & 0.014 & & 0.957 & 0.956 & 0.937 \\
BCM-1 & & 0.000 & 0.000 & 0.000 & & 0.010 & 0.011 & 0.016 & & 0.942 & 0.941 & 0.925 \\
BCM-2 & & 0.002 & 0.002 & 0.002 & & 0.009 & 0.012 & 0.016 & & 0.954 & 0.949 & 0.932 \\
Joint & & 0.000 & 0.000 & 0.000 & & 0.008 & 0.008 & 0.009 & & 0.944 & 0.956 & 0.954 \\
 \\
 & & \multicolumn{11}{c}{$n = 16000$} \\ \cline{3-13} \\[-10pt]
2-Step & & -0.015 & -0.032 & -0.063 & & 0.015 & 0.032 & 0.063 & & 0.344 & 0.002 & 0.000 \\
BCA-1 & & 0.000 & -0.001 & -0.004 & & 0.007 & 0.009 & 0.012 & & 0.933 & 0.950 & 0.930 \\
BCA-2 & & 0.002 & 0.000 & -0.003 & & 0.007 & 0.009 & 0.012 & & 0.956 & 0.955 & 0.950 \\
BCM-1 & & 0.000 & 0.000 & 0.000 & & 0.007 & 0.009 & 0.013 & & 0.931 & 0.944 & 0.935 \\
BCM-2 & & 0.002 & 0.002 & 0.002 & & 0.007 & 0.009 & 0.012 & & 0.950 & 0.949 & 0.945 \\
Joint & & 0.000 & 0.000 & 0.000 & & 0.005 & 0.005 & 0.006 & & 0.947 & 0.944 & 0.960 \\
 \\
 & & \multicolumn{11}{c}{$n = 32000$} \\ \cline{3-13} \\[-10pt]
2-Step & & -0.011 & -0.023 & -0.045 & & 0.011 & 0.023 & 0.045 & & 0.322 & 0.005 & 0.000 \\
BCA-1 & & -0.001 & -0.001 & -0.002 & & 0.005 & 0.007 & 0.009 & & 0.919 & 0.920 & 0.927 \\
BCA-2 & & 0.001 & 0.001 & 0.000 & & 0.005 & 0.007 & 0.009 & & 0.955 & 0.948 & 0.938 \\
BCM-1 & & -0.001 & 0.000 & 0.000 & & 0.005 & 0.008 & 0.010 & & 0.917 & 0.918 & 0.921 \\
BCM-2 & & 0.001 & 0.002 & 0.002 & & 0.005 & 0.007 & 0.010 & & 0.951 & 0.943 & 0.931 \\
Joint & & 0.000 & 0.000 & 0.000 & & 0.004 & 0.005 & 0.004 & & 0.943 & 0.932 & 0.943 \\[2pt] \hline \hline
\end{tabular}
}
\vskip 2pt
\parbox{0.95\textwidth}
{
\footnotesize
\emph{Note:} Median bias (Bias), root median square error (RMdSE), and coverage of 95\% confidence intervals (Coverage) across different $\kappa$ and $n$. Results are presented for the two-step strategy (2-step), additive bias correction using $\widehat{FPR}$ (BCA-1) and $\widehat{FPR}_B$ (BCA-2), multiplicative bias correction using $\widehat{FPR}$ (BCM-1) and $\widehat{FPR}_B$ (BCM-2), and joint estimation (Joint).
}
\end{center}
\end{table}

Our baseline set of simulations with $p = 0.025$ (Table~\ref{tab:wfhsim.1}) is very challenging: in a sample of size 8,000, we expect only 200 observations with $\theta_i = 1$. When $\kappa = 1$ about half of these will be incorrectly imputed with $\hat \theta_i = 0$, and when $\kappa = 2$ almost all will be incorrectly imputed. In either case, we expect a very large bias of two-step estimators, and the results in Table~\ref{tab:wfhsim.1} confirm this, with large bias and zero coverage. Our bias corrections rely on the two-step estimator having a bias that is of the same order as sampling uncertainty. One might therefore expect they will perform poorly in small samples when two-step estimators are severely biased. Table~\ref{tab:wfhsim.1} shows this is not necessarily the case. Consider the configuration with $n =$ 16,000 and $\kappa = 1$, which is closest to the empirical application. The two-step estimator under-estimates the coefficient by about 32\%, the additive bias corrections by about 10\%, while the multiplicative bias corrected estimators are approximately unbiased, even in this challenging design. Indeed, the multiplicative bias correction appears to perform well across all configurations except when $n$ is smallest and $\kappa$ is largest. 

Bias corrected confidence sets appear to under-cover by about 10\% with $n =$ 16,000 and $\kappa = 1$, but the results for $n =$ 32,000 show their coverage moves toward nominal coverage as $n$ increases. Coverage is much closer to nominal coverage in the less challenging designs with $p = 0.05$ (Table~\ref{tab:wfhsim.2}) and $p = 0.5$ (Table~\ref{tab:wfhsim.3}), even with small $n$. Coverage also seems to be improved using the Bayes estimator $\widehat{FPR}_B$ instead of the empirical frequency. 

Finally, joint estimation produces approximately unbiased estimators across all designs, including the most challenging cases. It also produces confidence sets with coverage closest to nominal coverage and, as expected, the lowest risk of the competing approaches. Overall, these results demonstrate the sound performance of both our proposed solutions in a challenging, empirically calibrated setting.

\subsection{Topic Models and AI/ML-Generated Indices}
\label{sec:simulations.app2}

These simulations are calibrated to the empirical application to AI/ML-generated indices in Section~\ref{sec:applications.1}. We draw a latent share $\theta_i \sim U[0,1]$ then generate $N_i \sim \text{Binomial}(C_i,\beta_1 \theta_i + \beta_0 (1-\theta_i))$ with $\beta_1 = 0.9 $ and $\beta_0 = 0.1$ to introduce misclassification error. We set \[Y_i = -0.05 + 0.11 \theta_i + 0.1 \varepsilon_i,\] where $\varepsilon_i \sim N(0,1)$ and parameters are similar to the estimates in the empirical application. We generate samples of $(Y_i,N_i,C_i)$ of size $n = $ 200 (as in the application), 800, and 3,200. For $n = 200$ we use $C_i$ from the empirical application so that $\kappa \approx 4.57$. We then increase $C_i$ by a factors of two and four, to generate samples with $\kappa \approx 2.28$ and $\kappa \approx 1.14$, respectively. For the larger sample sizes we replicate the empirical $C_i$ and multiply by factors of two and four so $\kappa$ remains constant. We generate 1000 samples for each configuration.

We implement the two-step strategy, both bias corrections, and joint estimation. For the latter, we use the likelihood described in \eqref{eqn:one_step_sentiment}. For the two-step strategy, we require a suitable estimate of $\theta_i$. We produce estimates based on the topic model representation from Section~\ref{sec:applications} in order to account for potential misclassification error. To provide a clear comparison with joint estimation, we use the estimates of $\beta_0$ and $\beta_1$ from joint estimation, then estimate $\theta_i$ in accordance with the discussion in Section~\ref{sec:topic.theory.main}, using
\[
 \hat \theta_i = \max(0, \min( 1, \bs S (\hat{\mf B}^T)^{-1} (\mf n_i/C_i))),
\]
where
\[
 \bs S = \left[ \begin{array}{cc} 1 & 0 \end{array} \right], \quad
 \hat{\mf B} = \left[ \begin{array}{cc} \hat \beta_1 & \hat \beta_0 \\ 1 - \hat \beta_1 & 1 - \hat \beta_0 \end{array} \right] , \quad 
 \mf n_i = \left[ \begin{array}{c} N_i \\ C_i - N_i \end{array} \right].
\]
We also implement additive and multiplicative bias corrections using the formulas derived for topic models with the $\bs S$ and $\hat{\mf B}$ as above. Results are presented in Table~\ref{tab:indexsim}.

\begin{table}[t!]
\begin{center}
\caption{\label{tab:indexsim}Simulation Results: AI/ML-Generated Indices}
{\small
\begin{tabular}{lcrrrcrrrcrrr}
\\[-8pt]
\hline \hline \\[-10pt]
 & & \multicolumn{3}{c}{Bias} & & \multicolumn{3}{c}{RMdSE} & & \multicolumn{3}{c}{Coverage} \\ \cline{3-5} \cline{7-9} \cline{11-13} \\[-8pt]
$\kappa$ & & \multicolumn{1}{c}{4.57} & \multicolumn{1}{c}{2.28} & \multicolumn{1}{c}{1.14} & 
		   & \multicolumn{1}{c}{4.57} & \multicolumn{1}{c}{2.28} & \multicolumn{1}{c}{1.14} & 
		   & \multicolumn{1}{c}{4.57} & \multicolumn{1}{c}{2.28} & \multicolumn{1}{c}{1.14} \\[2pt]
 & & \multicolumn{11}{c}{$n = 200$} \\ \cline{3-13} \\[-10pt]
2-Step & & -0.446 & -0.299 & -0.180 & & 0.049 & 0.033 & 0.022 & & 0.309 & 0.677 & 0.839 \\
BCA & & -0.148 & -0.019 & 0.018 & & 0.026 & 0.021 & 0.019 & & 0.716 & 0.813 & 0.880 \\
BCM & & 0.240 & 0.183 & 0.081 & & 0.040 & 0.029 & 0.021 & & 0.495 & 0.678 & 0.844 \\
Joint & & -0.003 & 0.007 & 0.004 & & 0.024 & 0.020 & 0.018 & & 0.945 & 0.948 & 0.938 \\
2-Step-Share & & -0.433 & -0.218 & -0.037 & & 0.048 & 0.025 & 0.018 & & 0.378 & 0.824 & 0.931 \\
 \\
 & & \multicolumn{11}{c}{$n = 800$} \\ \cline{3-13} \\[-10pt]
2-Step & & -0.301 & -0.193 & -0.113 & & 0.033 & 0.021 & 0.013 & & 0.147 & 0.533 & 0.806 \\
BCA & & -0.015 & 0.002 & 0.005 & & 0.011 & 0.010 & 0.010 & & 0.823 & 0.880 & 0.911 \\
BCM & & 0.182 & 0.063 & 0.023 & & 0.021 & 0.011 & 0.010 & & 0.507 & 0.798 & 0.897 \\
Joint & & 0.004 & -0.006 & -0.006 & & 0.011 & 0.010 & 0.010 & & 0.956 & 0.950 & 0.950 \\
2-Step-Share & & -0.215 & -0.041 & 0.084 & & 0.024 & 0.01 & 0.012 & & 0.507 & 0.942 & 0.894 \\
 \\
 & & \multicolumn{11}{c}{$n = 3200$} \\ \cline{3-13} \\[-10pt]
2-Step & & -0.194 & -0.110 & -0.060 & & 0.021 & 0.012 & 0.007 & & 0.053 & 0.456 & 0.773 \\
BCA & & -0.001 & 0.007 & 0.005 & & 0.005 & 0.005 & 0.005 & & 0.859 & 0.896 & 0.917 \\
BCM & & 0.060 & 0.025 & 0.010 & & 0.007 & 0.005 & 0.005 & & 0.700 & 0.869 & 0.913 \\
Joint & & -0.005 & -0.002 & -0.003 & & 0.005 & 0.005 & 0.005 & & 0.942 & 0.941 & 0.943 \\
2-Step-Share & & -0.042 & 0.085 & 0.158 & & 0.006 & 0.009 & 0.017 & & 0.887 & 0.739 & 0.353 \\[2pt] \hline \hline
\end{tabular}
}
\vskip 2pt
\parbox{0.95\textwidth}
{
\footnotesize
\emph{Note:} Median relative bias (Bias), root median square error (RMdSE), and coverage of 95\% confidence intervals (Coverage) across different $\kappa$ and $n$. Results are presented for the two-step strategy (2-step), additive and multiplicative bias corrections (BCA) and (BCM), joint estimation (Joint), and the two-step strategy regressing onto the empirical share $\hat \theta_i = N_i/C_i$ (2-Step-Share).
}
\end{center}
\end{table}

First consider the bias results. In our baseline setting with $n = 200$ and $\kappa = 4.57$ as in the empirical application, the median relative bias of the two-step estimate of $\gamma$ is $-0.446$, showing that the two-step strategy under-estimates $\gamma$ by nearly half. By contrast, joint estimates are nearly unbiased. The additive bias correction improves upon the two-step strategy, producing an estimate about 85\% that of the true effect size when $\kappa = 4.57$, and approximately unbiased estimates for smaller $\kappa$. Conversely, the multiplicative correction tends to over-estimate the true effect size, suggesting it performs too aggressive a bias correction in small samples. In larger samples, the additive correction produces approximately unbiased estimates, and the performance of the multiplicative correction improves, especially for smaller values of $\kappa$. The second panel of Table~\ref{tab:indexsim} also suggests the additive bias correction is preferred from a root median square error perspective, producing values that are on par with joint estimation.

Turning to coverage, we see that the two-step CIs have coverage well below nominal coverage, especially for larger values of $n$ and $\kappa$. Conversely, joint estimation produces CIs with coverage close to nominal coverage even for the smallest sample size. Coverage of the additively bias-corrected CIs is significantly better than that of the multiplicatively bias-corrected CIs which, in turn, is better than two-step coverage. For large $n$ and small $\kappa$, CIs based on the additive bias correction have coverage close to nominal coverage.

\subsection{Lessons for the Empirical Applications}
\label{sec:simulations.lessons}

We conclude this section by revisiting some of the empirical applications in light of the simulation evidence presented above. First consider the application to remote work (Section~\ref{sec:applications.1}). The bias-corrected estimates reported in Section~\ref{sec:applications.1} are based on the multiplicative bias correction, which the above results showed performed best in simulations mimicking this design, and the empirical estimate $\widehat{FPR}$. Results for the different bias corrections with different estimates of the false-positive rate are presented in Table~\ref{tab:empirical.applications.2.revisit}. As before, estimated effect sizes are larger for the multiplicative correction, and the multiplicatively corrected CIs are slightly to the right of those using the additive correction. Additively corrected CIs overlap slightly with two-step CIs reported in Section~\ref{sec:applications.1}, while multiplicatively corrected CIs do not.

In view of the simulation evidence above for the topic model, the bias corrections reported in Section~\ref{sec:applications.2} use the additive correction, but the multiplicative correction produced values that agreed to within $\pm0.01$.

\begin{table}[t!]
\begin{center}
\caption{\label{tab:empirical.applications.2.revisit}Additional Results: Remote Work Application}
{\small
\begin{tabular}{lcccccccc}
\\[-8pt]
\hline \hline \\[-10pt]
 & & \multicolumn{3}{c}{No Fixed Effects} & & \multicolumn{3}{c}{With Fixed Effects}  \\ \cline{3-5} \cline{7-9} \\[-8pt]
 & & Estimate & Std Err& 95\% CI & & Estimate & Std Err& 95\% CI \\[2pt]
BCA-1 & & 0.897 & 0.119 & [0.663, 1.131] & & 0.521 & 0.081 & [0.362, 0.680] \\
BCA-2 & & 0.910 & 0.124 & [0.667, 1.154] & & 0.530 & 0.084 & [0.365, 0.694] \\
BCM-1 & & 1.052 & 0.140 & [0.778, 1.327] & & 0.641 & 0.100 & [0.446, 0.836] \\
BCM-2 & & 1.088 & 0.148 & [0.798, 1.379] & & 0.668 & 0.106 & [0.460, 0.876] \\[2pt] \hline \hline
\end{tabular}
}
\vskip 2pt
\parbox{0.95\textwidth}
{
\footnotesize
\emph{Note:} Results are presented for the additive bias correction using $\widehat{FPR}$ (BCA-1) and $\widehat{FPR}_B$ (BCA-2), and multiplicative bias correction using $\widehat{FPR}$ (BCM-1) and $\widehat{FPR}_B$ (BCM-2).
}
\end{center}
\end{table}

For central bank communication application in Section~\ref{sec:applications.3}, we implemented the two-step strategy using the empirical share $\hat \theta_i = N_i/C_i$ to estimate $\theta_i$. Table~\ref{tab:indexsim} presents a 
set of simulation results for this case as well. This share estimate is prone to two sources of measurement error: misclassification error and upstream sampling error. The results in Table~\ref{tab:indexsim} hold the former fixed and let the latter go to zero appropriately with the sample size. Evidently, these measurement errors work in different directions. With small $n$ and large $\kappa$,\footnote{Note that the $\kappa$ here is defined within the context of the topic model for the simulations in Section~\ref{sec:simulations.app2}, where upstream sampling error is the only source of measurement error. An appropriate $\kappa$ as in (\ref{eq:kappa}) for the share variable $\hat \theta_i = N_i/C_i$ must account for both misclassification error and upstream sampling error.} upstream sampling error is more important. This is a ``classical'' measurement error that causes attenuation bias. With $n = 200$ and $\kappa = 4.57$, as in the empirical application, the bias is comparable to that obtained by implementing the two-step strategy using the topic model, underestimating the true effect size by around 43\%. However, bias goes in a different direction when $n$ is large and $\kappa$ is small, so that misclassification error is more important than upstream sampling error. For instance, with $n =$ 3,200 and $\kappa = 1.14$, two-step regression onto shares now \emph{over-estimates} the true effect size by around 16\%. Correspondingly, the two-step confidence interval has coverage well below nominal coverage. 

An important take-away from these results is that  the bias of two-step estimators using AI/ML-generated variables can behave differently than in classical measurement error settings, making it difficult for researchers to determine even the sign of the bias.


\section{Conclusion} \label{sec:conclusion}

The leading approach for analyzing unstructured or high-dimensional data follows a two-step strategy. First, latent variables of economic interest are estimated using an AI-powered information retrieval algorithm or other ML method. Second, the AI- or ML-generated variables are plugged-in to downstream econometric models, and are treated as regular numeric ``data'' for the purposes of estimation and inference. 

This paper highlights, both theoretically and empirically, how measurement error introduced in the first step leads to biased estimates and invalid inference for the downstream regression coefficients.  The degree of bias, and therefore the degree to which it distorts inference, depends on the relative importance of measurement error and sampling error, but it can be substantial in practice. 

To address this problem, we propose two robust alternative inference methods: (1) an explicit bias correction with bias-corrected confidence intervals; and (2) joint maximum likelihood estimation.  In a series of simulations and applications involving label imputation, dimensionality reduction, and index construction via classification and aggregation, we show that the two-step strategy produces material biases whereas both proposed methods perform well.

\FloatBarrier
\newpage
\appendix

\section{Proofs of Main Results} \label{sec:appendix_proofs}

Here we present proofs of just the main results in the text. Proofs of additional results are deferred to Appendix~\ref{sec:appendix_supplemental}.

\begin{proof}[Proof of Theorem~\ref{theorem:drifting.general}]
We first prove part 2. We have $\frac 1n \sum_{i=1}^n \hat{\bs{\xi}}_i^{\phantom{T}}\!\!\!\;\hat{\bs{\xi}}_i^T \to_p \E\Big[{\bs{\xi}}_i^{\phantom{T}}\!\!\!\;{\bs{\xi}}_i^T\Big]$ by Assumption~\ref{assumption:drifting.general}\ref{assumption:drifting.lln}. Result (\ref{eq:two-step.drifting.result.2}) now follows by Assumptions~\ref{assumption:drifting.general}\ref{assumption:drifting.moments} and~\ref{assumption:drifting.general}\ref{assumption:drifting.clt} and Sltusky's theorem.

To establish (\ref{eq:two-step.drifting.result.1}), first write
\[
 \begin{aligned}
 \sqrt n(\hat{\bs\psi} - \bs \psi) 
 & = \Bigg( \frac 1n \sum_{i=1}^n \hat{\bs{\xi}}_i^{\phantom{T}}\!\!\!\;\hat{\bs{\xi}}_i^T \Bigg)^{-1} \Bigg( \frac{1}{\sqrt{n}} \sum_{i=1}^n \hat{\bs{\xi}}_i^{\phantom{T}}\!\!\!\;(Y_i - \hat{\bs{\xi}}_i^T \bs{\psi}) \Bigg) \\
 & = \Bigg( \frac 1n \sum_{i=1}^n \hat{\bs{\xi}}_i^{\phantom{T}}\!\!\!\;\hat{\bs{\xi}}_i^T \Bigg)^{-1} \Bigg( \frac{1}{\sqrt{n}} \sum_{i=1}^n \hat{\bs{\xi}}_i (\bs{\theta}_i - \hat{\bs{\theta}}_i)^T \bs{\gamma} \Bigg) 
 + \Bigg( \frac 1n \sum_{i=1}^n \hat{\bs{\xi}}_i^{\phantom{T}}\!\!\!\;\hat{\bs{\xi}}_i^T \Bigg)^{-1} \Bigg(  \frac{1}{\sqrt{n}} \sum_{i=1}^n \hat{\bs{\xi}}_i \varepsilon_i \Bigg) \\
 &
 =: T_{1,n} + T_{2,n} .
 \end{aligned}
\]
It follows by Assumption~\ref{assumption:drifting.general}\ref{assumption:drifting.moments}-\ref{assumption:drifting.clt} and the Continuous Mapping Theorem that
\[
 T_{2,n} \to_d N \left( \bs 0 , \mathbf V \right).
\] 
For the remaining term, we have
\[
 \frac{1}{\sqrt{n}} \sum_{i=1}^n \hat{\bs{\xi}}_i (\bs{\theta}_i - \hat{\bs{\theta}}_i)^T \bs{\gamma} 
 = \left[ \begin{array}{c} 
 -\frac{1}{\sqrt{n}} \sum_{i=1}^n \hat{\bs{\theta}}_i  (\hat{\bs{\theta}}_i - \bs{\theta}_i)^T \bs{\gamma} \\
 -\frac{1}{\sqrt{n}} \sum_{i=1}^n {\mf{q}}_i  (\hat{\bs{\theta}}_i - \bs{\theta}_i)^T \bs{\gamma} \end{array} \right]
 \to_p 
 \left[ \begin{array}{c} 
 -\kappa \bs{\Omega} \bs{\gamma} \\
 \mathbf 0 \end{array} \right]
\]
by Assumption~\ref{assumption:drifting.general}\ref{assumption:drifting.lln}. 
Hence,
\[
 T_{1,n} \to_p -\kappa \, \E\left[\bs{\xi}_i^{\phantom{T}}\!\!\!\;\bs{\xi}_i^T\right]^{-1} \left[ \begin{array}{c} 
 \bs{\Omega} \bs{\gamma} \\
 \mathbf 0 \end{array} \right]
\]
by Assumptions~\ref{assumption:drifting.general}\ref{assumption:drifting.moments} and~\ref{assumption:drifting.general}\ref{assumption:drifting.lln} and Slutsky's theorem.
\end{proof}

\begin{proof}[Proof of Theorem~\ref{theorem:drifting.ml.main}]
This is a special case of Theorem~\ref{theorem:drifting.ml}.
\end{proof}

\begin{proof}[Proof of Theorem~\ref{theorem:two-step.drifting}]
Assumption~\ref{assumption:drifting.general}\ref{assumption:drifting.moments} is implied by Assumption~\ref{assumption:drifting}\ref{assumption:dgp.q.2}.

The second and third parts of Assumption~\ref{assumption:drifting.general}\ref{assumption:drifting.lln} hold by Lemmas~\ref{lem:theta.bias} and~\ref{lem:theta.q.bias} in Appendix~\ref{sec:appendix_supplemental}. For the first part, we have 
\[
 \frac 1n \sum_{i=1}^n \hat{\bs{\xi}}_i^{\phantom{T}}\!\!\!\;\hat{\bs{\xi}}_i^T
 = 
 \frac 1n \sum_{i=1}^n \hat{\bs{\xi}}_i(\hat{\bs{\xi}}_i - \bs{\xi}_i)^T + \frac 1n \sum_{i=1}^n (\hat{\bs{\xi}}_i - \bs{\xi}_i)\bs{\xi}_i^T + \frac 1n \sum_{i=1}^n {\bs{\xi}}_i^{\phantom{T}}\!\!\!\;{\bs{\xi}}_i^T \,.
\]
The third term converges in probability to $\E [ {\bs{\xi}}_i^{\phantom{T}}\!\!\!\;{\bs{\xi}}_i^T]$ by Assumption~\ref{assumption:drifting}\ref{assumption:dgp.q.2}, while the first two terms are $o_p(1)$ by Lemmas~\ref{lem:theta.bias} and~\ref{lem:theta.q.bias}. 

Now consider Assumption~\ref{assumption:drifting.general}\ref{assumption:drifting.clt}. For the first part, we have
\[
 \frac{1}{\sqrt{n}} \sum_{i=1}^n \hat{\bs{\xi}}_i \varepsilon_i = 
 \left[ \begin{array}{c}
 \frac{1}{\sqrt n} \sum_{i=1}^n \hat{\bs{\theta}}_i \varepsilon_i \\
 \frac{1}{\sqrt n} \sum_{i=1}^n \mf{q}_i \varepsilon_i
 \end{array} \right].
\]
We may deduce by arguments similar to those in the proof of Lemma~\ref{lem:theta.bias} that 
\[
 \left\| \frac{1}{\sqrt n} \sum_{i=1}^n \hat{\bs{\theta}}_i \varepsilon_i - \bs S (\mf{B} \mf{B}^T)^{-1} \mf{B} \left( \frac{1}{\sqrt{n}} \sum_{i=1}^n \hat{\mf{p}}_i \varepsilon_i \right) \right\| \to_p 0 
\]
by Assumption~\ref{assumption:drifting}\ref{assumption:B.rank.2}-\ref{assumption:theta.rate}, where $\hat{\mf p}_i = \mf x_i/C_i$. Moreover, with $\mf p_i = \E[ \mf x_i/C_i| C_i, \bs w_i]$, we have
\[
 \begin{aligned}
 \E \left[ \varepsilon_i^2 \left\| \hat{\mf{p}}_i  - \mf{p}_i \right\|^2 \right]
 & = \E \left[ \E \left[ \left. \varepsilon_i^2 \right| \bs w_i, \mf q_i \right] \E \left[ \left. \left\| \hat{\mf{p}}_i  - \mf{p}_i \right\|^2 \right| \bs w_i, \mf q_i \right] \right] \\
 & = \E \left[ \E \left[ \left. \varepsilon_i^2 \right| \bs w_i, \mf q_i \right] \E \left[ \left. \E \left[ \left.  \left\| \hat{\mf{p}}_i  - \mf{p}_i \right\|^2 \right| \bs w_i, \mf q_i, C_i \right]  \right| \bs w_i, \mf q_i \right] \right] \\
 & = \E \left[ \E \left[ \left. \varepsilon_i^2 \right| \bs w_i, \mf q_i \right] \E \left[ \left. \frac{1}{C_i} \mr{tr} \left\{ \mr{diag}(\mf{B}^T \bs{w}_i ) - \mf{B}^{T} \bs{w}_i^{\phantom{T}}\!\!\!\;\bs{w}_i^T \mf B \right\} \right| \bs w_i, \mf q_i \right] \right] \\
 & = \E \left[ \frac{1}{C_i} \right] \E \left[ \varepsilon_i^2 \left( \mr{diag}(\mf{B}^T \bs{w}_i ) - \mf{B}^{T} \bs{w}_i^{\phantom{T}}\!\!\!\;\bs{w}_i^T \mf B \right) \right] \to 0,
 \end{aligned}
\]
where the first equality is by independence of $(\mf{x}_i,C_i)$ and $\varepsilon_i$ conditional on $(\bs w_i, \mf q_i)$, the second is by iterated expectations, the third is by Lemma~\ref{lem:expected.p} in Appendix~\ref{sec:appendix_supplemental} and independence of $\mf x_i$ and $\mf q_i$ conditional on $(C_i,\bs{w}_i)$, and the fourth is by (\ref{eq:kappa.topic}) and independence of $C_i$ and $(Y_i,\mf q_i, \bs w_i)$.
Therefore, $\frac{1}{\sqrt{n}} \sum_{i=1}^n (\hat{\mf{p}}_i - \mf{p}_i) \varepsilon_i \to_p \mf 0$ and so $\left\| \frac{1}{\sqrt{n}} \sum_{i=1}^n \hat{\bs{\xi}}_i \varepsilon_i - \frac{1}{\sqrt n} \sum_{i=1}^n \bs{\xi}_i \varepsilon_i  \right\| \to_p 0.$
It follows by the central limit theorem and Assumption~\ref{assumption:drifting}\ref{assumption:dgp.q.2} that
\[
 \frac{1}{\sqrt{n}} \sum_{i=1}^n \hat{\bs{\xi}}_i \varepsilon_i \to_d N \left( \bs{0} , \E \left[ \varepsilon_i^2 \bs{\xi}_i^{\phantom{T}}\!\!\!\;\bs{\xi}_i^T \right] \right).
\]

It remains to show $\frac 1n \sum_{i=1}^n \hat \varepsilon_i^2 \hat{\bs{\xi}}_i^{\phantom{T}}\!\!\!\;\hat{\bs{\xi}}_i^T \to_p 
 \E\left[\varepsilon_i^2 \bs{\xi}_i^{\phantom{T}}\!\!\!\;\bs{\xi}_i^T\right]$. To this end, first write 
\begin{multline*}
 \quad \frac 1n \sum_{i=1}^n \hat \varepsilon_i^2 \hat{\bs{\xi}}_i^{\phantom{T}}\!\!\!\;\hat{\bs{\xi}}_i^T 
 = \frac 1n \sum_{i=1}^n \varepsilon_i^2 \bs{\xi}_i^{\phantom{T}}\!\!\!\;\bs{\xi}_i^T
 + \frac 1n \sum_{i=1}^n \varepsilon_i^2 \left( \hat{\bs{\xi}}_i^{\phantom{T}}\!\!\!\;\hat{\bs{\xi}}_i^T - {\bs{\xi}}_i^{\phantom{T}}\!\!\!\;{\bs{\xi}}_i^T \right) \\
 + \frac 1n \sum_{i=1}^n (\hat{\varepsilon}_i^2 - \varepsilon_i^2)\hat{\bs{\xi}}_i^{\phantom{T}}\!\!\!\;\hat{\bs{\xi}}_i^T 
 =: T_{1,n} + T_{2,n} + T_{3,n}. \quad
\end{multline*}
Evidently, $T_{1,n} \to_p \E \left[ \varepsilon_i^2 \bs{\xi}_i^{\phantom{T}}\!\!\!\;\bs{\xi}_i^T \right]$ by Assumption~\ref{assumption:drifting}\ref{assumption:dgp.q.2}. For $T_{2,n}$, we have 
\[
 T_{2,n} = \left[ \begin{array}{cc}
 \frac 1n \sum_{i=1}^n \varepsilon_i^2 (\hat{\bs{\theta}}_i^{\phantom{T}}\!\!\!\;\hat{\bs{\theta}}_i^T - \bs{\theta}_i^{\phantom{T}}\!\!\!\;\bs{\theta}_i^T)
 & \frac 1n \sum_{i=1}^n \varepsilon_i^2 (\hat{\bs{\theta}}_i - \bs{\theta}_i) \mf{q}_i^T \\
 \frac 1n \sum_{i=1}^n \varepsilon_i^2 \mf{q}_i (\hat{\bs{\theta}}_i - \bs{\theta}_i)^T & \mf 0 \end{array} \right].
\]
Consider the upper-left block. We may deduce by arguments similar to those in the proof of Lemma~\ref{lem:theta.bias} that 
\[
 \left\| \frac 1n \sum_{i=1}^n \varepsilon_i^2 (\hat{\bs{\theta}}_i^{\phantom{T}}\!\!\!\;\hat{\bs{\theta}}_i^T - \bs{\theta}_i^{\phantom{T}}\!\!\!\;\bs{\theta}_i^T) - 
 \bs S (\mf{B} \mf{B}^T)^{-1} \mf{B} \left(
 \frac 1n \sum_{i=1}^n \varepsilon_i^2 (\hat{\mf{p}}_i^{\phantom{T}}\!\!\!\;\hat{\mf{p}}_i^T - \mf{p}_i^{\phantom{T}}\!\!\!\;\mf{p}_i^T) 
 \right) \mf{B}^T (\mf{B} \mf{B}^T)^{-1} \bs S^T \right\| \to_p 0,
\]
by Assumption~\ref{assumption:drifting}\ref{assumption:B.rank.2}-\ref{assumption:dgp.q.2}. Since $\mf p_i, \hat{\mf p}_i \in \Delta^{V-1}$, we have $\|\hat{\mf{p}}_i^{\phantom{T}}\!\!\!\;\hat{\mf{p}}_i^T - \mf{p}_i^{\phantom{T}}\!\!\!\;\mf{p}_i^T\| \leq 2 \|\hat{\mf{p}}_i - \mf{p}_i\|$ and so 
\[
 \left\| \frac 1n \sum_{i=1}^n \varepsilon_i^2 (\hat{\mf{p}}_i^{\phantom{T}}\!\!\!\;\hat{\mf{p}}_i^T - \mf{p}_i^{\phantom{T}}\!\!\!\;\mf{p}_i^T) \right\|
 \leq 2 \left( \max_{1 \leq i \leq n} \|\hat{\mf{p}}_i - \mf{p}_i\| \right) \frac 1n \sum_{i=1}^n \varepsilon_i^2 \to_p 0,
\]
by Lemma~\ref{lem:p.uniform}  in Appendix~\ref{sec:appendix_supplemental} and Assumption~\ref{assumption:drifting}\ref{assumption:dgp.q.2}. Now consider the off-diagonal blocks. By arguments similar to those in the proof of Lemma~\ref{lem:theta.q.bias}, we have
\[
 \left\| \frac 1n \sum_{i=1}^n \varepsilon_i^2 \mf q_i (\hat{\bs{\theta}}_i - \bs{\theta}_i)^T - 
 \left(
 \frac 1n \sum_{i=1}^n \varepsilon_i^2 \mf{q}_i (\hat{\mf{p}}_i - \mf{p}_i)^T 
 \right) \mf{B}^T (\mf{B} \mf{B}^T)^{-1} \bs S^T \right\| \to_p 0,
\]
by Assumption~\ref{assumption:drifting}\ref{assumption:B.rank.2}-\ref{assumption:dgp.q.2}. But note that 
\[
 \left\| \frac 1n \sum_{i=1}^n \varepsilon_i^2 \mf{q}_i (\hat{\mf{p}}_i - \mf{p}_i)^T  \right\| 
 \leq \left( \max_{1 \leq i \leq n} \|\hat{\mf{p}}_i - \mf{p}_i\| \right) \frac 1n \sum_{i=1}^n \varepsilon_i^2 \| \mf q_i\| \to_p 0,
\]
by Lemma~\ref{lem:p.uniform} in Appendix~\ref{sec:appendix_supplemental} and Assumption~\ref{assumption:drifting}\ref{assumption:dgp.q.2}. Therefore, $T_{2,n} \to_p \mathbf 0$.

Now consider $T_{3,n}$. We have $\hat{\varepsilon}_i - \varepsilon_i = \hat{\bs{\xi}}_i^T (\bs{\psi} - \hat{\bs{\psi}}) + (\bs{\theta}_i - \hat{\bs{\theta}}_i )^T \bs{\gamma}$, 
where
\begin{multline*}
 \quad \max_{1 \leq i \leq n} \left| \hat{\bs{\xi}}_i^T (\hat{\bs{\psi}} - \bs{\psi}) \right| \leq  \left( \max_{1 \leq i \leq n} \| \mf q_i\| \right) \|\bs{\alpha} - \hat{\bs\alpha}\| \\
 + \left( \max_{1 \leq i \leq n}  \| ( \hat{\bs{\theta}}_i - \bs S (\hat{\mf{B}} \hat{\mf{B}}^T)^{-1} \hat{\mf{B}} \hat{\mf{p}}_i) \| + \| \bs S (\hat{\mf{B}} \hat{\mf{B}}^T)^{-1} \hat{\mf{B}} \| \right) \|\hat{\bs{\gamma}} - \bs{\gamma}\|  \to_p 0, \quad
\end{multline*}
where the first term is by $\sqrt n$-consistency of $\hat{\bs{\alpha}}$ and the fact that $n^{-1/4} \max_{1 \leq i \leq n} \| \mf q_i\| \to_p 0$ by Assumption~\ref{assumption:drifting}\ref{assumption:dgp.q.2}, and the second term follows by Assumption~\ref{assumption:drifting}\ref{assumption:theta.rate}, consistency of $\hat{\boldsymbol{\gamma}}$, $\|\hat{\mf p}_i\| \leq 1$, and because $\|(\hat{\mf{B}} \hat{\mf{B}}^T)^{-1} \hat{\mf{B}}\| = O_p(1)$ by Assumption~\ref{assumption:drifting}\ref{assumption:B.rank.2}-\ref{assumption:B.rate}. Moreover,
\begin{multline*}
 \max_{1 \leq i \leq n} |(\hat{\bs{\theta}}_i - \bs{\theta}_i)^T \bs{\gamma}| 
 \leq 
 \bigg( \max_{1 \leq i \leq n} \| \hat{\bs \theta}_i - \bs S (\hat{\mf{B}} \hat{\mf{B}}^T)^{-1} \hat{\mf{B}}\hat{\mf p}_i \| \\
 + \| \bs S \| \left\| (\hat{\mf{B}} \hat{\mf{B}}^T)^{-1} \hat{\mf{B}} - (\mf{B} \mf{B}^T)^{-1} \mf{B} \right\| 
 + \left\|\bs S  (\mf{B} \mf{B}^T)^{-1} \mf{B} \right\| \max_{1 \leq i \leq n} \left\| \hat{\mf{p}}_i - \mf{p}_i \right\| \bigg) \left\| \bs{\gamma} \right\|.
\end{multline*}
Consider the three terms in parentheses on the right-hand side of this display. The first two terms converge in probability to zero by Assumption~\ref{assumption:drifting}\ref{assumption:B.rank.2}-\ref{assumption:theta.rate}, and the third converges in probability to zero by Lemma~\ref{lem:p.uniform}. 
Hence, $\max_{1 \leq i \leq n} |\hat{\varepsilon}_i - \varepsilon_i| \to_p 0$.

Now, since 
\[
 \hat \varepsilon_i^2 - \varepsilon_i^2 = 2(\hat \varepsilon_i - \varepsilon_i) \varepsilon_i + (\hat \varepsilon_i - \varepsilon_i)^2,
\]
we have
\[
 T_{3,n} = \frac 2n \sum_{i=1}^n (\hat \varepsilon_i - \varepsilon_i) \varepsilon_i \hat{\bs{\xi}}_i^{\phantom{T}}\!\!\!\;\hat{\bs{\xi}}_i^T + \frac 1n \sum_{i=1}^n (\hat \varepsilon_i - \varepsilon_i)^2 \hat{\bs{\xi}}_i^{\phantom{T}}\!\!\!\;\hat{\bs{\xi}}_i^T,
\]
and so 
\[
\begin{aligned}
 \|T_{3,n}\| & \leq 2 \left( \max_{1 \leq i \leq n} |\hat{\varepsilon}_i - \varepsilon_i| \right) \frac 1n \sum_{i=1}^n |\varepsilon_i| \|\hat{\bs{\xi}}_i\|^2 +
 \left( \max_{1 \leq i \leq n} |\hat{\varepsilon}_i - \varepsilon_i|^2 \right) \frac 1n \sum_{i=1}^n  \|\hat{\bs{\xi}}_i\|^2 \\
 & = \left( \max_{1 \leq i \leq n} |\hat{\varepsilon}_i - \varepsilon_i| \right) \mr{tr} \left\{ \frac 2n \sum_{i=1}^n |\varepsilon_i| \hat{\bs{\xi}}_i^{\phantom{T}}\!\!\!\;\hat{\bs{\xi}}_i^T \right\} +
 \left( \max_{1 \leq i \leq n} |\hat{\varepsilon}_i - \varepsilon_i|^2 \right)  \mr{tr} \left\{ \frac 1n \sum_{i=1}^n \hat{\bs{\xi}}_i^{\phantom{T}}\!\!\!\;\hat{\bs{\xi}}_i^T \right\} \to_p 0,
\end{aligned}
\]
because $\frac 1n \sum_{i=1}^n |\varepsilon_i| \hat{\bs{\xi}}_i^{\phantom{T}}\!\!\!\!\;\hat{\bs{\xi}}_i^T = O_p(1) $ by control of $T_{1,n}$ and $T_{2,n}$, which imply $\frac 1n \sum_{i=1}^n \varepsilon_i^2 \hat{\bs{\xi}}_i^{\phantom{T}}\!\!\!\;\hat{\bs{\xi}}_i^T = O_p (1)$, and $\frac 1n \sum_{i=1}^n \hat{\bs{\xi}}_i^{\phantom{T}}\!\!\!\;\hat{\bs{\xi}}_i^T = O_p(1)$ by Lemmas~\ref{lem:theta.bias} and~\ref{lem:theta.q.bias}, and Assumption~\ref{assumption:drifting}\ref{assumption:dgp.q.2}. 
\end{proof}

\begin{proof}[Proof of Theorem~\ref{theorem:CI}]
We first prove part 1.  In the proof of Theorem~\ref{theorem:drifting.general}, it was shown that $\left( \frac 1n \sum_{i=1}^n \hat{\bs{\xi}}_i^{\phantom{T}}\!\!\!\;\hat{\bs{\xi}}_i^T \right)^{-1} \to_p \E\Big[\bs{\xi}_i^{\phantom{T}}\!\!\!\;\bs{\xi}_i^T\Big]^{-1}$. Hence, by consistency of $\hat \kappa$ and $\hat{\bs\Omega}$, we have that
\begin{equation} \label{eq:CI.1}
 \hat \kappa \left( \frac 1n \sum_{i=1}^n \hat{\boldsymbol{\xi}}_i^{\phantom{T}}\!\!\!\;\hat{\boldsymbol{\xi}}_i^T \right)^{-1} \left[ \begin{array}{cc} 
 \hat{\boldsymbol{\Omega}} & \mathbf 0 \\
 \mathbf 0 & \mathbf 0 \end{array} \right] \to_p 
 \kappa \, \E\Big[\bs{\xi}_i^{\phantom{T}}\!\!\!\;\bs{\xi}_i^T\Big]^{-1} \left[ \begin{array}{cc} 
 \boldsymbol{\Omega} & \mathbf 0 \\
 \mathbf 0 & \mathbf 0 \end{array} \right].
\end{equation}
Since the matrix on the right-hand side is finite, we have that 
\[
 \frac{\hat \kappa}{\sqrt n} \left( \frac 1n \sum_{i=1}^n \hat{\boldsymbol{\xi}}_i^{\phantom{T}}\!\!\!\;\hat{\boldsymbol{\xi}}_i^T \right)^{-1} \left[ \begin{array}{cc} 
 \hat{\boldsymbol{\Omega}} & \mathbf 0 \\
 \mathbf 0 & \mathbf 0 \end{array} \right] = O_p(n^{-1/2}),
\]
and so
\[
 \left( \mf I -  \frac{\hat \kappa}{\sqrt n} \left( \frac 1n \sum_{i=1}^n \hat{\boldsymbol{\xi}}_i^{\phantom{T}}\!\!\!\;\hat{\boldsymbol{\xi}}_i^T \right)^{-1} \left[ \begin{array}{cc} 
 \hat{\boldsymbol{\Omega}} & \mathbf 0 \\
 \mathbf 0 & \mathbf 0 \end{array} \right] \right)^{-1} 
 = \mf I + \frac{\hat \kappa}{\sqrt n} \left( \frac 1n \sum_{i=1}^n \hat{\boldsymbol{\xi}}_i^{\phantom{T}}\!\!\!\;\hat{\boldsymbol{\xi}}_i^T \right)^{-1} \left[ \begin{array}{cc} 
 \hat{\boldsymbol{\Omega}} & \mathbf 0 \\
 \mathbf 0 & \mathbf 0 \end{array} \right] + O_p(n^{-1}),
\]
because $(\mf I - \mf A)^{-1} = \mf I + \mf A + O(\|\mf A\|^2)$ as $\|\mf A\| \to 0$ and the inverse exists with probability approaching one. Post-multiplying both sides by $\hat{\bs \psi}$ gives that $\hat{\bs \psi}^{bcm} = \hat{\bs \psi}^{bca} + O_p(n^{-1})$. 

Since both bias-corrected estimators are first-order asymptotically equivalent, it suffices to analyze $\hat{\bs \psi}^{bca}$. We have
\[
 \sqrt n(\hat{\bs \psi}^{bca} - \bs \psi) = \sqrt n (\hat{\bs \psi} - \bs \psi) + \hat \kappa \left( \frac 1n \sum_{i=1}^n \hat{\boldsymbol{\xi}}_i^{\phantom{T}}\!\!\!\;\hat{\boldsymbol{\xi}}_i^T \right)^{-1} \left[ \begin{array}{cc} 
 \hat{\boldsymbol{\Omega}} & \mathbf 0 \\
 \mathbf 0 & \mathbf 0 \end{array} \right] \hat{\bs \psi}.
\]
The first term is asymptotically normal with mean and variance given by (\ref{eq:two-step.drifting.result.1}). For the second term, Theorem~\ref{theorem:drifting.general} implies $\hat{\bs\psi} \to_p \bs \psi$, so it follows by (\ref{eq:CI.1}) that 
\[
 \hat \kappa \left( \frac 1n \sum_{i=1}^n \hat{\boldsymbol{\xi}}_i^{\phantom{T}}\!\!\!\;\hat{\boldsymbol{\xi}}_i^T \right)^{-1} \left[ \begin{array}{cc} 
 \hat{\boldsymbol{\Omega}} & \mathbf 0 \\
 \mathbf 0 & \mathbf 0 \end{array} \right] \hat{\bs \psi} \to_p \kappa \, \E\Big[\bs{\xi}_i^{\phantom{T}}\!\!\!\;\bs{\xi}_i^T\Big]^{-1} \left[ \begin{array}{cc} 
 \bs{\Omega} & \mathbf 0 \\
 \mathbf 0 & \mathbf 0\end{array} \right] \bs \psi.
\]
Combining with (\ref{eq:two-step.drifting.result.1}) and using the continuous mapping theorem, we conclude that 
\[
 \sqrt n(\hat{\bs \psi}^{bca} - \bs \psi)  \to_d N(\bs 0, \mf V),
\]
as required.

Part 2.~now follows from part 1., consistency of $\hat{\mf V}$, which was established in Theorem~\ref{theorem:drifting.general}, and positive-definiteness of $\mf V$.
\end{proof}

\begin{proof}[Proof of Lemma~\ref{lem:kappa.ex1}]
Let $FP_i := \E[\hat \theta_i(1-\theta_i)|\mf x_i, \mf p_i]$. Since $\sqrt n \, \E[\hat \theta_i(1-\theta_i)] \to \kappa > 0$, it is enough to show that 
\begin{equation} \label{eq:kappa.ex1.1}
 \frac{\widehat{FPR}}{\frac 1m \sum_{i=1}^m FP_i} \to_p 1,
\end{equation}
and
\begin{equation} \label{eq:kappa.ex1.2}
 \frac{\frac 1m \sum_{i=1}^m FP_i}{\E[\hat \theta_i(1-\theta_i)]} \to_p 1.
\end{equation}

We first show (\ref{eq:kappa.ex1.2}). By Chebyshev's inequality, with probability at least $1-C^{-2}$ we have
\[
 \left| \sum_{i=1}^m FP_i - m \, \E[\hat \theta_i(1-\theta_i)] \right| \leq  C \sqrt{m \, \E[(\hat \theta_i(1-\theta_i))^2]},
\]
for any $C > 0$. But $\E[(\hat \theta_i(1-\theta_i))^2] < \E[\hat \theta_i(1-\theta_i)]$ because $0 \leq \hat \theta_i(1-\theta_i) \leq 1$. Moreover, the conditions $\sqrt n \, \E[\hat \theta_i(1-\theta_i)] \to \kappa > 0$ and $n/m^2 \to 0$ together imply $m \, \E[\hat \theta_i(1-\theta_i)] \to + \infty$. Setting $C = \epsilon_n \sqrt{m \, \E[\hat \theta_i(1-\theta_i)]}$ with $0 < \epsilon_n < \frac 12$ and $\epsilon_n \to 0$ sufficiently slowly that $C \to \infty$, we deduce that 
\begin{equation} \label{eq:kappa.ex1.3}
 \left| \frac{ \sum_{i=1}^m FP_i }{m \, \E[\hat \theta_i(1-\theta_i)]} - 1 \right| \leq \epsilon_n 
\end{equation}
holds with probability approaching one. This proves (\ref{eq:kappa.ex1.2}).

Now consider (\ref{eq:kappa.ex1.1}). Conditional on $(\mf x_i, \mf q_i)_{i=1}^m$, each $\hat \theta_i (1-\theta_i)$ are independent Bernoulli random variables with success probability $FP_i$. By Chernoff's inequality, for any $\delta > 0$ we have
\[
 \Pr \left( \left. \left| \frac{\widehat{FPR}}{\frac{1}{m} \sum_{i=1}^m FP_i} - 1 \right| > \delta \right| (\mf x_i, \mf q_i)_{i=1}^m \right) \leq 2 e^{-\delta^2 \sum_{i=1}^m FP_i/3} .
\]
Letting $\mc A_n$ denote the event upon which (\ref{eq:kappa.ex1.3}) holds, we then have
\[
 \begin{aligned}
 \Pr \left( \left. \left| \frac{\widehat{FPR}}{\frac{1}{m} \sum_{i=1}^m FP_i} - 1 \right| > \delta \right. \right) 
 & \leq 
 \E \left[ \Pr \left( \left. \left| \frac{\widehat{FPR}}{\frac{1}{m} \sum_{i=1}^m FP_i} - 1 \right| > \delta \right|  (\mf x_i, \mf q_i)_{i=1}^m  \right) \mb I[  (\mf x_i, \mf q_i)_{i=1}^m  \in \mc A_n) ] \right] \\
 & \quad + \Pr( (\mf x_i, \mf q_i)_{i=1}^m  \in \mc A_n^c) \\
 & \leq 2 e^{-\delta^2 (1-\epsilon_n) m \, \E[\hat \theta_i(1-\theta_i)] /3} + \Pr((\mf x_i, \mf q_i)_{i=1}^m  \in A_n^c) \to 0,
 \end{aligned}
\]
since $\epsilon_n \to 0$ and $m \,\E[\hat \theta_i(1-\theta_i)] \to + \infty$.
\end{proof}

\begin{proof}[Proof of Lemma~\ref{lem:kappa.ex2}]
Consistency of $\hat{\bs \Omega}$ follows by similar arguments to Lemmas~\ref{lem:convergence.theta} and~\ref{lem:theta.bias}, using the condition $\bar{\bs w}_n \to_p \E[\bs{w}_i]$. Moreover, by Chebyshev's inequality, for any $\delta > 0$ we have
\[
 \Pr \left( \left| \hat \kappa - \sqrt n \, \E[C_i^{-1}] \right| > \delta \right) \leq \frac{1}{\delta^2} \E[C_i^{-2}].
\]
As $C_i \geq 1$ and $\sqrt n \, \E[C_i^{-1}] \to \kappa \geq 0$, we have $\E[C_i^{-2}] \leq \E[C_i^{-1}] \to 0$. Hence, $\hat \kappa \to_p \kappa$.
\end{proof}


\let\oldbibliography\thebibliography
\renewcommand{\thebibliography}[1]{\oldbibliography{#1}
  \setlength{\itemsep}{0pt}}

\bibliography{Unstruct.bib}


\FloatBarrier
\newpage
\renewcommand*{\thepage}{A.\arabic{page}}
\setcounter{page}{1}

\newpage
\numberwithin{table}{section}
\numberwithin{figure}{section}


\

\begin{center}
  {
    \LARGE Supplemental Appendix: Inference for Regression with \\[8pt] Variables Generated by AI or Machine Learning
  }

  \

  \begin{tabular}{c@{\hskip 1in}c}
    \large Laura Battaglia    & \large Timothy Christensen \\
    \large Oxford             & \large Yale                \\[2ex]
    \large Stephen Hansen     & \large Szymon Sacher       \\
    \large UCL, IFS, and CEPR & \large Meta
  \end{tabular}

  \

  \

  \date{\large April 29, 2025}

\end{center}

\section{Additional Results and Discussion} \label{sec:appendix_examples}

\subsection{Fixed-DGP Asymptotics}

For completeness, here we study the large-sample properties of $\hat{\boldsymbol{\psi}}$ as the number of observations grows ($n \to \infty$), while the distribution of $(Y_i,\boldsymbol{\xi}_i,\hat{\boldsymbol{\xi}}_i)_{i=1}^n$ remains fixed. This fixed-DGP framework approximates empirical settings with a large number of observations and a large amount of measurement error per observation.

\begin{assumption}\label{assumption:fixed.general}
\begin{enumerate}[label=(\roman*),nosep]
\item \label{assumption:general.moments}
$\E\Big[ \|\boldsymbol \xi_i\|^2 \Big] < \infty$,  and $\E\Big[\bs{\xi}_i^{\phantom{T}}\!\!\!\;\bs{\xi}_i^T\Big]$ has full rank.

\item \label{assumption:general.lln}
$\frac 1n \sum_{i=1}^n {\bs{\xi}}_i^{\phantom{T}}\!\!\!\;{\bs{\xi}}_i^T \to_p \E\Big[{\bs{\xi}}_i^{\phantom{T}}\!\!\!\;{\bs{\xi}}_i^T\Big]$, 
$\frac 1n \sum_{i=1}^n (\hat{\boldsymbol{\theta}}_i - \bs{\theta}_i)(\hat{\bs{\theta}}_i - \bs{\theta}_i)^T \to_p \mathbf H$ with $\mathbf H$ a finite non-random symmetric matrix, 
$\frac 1n \sum_{i=1}^n (\hat{\boldsymbol{\theta}}_i^{\phantom{T}} - {\bs{\theta}}_i^{\phantom{T}}\!\!\!\;)\bs{\theta}_i^T \to_p \mathbf G$ with $\mathbf G$ a finite non-random matrix,
$\frac 1n \sum_{i=1}^n (\hat{\boldsymbol{\theta}}_i^{\phantom{T}} - {\bs{\theta}}_i^{\phantom{T}}\!\!\!\;)\mf{q}_i^T \to_p \mathbf 0$, and 
$\frac 1n \sum_{i=1}^n \hat{\bs{\xi}}_i \varepsilon_i \to_p \mathbf 0$ as $n \to \infty$.
\end{enumerate}
\end{assumption}

Assumption~\ref{assumption:fixed.general}\ref{assumption:general.moments} is standard. Assumption~\ref{assumption:fixed.general}\ref{assumption:general.lln} requires that $(\hat{\bs{\xi}}_i,\bs{\xi}_i,\varepsilon_i)$ satisfy some laws of large numbers. Only the last two conditions in  Assumption~\ref{assumption:fixed.general}\ref{assumption:general.lln} are substantive. They ensure that the measurement errors $\hat{\bs{\theta}}_i - \bs{\theta}_i$ are uncorrelated with $\mf{q}_i$ and the regression errors $\varepsilon_i$ are uncorrelated with $\hat{\bs{\xi}}_i$ asymptotically.  Let $\bs \Delta = \mathbf H + \mathbf G + \mathbf G^T$.

\begin{theorem}\label{theorem:two-step.general}

Suppose that Assumption~\ref{assumption:fixed.general} holds. Then
\begin{equation} \label{eq:two-step.fixed}
 \hat{\bs{\psi}} \to_p 
 \left( 
 \E\Big[{\bs{\xi}}_i^{\phantom{T}}\!\!\!\;{\bs{\xi}}_i^T\Big] 
 + \left[ \begin{array}{cc}
 \boldsymbol \Delta & \mathbf 0 \\
 \mathbf 0 & \mathbf 0 
 \end{array} \right] 
 \right)^{-1} 
 \left( 
 \E\Big[{\bs{\xi}}_i^{\phantom{T}}\!\!\!\;{\bs{\xi}}_i^T\Big]
 + \left[ \begin{array}{cc}
 \mathbf G & \mathbf 0 \\
 \mathbf 0 & \mathbf 0 
 \end{array} \right] 
 \right) \boldsymbol \psi,
\end{equation}
as $n \to \infty$, provided the inverse exists. In particular, if $\boldsymbol \Delta$ and $\mathbf G$ are small,
\begin{equation} \label{eq:two-step.fixed.plim}
 \mathrm{plim}(\hat{\bs{\psi}}) 
 = \bs{\psi} - \E\Big[{\bs{\xi}}_i^{\phantom{T}}\!\!\!\;{\bs{\xi}}_i^T\Big]^{-1} 
 \left[ \begin{array}{c} 
 (\mathbf H + \mathbf G^T) \bs \gamma \\ \mathbf 0 
 \end{array} \right] 
 + O\left(\|\boldsymbol \Delta\| \max\{\|\boldsymbol \Delta\|, \|\mathbf G\|\}\right).
\end{equation} 
\end{theorem}

Theorem~\ref{theorem:two-step.general} shows that $\hat{\bs{\psi}}$ is inconsistent due to measurement error in $\hat{\boldsymbol{\theta}}_i$. 
More constructively, Theorem~\ref{theorem:two-step.general} shows that the measurement error bias is, to first order, proportional to the precision of $\bs{\theta}_i$. The matrix $\mathbf H$ represents the variance of measurement error, while $\mathbf G$ represents the covariance of measurement error and $\bs \theta_i$. In many cases, measurement error is ``classical'' ($\mathbf H > \mathbf 0$, $\mathbf G = \mathbf 0$), but in ``non-classical'' settings, such as latent binary labels \citep{aignerRegressionBinaryIndependent1973}, $\mathbf G \neq \mathbf 0$. Expressions for $\mathbf H$ and $\mathbf G$ in the context of AI/ML-generated labels and topic models are derived in the following subsections.

\subsection{AI/ML-Generated Labels}
\label{sec:appendix_examples.labels}

Here we first generalize the basic framework from Section~\ref{sec:labels.theory.main} to allow for multiple categories and randomized classifiers. Let the vector $\bs \theta_i = (\theta_{i,k})_{k=1}^K$ indicate membership of $K+1$ distinct categories labeled $0, 1, \ldots, K$. Thus, if individual $i$ belongs to category $k$, we have $\theta_{i,k} = 1$ and $\theta_{i,j} = 0$ for all $j \neq k$. Let $p_k(\mf x_i)$ denote the true conditional probability $\Pr(\theta_{i,k} = 1|\mf x_i)$, and let $\mf p(\mf x_i) = (p_k(\mf x_i))_{k=1}^K$. If $\mf q_i$ is relevant for predicting $\bs \theta_i$, then we implicitly treat $\mf q_i$ as a component of $\mf x_i$ to simplify notation.

For the classifier, we introduce a function $\mf r(\mf x_i, \,\cdot\,): [0,1] \to \{0,1\}^K$ and, for each observation $i$, a random variable $U_i \sim U[0,1]$ drawn independent of $(\mf x_i, \mf q_i, Y_i, \bs \theta_i)$ and all other $U_j$, $j \neq i$,  so that $\mf r(\mf x_i, U_i)| \mf x_i \sim \mbox{Multinomial}(1, \bs \pi(\mf x_i))$. Here $\bs \pi(\mf x_i) = (\pi_k(\mf x_i))_{k=1}^K$, where $\pi_k(\mf x_i)$ denotes the probability that the classifier assigns label $k$ given $\mf x_i$. This nests deterministic classifiers, where $\mathbf r(\mf x_i,U_i) = \mathbf r(\mf x_i) = \bs \pi(\mf x_i)$, with $\pi_k(\mf x_i) = 1$ for at most one $k \in \{1,\ldots,K\}$ (with $k$ depending on $\mf x_i$) and $\pi_j(\mf x_i) = 0$ for all $j \neq k$.

\subsubsection{Fixed-DGP Asymptotics}

We first provide primitive conditions for the fixed-DGP case and derive expressions for the corresponding asymptotic bias. The following assumptions are required to hold for a fixed distribution of $(Y_i,\mathbf{q}_i,\mf x_i, \bs \theta_i)_{i=1}^n$ as the sample size $n$ becomes large.

\begin{assumption}\label{assumption:fixed.ml}
  \begin{enumerate}[label=(\roman*),nosep]

\item \label{assumption:theta.hat.ml}
$\max_{1 \leq i \leq n} \| \hat{\boldsymbol{\theta}}_i - \mf r(\mf x_i, U_i) \| \to_p 0$.

\item \label{assumption:dgp.ml}
$\E\left[ \|\mathbf q_i\|^2 \right] < \infty$, $\E\left[(1 + \|\mathbf q_i\|) |\varepsilon_i|\right] < \infty$, and $\E\left[\bs{\xi}_i^{\phantom{T}}\!\!\!\;\bs{\xi}_i^T\right]$ has full rank.

\item \label{assumption:q.unbiased}
$\E\left[ \left(\bs \pi(\mf x_i) - \mf p(\mf x_i)\right) \mf{q}_i^T\right] = \mf 0$.

\item \label{assumption:fixed.ml.eps}
$\E\left[ \bs \pi(\mf x_i) \varepsilon_i \right] = \mf 0$.
\end{enumerate}
\end{assumption}

Assumption~\ref{assumption:fixed.ml}\ref{assumption:theta.hat.ml} imposes minimal structure on the AI/ML-generated predictions $\hat{\bs \theta}_i$. It allows the classifier to be pre-trained, in which case one should interpret the analysis as holding conditional on the training sample. Assumption~\ref{assumption:fixed.ml}\ref{assumption:dgp.ml} imposes standard moment and rank conditions. Assumption~\ref{assumption:fixed.ml}\ref{assumption:q.unbiased} requires the classifier's prediction errors to be uncorrelated with the controls $\mf q_i$. As discussed in the main text, it is straightforward to relax this condition without changing our main point. Finally, Assumption~\ref{assumption:fixed.ml}\ref{assumption:fixed.ml.eps} says that the regression errors $\varepsilon_i$ are uncorrelated with $\bs \pi(\mf x_i)$. A sufficient conditions is $\E[\varepsilon_i |\mf x_i , \mf q_i ] = 0$.

\begin{theorem}\label{theorem:two-step.ml}

Suppose that Assumption~\ref{assumption:fixed.ml} holds. Then Assumption~\ref{assumption:fixed.general} holds and the OLS estimator $\hat{\bs \psi}$ has probability limit given by (\ref{eq:two-step.fixed}), with 
\[
 \begin{aligned}
 \mathbf H 
 & = \E \left[ \mr{diag}(\bs \pi(\mf x_i) + \mf p(\mf x_i)) - \bs \pi(\mf x_i) \mf p(\mf x_i)^T - \mf p(\mf x_i) \bs \pi(\mf x_i)^T \right],
 \\
 \mathbf G 
 & = \E \left[ \bs \pi(\mf x_i) \mf p(\mf x_i)^T - \mr{diag}(\mf p(\mf x_i)) \right].
 \end{aligned}
\]
If $\mathbf H$ and $\mathbf G$ are small, then first-order bias is proportional to
\[
 \mathbf H + \mathbf G^T = \E \left[ \mr{diag}(\bs \pi(\mf x_i)) - \bs \pi(\mf x_i) \mf p(\mf x_i)^T \right].
\]
\end{theorem}

\subsubsection{Sequence-of-DGPs Asymptotics}

Consider the matrix $\mf H$ from Theorem~\ref{theorem:two-step.ml}. 
Misclassification rates for each of the $K$ labels are collected along the diagonal of $\mf H$:
\[
 (\mf H)_{k,k} = \E \left[ \pi_k(\mf x_i) + p_k(\mf x_i) - 2 \pi_k(\mf x_i) p_k(\mf x_i) \right], \quad k = 1,\ldots,K.
\]
The first condition we require is that the sum of the misclassification rates vanishes:
\begin{equation} \label{eq:kappa.ml.0}
  \mr{tr}(\mf H) \to 0
\end{equation}
as $n \to \infty$. 
This condition requires that the true probabilities $p_k(\mf x_i) = \Pr(\theta_{i,k} = 1 | \mf x_i)$ converge to zero or one (i.e., accurate prediction of $\bs \theta_i$ is possible given $\mf x_i$), and that the differences $\pi_k(\mf x_i) - p_k(\mf x_i)$ converge to zero (i.e., the classifier produces correct labels). 

We also place some structure on the false-positive rates. Let $FP(\mf x_i) = \sum_{k=1}^K FP_{k}(\mf x_i)$ denote the total false-positive rate for individual $i$, where $FP_{k}(\mf x_i) = \pi_k(\mf x_i)(1 - p_k(\mf x_i))$ denotes the individual's false-positive probability for label $k$. We require
\begin{equation} \label{eq:kappa.ml}
 \lim_{n \to \infty} \sqrt n \, \E \left[ \mr{diag}(\bs \pi(\mf x_i)) - \bs \pi(\mf x_i) \mf p(\mf x_i)^T \right] = \kappa \, \bs \Omega, 
\end{equation}
where
\begin{equation} \label{eq:kappa.ml.1}
 \lim_{n \to \infty} \sqrt n \, \E \left[ FP(\mf x_i) \right] = \kappa \geq 0,
\end{equation}
and
\begin{equation} \label{eq:omega.ml}
 \lim_{n \to \infty} \frac{\E \left[ \mr{diag}(\bs \pi(\mf x_i)) - \bs \pi(\mf x_i) \mf p(\mf x_i)^T \right]}{\E \left[ FP(\mf x_i) \right]} = \bs \Omega,
\end{equation}
assuming both limits exist. In words, $\kappa = 0$ corresponds to a case where the false-positive rate across all categories vanishes faster than sampling error. Conversely, $\kappa > 0$ allows the total false-positive rate to be the same order as sampling error. If there is a single category ($K = 1$) then $\bs \Omega = 1$. More generally, $\bs \Omega$ quantifies the relative frequency with which false-positives occur among the $K$ alternatives.

\begin{assumption}\label{assumption:drifting.ml}
  \begin{enumerate}[label=(\roman*),nosep]

\item \label{assumption:kappa.ml.2}
Conditions (\ref{eq:kappa.ml.0})-(\ref{eq:omega.ml}) hold.

\item \label{assumption:theta.hat.ml.2}
$\sqrt n \max_{1 \leq i \leq n} \| \hat{\boldsymbol{\theta}}_i - \mf r(\mf x_i, U_i) \| \to_p 0$.

\item \label{assumption:dgp.ml.2}
$\E\left[ \|\mathbf q_i\|^4 \right] < \infty$, $\E\left[\varepsilon_i^4\right] < \infty$, and $\E\left[\bs{\xi}_i^{\phantom{T}}\!\!\!\;\bs{\xi}_i^T\right]$ has full rank.

\item \label{assumption:q.unbiased.2}
$\sqrt n \, \E\left[ \left(\bs \pi(\mf x_i) - \mf p(\mf x_i)\right) \mf{q}_i^T\right] \to \mf 0$.

\item \label{assumption:errors.ml.2}
$\sqrt n \, \E \left[ \bs \pi(\mf x_i) \varepsilon_i \right] \to \mf 0$.
\end{enumerate}
\end{assumption}

Assumption~\ref{assumption:drifting.ml}\ref{assumption:kappa.ml.2} formalizes the asymptotic framework. Assumption~\ref{assumption:drifting.ml}\ref{assumption:theta.hat.ml.2} slightly strengthens Assumption~\ref{assumption:fixed.ml}\ref{assumption:theta.hat.ml} to require convergence at a faster-than-root-$n$ rate. Assumption~\ref{assumption:drifting.ml}\ref{assumption:dgp.ml.2} is standard. Assumption~\ref{assumption:drifting.ml}\ref{assumption:q.unbiased.2} and~\ref{assumption:drifting.ml}\ref{assumption:errors.ml.2} weaken Assumptions~\ref{assumption:fixed.ml}\ref{assumption:q.unbiased} and~\ref{assumption:fixed.ml}\ref{assumption:fixed.ml.eps}. As before, it is possible to relax these conditions without changing our main point.

\begin{theorem} \label{theorem:drifting.ml}
Suppose that Assumption~\ref{assumption:drifting.ml} holds. Then Assumption~\ref{assumption:drifting.general} holds and the OLS estimator $\hat{\bs \psi}$ has asymptotic distribution given by (\ref{eq:two-step.drifting.result.1}) with $\kappa$ given in (\ref{eq:kappa.ml.1}) and $\mathbf \Omega$ given in (\ref{eq:omega.ml}). Moreover, two-step standard errors are consistent.
\end{theorem}

\subsection{Topic Models}
\label{sec:appendix_examples.topic}

Section~\ref{sec:topic.theory.main} presented a set of results for the sequence-of-DGPs asymptotic framework. Here we present a complementary set of results for the fixed-DGP case, before turning to a discussion of identification of the topic model in our setting.

\subsubsection{Fixed-DGP Asymptotics}
\label{sec:appendix_examples.topic.fixed}

As in the main text, we implicitly assume that the document size $C_i$ is independent of $(\boldsymbol{w}_i, \mathbf{q}_i,Y_i)$. We also assume that $\mf x_i$ and $\mf q_i$ are independent conditional on $(C_i,\bs{w}_i)$, and that $\varepsilon_i$ and $(\mf x_i,C_i)$ are independent conditional on $(\boldsymbol{w}_i,\mathbf q_i)$. In effect, the latter two assumptions ensure the multinomial sampling error and regression errors are uncorrelated.  These assumptions seem very reasonable and can be relaxed: doing so simply complicates the expressions below. We also implicitly require that $\E[\varepsilon_i(\bs w_i, \mf q_i)] = \mf 0$. That is, no relevant topic weights have been omitted from the regression.
 The following assumptions are required to hold for a fixed distribution of $(Y_i,\mathbf{q}_i,\mathbf{x}_i,\boldsymbol{w}_i,C_i)_{i=1}^n$ as $n$ becomes large.

\begin{assumption}\label{assumption:fixed}
  \begin{enumerate}[label=(\roman*),nosep]
\item \label{assumption:B.rank}
$\mf{B}$ has full rank.

\item \label{assumption:B.consistent}
$\hat{\mf{B}} \to_p \mf{B}$.

\item \label{assumption:theta.hat}
$\max_{1 \leq i \leq n} \| \hat{\boldsymbol{\theta}}_i - \bs{S}(\hat{\mf{B}} \hat{\mf{B}}^T)^{-1} \hat{\mf{B}} (\mf x_i/C_i) \| \to_p 0$.

\item \label{assumption:dgp.q}
$\E\left[ \|\mathbf q_i\|^2 \right] < \infty$, $\E\left[(1 + \|\mathbf q_i\|) |\varepsilon_i|\right] < \infty$, and $\E\left[\bs{\xi}_i^{\phantom{T}}\!\!\!\;\bs{\xi}_i^T\right]$ has full rank.
\end{enumerate}
\end{assumption}

Assumption~\ref{assumption:fixed}\ref{assumption:B.rank} says that none of the topics are redundant.  
Assumption~\ref{assumption:fixed}\ref{assumption:B.consistent} says that $\hat{\mathbf B}$ is consistent, which is satisfied by many estimators for topic models including those of \cite{bingOptimalEstimationSparse2020}, \cite{wuSparseTopicModeling2023}, and \cite{keUsingSVDTopic2022}.  
Assumption~\ref{assumption:fixed}\ref{assumption:theta.hat} imposes some structure on the $\hat{\boldsymbol{\theta}}_i$.  This condition is not vacuous: $\boldsymbol{\theta}_i = \boldsymbol S(\mathbf{B} \mathbf{B}^T)^{-1}\mathbf{B} \E[\mathbf x_i/C_i|C_i, \boldsymbol w_i]$ by Assumption~\ref{assumption:fixed}\ref{assumption:B.rank}, so one could set $\hat{\boldsymbol{\theta}}_i = \boldsymbol{S}(\hat{\mf{B}} \hat{\mf{B}}^T)^{-1} \hat{\mf{B}} (\mathbf x_i/C_i)$. 
Assumption~\ref{assumption:fixed}\ref{assumption:dgp.q} is standard.

\begin{theorem}\label{theorem:two-step.topic}

Suppose that Assumption~\ref{assumption:fixed} holds. Then Assumption~\ref{assumption:fixed.general} holds and the OLS estimator $\hat{\bs \psi}$ has probability limit given by (\ref{eq:two-step.fixed}) with 
\[
 \mathbf H 
 = \E\left[\frac{1}{C_i} \right] \left( \boldsymbol S (\mf{B} \mf{B}^T)^{-1} \mf{B}\, \mr{diag}(\mf{B}^T \E[\bs{w}_i]) \mf{B}^{T} (\mf{B} \mf{B}^T)^{-1} \boldsymbol S^T - \E\left[\bs{\theta}_i^{\phantom{T}}\!\!\!\;\bs{\theta}_i^T\right] \right), \quad 
 \mathbf G = \mathbf 0.
\]
\end{theorem}

\subsubsection{Identification}

Recall that we are sampling $(\mf x_i,C_i)$ according to the topic model (\ref{eq:obs.1}) in either a fixed-DGP setting (Appendix~\ref{sec:appendix_examples.topic.fixed}) or from a sequence of DGPs in which the distribution of $(\mf x_i, \bs w_i)|C_i$ is fixed but the distribution of $C_i$ is changing with $n$ so that (\ref{eq:kappa.topic}) holds (Section~\ref{sec:topic.theory.main}). In either case, the sampling framework is one in which the number of observations $n$, and hence the number of $\bs w_i$, is increasing.

This setting differs from recent works in econometrics that have studied identification. \citet{keRobustMachineLearning2024} and \cite{freyaldenhovenTestabilityAnchorWords2023} consider a fixed-$n$ setting where each $C_i \to \infty$ so that $\frac{\mf x_i}{C_i} \to_p \mf p_i := \E[\frac{\mf x_i}{C_i}|C_i,\bs w_i]$, the vector of multinomial probabilities for observation $i$. In a fixed-$n$ setting, identification concerns uniqueness of the factorization  $\mathbf{P} = \mathbf{B}^T \boldsymbol{W}$ with $\mathbf{P} = [\mathbf{p}_1,\ldots,\mathbf{p}_n] \in (\Delta^{V-1})^n$, $\boldsymbol{W} = [\boldsymbol{w}_1,\ldots,\boldsymbol{w}_n] \in (\Delta^{K-1})^n$ and $\mathbf B \in (\Delta^{V-1})^K$, conditional on the $n$ (fixed) sampled units. 

Within a fixed-$n$ context, identification of $\mf B$ is commonly achieved in text applications by assuming the existence of anchor words that are known to appear in some topics but not others. In essence, these amount to zero restrictions on elements of $\mathbf B$. 

In the fixed-DGP case we consider in Appendix~\ref{sec:appendix_examples.topic.fixed}, anchor words are also sufficient for $\mf B$ to be consistently estimable as $n \to \infty$ and thus, by construction, identified. See, e.g., \cite{bingOptimalEstimationSparse2020}, \cite{wuSparseTopicModeling2023}, and \cite{keUsingSVDTopic2022}. The same is true for the sequence-of-DGPs case we consider in Section~\ref{sec:topic.theory.main}. But, in that case, we are in effect sampling each true multinomial probability $\mf p_i$, since each $C_i$ is growing with $n$. Let $\mc P \subseteq \Delta^{V-1}$ denote the support of $\mf p_i$.\footnote{In the fixed-$n$ case which conditions on the $n$ observed units, we have $\mc P = \{\mf p_1,\ldots,\mf p_n\}$.} Then $\mf B$ is identified if it is the unique element of $(\Delta^{V-1})^K$ for which $\{(\mf{B} \mf{B}^T)^{-1} \mf{B} \mf p : \mf p \in \mc P\} \subseteq \Delta^{K-1}$. Anchor words are sufficient for this, but not necessary when the $\bs w_i$ have rich support.

To see this, consider the following illustration within the context of the AI/ML-generated index running example (Application 3 in Section~\ref{sec:applications}). Recall that here we have 
\[
 \mathbf B^T = \left[ \begin{array}{cc}
 \beta_1 & \beta_0 \\ (1-\beta_1) & (1-\beta_0) \end{array} \right], \quad \quad \bs w_i = \left[ \begin{array}{c} \theta_i \\ 1-\theta_i \end{array} \right].
\]
Suppose $\theta_i$ has probability density function that is strictly positive on $(0,1)$. Then $\mc P$ is the convex hull of $[\beta_1, 1-\beta_1]^T$ and $[\beta_0,1-\beta_0]^T$. These extreme values of $\mc P$ identify $\beta_1$ and $\beta_0$, and therefore $\mf B$ is identified without anchor words.

\section{Supplemental Results and Proofs} \label{sec:appendix_supplemental}

\paragraph{Notation} Let $\|\,\cdot\|$ denote the Euclidean norm when applied to vectors and the spectral norm when applied to matrices. Let $\|\,\cdot\,\|_F$ denote the Frobenius norm.

\subsection{Fixed-DGP Asymptotics}

\begin{proof}[Proof of Theorem~\ref{theorem:two-step.general}]
First consider the denominator. We have
\[
\begin{aligned}
 \frac 1n \sum_{i=1}^n \hat{\bs{\xi}}_i^{\phantom{T}}\!\!\!\;\hat{\bs{\xi}}_i^T
 & = \frac 1n \sum_{i=1}^n {\bs{\xi}}_i^{\phantom{T}}\!\!\!\;{\bs{\xi}}_i^T 
 + \frac 1n \sum_{i=1}^n (\hat{\boldsymbol{\xi}}_i - \bs{\xi}_i)(\hat{\bs{\xi}}_i - \bs{\xi}_i)^T
 + \frac 1n \sum_{i=1}^n (\hat{\boldsymbol{\xi}}_i - \bs{\xi}_i)\bs \xi_i^T
 + \frac 1n \sum_{i=1}^n \bs \xi_i(\hat{\bs{\xi}}_i - \bs{\xi}_i)^T \\
 & = \frac 1n \sum_{i=1}^n {\bs{\xi}}_i^{\phantom{T}}\!\!\!\;{\bs{\xi}}_i^T
 + \left[ \begin{array}{cc}
 \frac 1n \sum_{i=1}^n (\hat{\boldsymbol{\theta}}_i - \bs{\theta}_i)(\hat{\bs{\theta}}_i - \bs{\theta}_i)^T & \mathbf 0 \\
 \mathbf 0 & \mathbf 0 \end{array} \right] \\
 & \quad \quad
 + \left[ \begin{array}{cc}  \frac 1n \sum_{i=1}^n (\hat{\boldsymbol{\theta}}_i - \bs{\theta}_i)\bs{\theta}_i^T & \frac 1n \sum_{i=1}^n (\hat{\boldsymbol{\theta}}_i - \bs{\theta}_i)\mf{q}_i^T
 \\  \mathbf 0 & \mathbf 0 \end{array} \right] \\
 & \quad \quad 
 + \left[ \begin{array}{cc}  \frac 1n \sum_{i=1}^n \bs{\theta}_i(\hat{\boldsymbol{\theta}}_i - \bs{\theta}_i)^T &  \mathbf 0 \\
 \frac 1n \sum_{i=1}^n \mf{q}_i(\hat{\boldsymbol{\theta}}_i - \bs{\theta}_i)^T &  \mathbf 0 \end{array} \right].
\end{aligned}
\]
Hence,
\[
 \frac 1n \sum_{i=1}^n \hat{\bs{\xi}}_i^{\phantom{T}}\!\!\!\;\hat{\bs{\xi}}_i^T 
 \to_p \E\Big[{\bs{\xi}}_i^{\phantom{T}}\!\!\!\;{\bs{\xi}}_i^T\Big] 
 + \left[ \begin{array}{cc}
 \mathbf H + \mathbf W + \mathbf W^T & \mathbf 0 \\
 \mathbf 0 & \mathbf 0 
 \end{array} \right]
\]
by Assumption~\ref{assumption:fixed.general}\ref{assumption:general.lln}. The right-hand side is finite by Assumption~\ref{assumption:fixed.general}\ref{assumption:general.moments} and invertible by assumption.

For the numerator term, we have
\[
 \begin{aligned}
 \frac 1n \sum_{i=1}^n \hat{\bs{\xi}}_i Y_i
 & = \frac 1n \sum_{i=1}^n \bs{\xi}_i^{\phantom{T}}\!\!\!\;{\bs{\xi}}_i^T \bs \psi + \frac 1n \sum_{i=1}^n (\hat{\bs{\xi}}_i - \bs \xi_i) {\bs{\xi}}_i^T \bs \psi + \frac 1n \sum_{i=1}^n \hat{\bs{\xi}}_i \varepsilon_i \\
 & = \frac 1n \sum_{i=1}^n \bs{\xi}_i^{\phantom{T}}\!\!\!\;{\bs{\xi}}_i^T \bs \psi 
 + \left[ \begin{array}{cc}  \frac 1n \sum_{i=1}^n (\hat{\boldsymbol{\theta}}_i - \bs{\theta}_i)\bs{\theta}_i^T & \frac 1n \sum_{i=1}^n (\hat{\boldsymbol{\theta}}_i - \bs{\theta}_i)\mf{q}_i^T
 \\  \mathbf 0 & \mathbf 0 \end{array} \right] \bs \psi
 + \frac 1n \sum_{i=1}^n \hat{\bs{\xi}}_i \varepsilon_i.
 \end{aligned}
\]
Hence,
\[
 \frac 1n \sum_{i=1}^n \hat{\bs{\xi}}_i Y_i
 \to_p \left( 
 \E\Big[{\bs{\xi}}_i^{\phantom{T}}\!\!\!\;{\bs{\xi}}_i^T\Big]
 + \left[ \begin{array}{cc}
 \mathbf G & \mathbf 0 \\
 \mathbf 0 & \mathbf 0 
 \end{array} \right] \right) \boldsymbol \psi
\]
by Assumption~\ref{assumption:fixed.general}\ref{assumption:general.lln}. The first result follows by Slutsky's theorem. The second result then follows because $(\mathbf{A} + \mathbf{Q})^{-1} = \mathbf{A}^{-1} - \mathbf{A}^{-1} \mathbf{Q} \mathbf{A}^{-1} + O(\|\mathbf{Q}\|^2)$ for $\mathbf{A}$ invertible and $\mathbf{Q}$ small. 
\end{proof}

\subsection{AI/ML-Generated Labels}

\begin{proof}[Proof of Theorem~\ref{theorem:two-step.ml}]
Assumption~\ref{assumption:fixed.general}\ref{assumption:general.moments} holds by Assumption~\ref{assumption:fixed.ml}\ref{assumption:dgp.ml} and the fact that $\|\bs{\theta}_i\| \leq 1$.
The first part of Assumption~\ref{assumption:fixed.general}\ref{assumption:general.lln} holds by the law of large numbers and the fact that $\E[\| \bs \xi_i\|^2 ] < \infty$. For the second part, by Assumption~\ref{assumption:fixed.ml}\ref{assumption:theta.hat.ml} and the law of large numbers,
\[
 \begin{aligned}
 \frac 1n \sum_{i=1}^n (\hat{\boldsymbol{\theta}}_i - \bs{\theta}_i)(\hat{\bs{\theta}}_i - \bs{\theta}_i)^T
 & = \frac 1n \sum_{i=1}^n (\mf r(\mf x_i, U_i) - \bs{\theta}_i)(\mf r(\mf x_i, U_i) - \bs{\theta}_i)^T + o_p(1) \\
 & \to_p \E \left[ (\mf r(\mf x_i, U_i) - \bs{\theta}_i)(\mf r(\mf x_i, U_i) - \bs{\theta}_i)^T \right] \\
 & = \E \left[ \mr{diag}(\bs \pi(\mf x_i) + \mf p(\mf x_i)) - \bs \pi(\mf x_i) \mf p(\mf x_i)^T - \mf p(\mf x_i) \bs \pi(\mf x_i)^T \right] =:\mathbf H,
 \end{aligned}
\]
where the final line is by independence of $\bs \theta_i$ and $\mf r(\mf x_i, U_i)$ conditional on $\mf x_i$. Similarly, for the third part, we have 
\begin{multline*}
 \frac 1n \sum_{i=1}^n (\hat{\boldsymbol{\theta}}_i - \bs{\theta}_i)\bs{\theta}_i^T
 = \frac 1n \sum_{i=1}^n (\mf r(\mf x_i, U_i) - \bs{\theta}_i)\bs{\theta}_i^T + o_p(1) \\
 \to_p \E \left[ (\mf r(\mf x_i, U_i) - \bs{\theta}_i)\bs{\theta}_i^T \right] 
 = \E \left[ \bs \pi(\mf x_i) \mf p(\mf x_i)^T - \mr{diag}(\mf p(\mf x_i)) \right] =:\mathbf W.
\end{multline*}
For the fourth part, first note that
\[
 \left\|\frac 1n \sum_{i=1}^n (\hat{\bs{\theta}}_i - \mf r(\mf x_i, U_i)) \mf q_i^T \right\|
 \leq 
 \max_{1 \leq i \leq n} \left\| \hat{\boldsymbol{\theta}}_i - \mf r(\mf x_i, U_i) \right\| \times \frac 1n \sum_{i=1}^n \left\| \mf q_i \right\| \to_p 0,
\]
by Assumption~\ref{assumption:fixed.ml}\ref{assumption:theta.hat.ml}-\ref{assumption:dgp.ml}. Then by Assumption~\ref{assumption:fixed.ml}\ref{assumption:dgp.ml}-\ref{assumption:q.unbiased} and the law of large numbers,
\[
 \frac 1n \sum_{i=1}^n (\mf r(\mf x_i, U_i) - \bs \theta_i) \mf q_i^T
 \to_p \E \left[ (\mf r(\mf x_i, U_i) - \bs \theta_i) \mf q_i^T \right] = \mathbf 0.
\]

For the final part of Assumption~\ref{assumption:fixed.general}\ref{assumption:general.lln}, first note
\begin{equation} \label{eq:proposition:two-step.2}
 \frac 1n \sum_{i=1}^n \hat{\bs{\xi}}_i \varepsilon_i = 
 \left[ \begin{array}{c} \frac 1n \sum_{i=1}^n \hat{\bs{\theta}}_i \varepsilon_i \\ 
 \frac 1n \sum_{i=1}^n \mathbf q_i \varepsilon_i 
 \end{array} \right],
\end{equation}
where $\frac 1n \sum_{i=1}^n \mathbf q_i \varepsilon_i  \to_p \mf 0$ by the law of large numbers and Assumption~\ref{assumption:fixed.ml}\ref{assumption:dgp.ml}. Moreover, 
\[
 \left\| \frac 1n \sum_{i=1}^n \left(\hat{\bs{\theta}}_i - \mathbf r(\mathbf x_i, U_i) \right) \varepsilon_i  \right\| 
 \leq 
 \max_{1 \leq i \leq n} \left\| \hat{\boldsymbol{\theta}}_i - \mf r(\mf x_i, U_i) \right\| \times \frac 1n \sum_{i=1}^n |\epsilon_i| \to_p 0,
\]
by  Assumption~\ref{assumption:fixed.ml}\ref{assumption:theta.hat.ml}-\ref{assumption:dgp.ml}. Finally, 
\[
 \frac 1n \sum_{i=1}^n \mathbf r(\mathbf x_i, U_i) \varepsilon_i 
 \to_p \E \left[ \mathbf r(\mathbf x_i, U_i) \varepsilon_i \right] = \E \left[ \bs \pi(\mf x_i) \varepsilon_i \right] = \mathbf 0 
\]
by the law of large numbers, independence of $\mathbf r(\mathbf x_i, U_i)$ and $\varepsilon_i$ conditional on $\mf x_i$ (for the first equality), and Assumption~\ref{assumption:fixed.ml}\ref{assumption:fixed.ml.eps} (for the second equality). 
\end{proof}

\begin{lemma}\label{lem:drifting.ml}
Let $\mf z_i$ be a random vector with finite fourth moment and let Assumption~\ref{assumption:drifting.ml}\ref{assumption:theta.hat.ml.2} and (\ref{eq:kappa.ml.0}) hold. Then
\[
 \frac{1}{\sqrt n} \sum_{i=1}^n \left( \mf z_i (\hat{\bs \theta}_i - \bs \theta_i)^T - \E \left[ \mf z_i (\mf r(\mf x_i, U_i) - \bs \theta_i)^T \right] \right) \to_p 0.
\]
\end{lemma}

\begin{proof}[Proof of Lemma~\ref{lem:drifting.ml}]
First note that 
\begin{multline*}
 \left\| \frac{1}{\sqrt n} \sum_{i=1}^n \mf z_i (\hat{\bs \theta}_i - \bs \theta_i)^T
 - 
 \frac{1}{\sqrt n} \sum_{i=1}^n \mf z_i (\mf r(\mf x_i, U_i) - \bs \theta_i)^T \right\| \\
 \leq \sqrt n \max_{1 \leq i \leq n} \| \hat{\boldsymbol{\theta}}_i - \mf r(\mf x_i, U_i) \| \times  \frac 1n \sum_{i=1}^n \|\mf z_i\| \to_p 0,
\end{multline*}
by Assumption~\ref{assumption:drifting.ml}\ref{assumption:theta.hat.ml.2} and the fact that $\E[\|\mf z_i\| ] < \infty$. Now let $\mf X_{i,n} = \mf z_i (\mf r(\mf x_i, U_i) - \bs \theta_i)^T - \E \left[ \mf z_i (\mf r(\mf x_i, U_i) - \bs \theta_i)^T \right]$. With $D$ denoting the dimension of $\mf z_i$, we have 
\[
 \begin{aligned}
 \E \left[ \left\| \frac{1}{\sqrt n} \sum_{i=1}^n \mathbf{X}_{i,n} \right\|^2_F \right]
 & = \sum_{j=1}^D \sum_{k=1}^K \E \left[ \left( \mathbf{X}_{i,n} \right)_{j,k}^2 \right] \\
 & \leq \sum_{j=1}^D \sum_{k=1}^K \E \left[ \left( \mf{z}_{i,j} \right)^2 \left( r_k(\mf x_i, U_i) - \theta_{i,k} \right)^2 \right] \\
 & \leq \sum_{j=1}^D \sum_{k=1}^K \E \left[ \left( \mf{z}_{i,j} \right)^4 \right]^{1/2} \E \left[ \left( r_k(\mf x_i, U_i) - \theta_{i,k} \right)^4 \right]^{1/2} \\
 & \leq \mathrm{constant} \times \sum_{k=1}^K \E \left[ \left( r_k(\mf x_i, U_i) - \theta_{i,k} \right)^2 \right]^{1/2}  \to 0,
 \end{aligned}
\]
where the second inequality is by Cauchy-Schwarz, the third is because $\E[\|\mf z_i\|^4] < \infty$ and $r_k(\mf x_i, U_i) - \theta_{i,k}$ takes values in $\{-1,0,1\}$, and convergence to zero is by (\ref{eq:kappa.ml.0}) because $\E \left[ \left( r_k(\mf x_i, U_i) - \theta_{i,k} \right)^2 \right] = (\mf H)_{k,k}$. The result now follows by Chebyshev's inequality.
\end{proof}

\begin{proof}[Proof of Theorem~\ref{theorem:drifting.ml}]
Assumption~\ref{assumption:drifting.general}\ref{assumption:drifting.moments} holds by Assumption~\ref{assumption:drifting.ml}\ref{assumption:dgp.ml.2} and the fact that $\|\bs \theta_i\| \leq 1$.

We now verify Assumption~\ref{assumption:drifting.general}\ref{assumption:drifting.lln}. First by Lemma~\ref{lem:drifting.ml} and Assumption~\ref{assumption:drifting.ml}, we have
\begin{equation} \label{eq:drifting.ml.1}
 \frac{1}{\sqrt n} \sum_{i=1}^n (\hat{\bs \theta}_i - \bs \theta_i) \mf q_i^T 
 = \frac{1}{\sqrt n} \sum_{i=1}^n \left( (\hat{\bs \theta}_i - \bs \theta_i) \mf q_i^T - \E[( \bs \pi(\mf x_i) - \mf p(\mf x_i) ) \mf q_i^T ] \right) \to_p \mf 0,
\end{equation}
which establishes the final part of Assumption~\ref{assumption:drifting.general}\ref{assumption:drifting.lln}. Similarly, by Assumption~\ref{assumption:drifting.ml}\ref{assumption:theta.hat.ml.2},
\[
 \left\| \frac{1}{\sqrt n} \sum_{i=1}^n \hat{\bs \theta}_i (\hat{\bs \theta}_i - \bs \theta_i)^T
 - 
 \frac{1}{\sqrt n} \sum_{i=1}^n \mf r(\mf x_i, U_i) (\hat{\bs \theta}_i - \bs \theta_i)^T \right\| \to_p 0.
\]
Hence, it follows by Lemma~\ref{lem:drifting.ml} and (\ref{eq:kappa.ml}) that 
\begin{equation} \label{eq:drifting.ml.7}
 \frac{1}{\sqrt n} \sum_{i=1}^n \hat{\bs \theta}_i (\hat{\bs \theta}_i - \bs \theta_i)^T \to_p \kappa \, \bs \Omega,
\end{equation}
which establishes the second part of Assumption~\ref{assumption:drifting.general}\ref{assumption:drifting.lln}. For the first part of Assumption~\ref{assumption:drifting.general}\ref{assumption:drifting.lln}, we note
\[
 \frac 1n \sum_{i=1}^n \hat{\bs{\xi}}_i^{\phantom{T}}\!\!\!\;\hat{\bs{\xi}}_i^T
 = 
 \frac 1n \sum_{i=1}^n \hat{\bs{\xi}}_i(\hat{\bs{\xi}}_i - \bs{\xi}_i)^T + \frac 1n \sum_{i=1}^n (\hat{\bs{\xi}}_i - \bs{\xi}_i)\bs{\xi}_i^T + \frac 1n \sum_{i=1}^n {\bs{\xi}}_i^{\phantom{T}}\!\!\!\;{\bs{\xi}}_i^T \,.
\]
Displays (\ref{eq:drifting.ml.1}) and (\ref{eq:drifting.ml.7}) together imply that the first term on the right-hand side is asymptotically negligible. Moreover, $\frac 1n \sum_{i=1}^n {\bs{\xi}}_i^{\phantom{T}}\!\!\!\;{\bs{\xi}}_i^T \to_p \E \left[ {\bs{\xi}}_i^{\phantom{T}}\!\!\!\;{\bs{\xi}}_i^T \right]$ by the law of large numbers and Assumption~\ref{assumption:drifting.ml}\ref{assumption:dgp.ml.2}. It therefore remains to show that the second term on the right-hand side of the above display is asymptotically negligible. In view of (\ref{eq:drifting.ml.1}) it is enough to show 
\[
 \frac{1}{n} \sum_{i=1}^n {\bs \theta}_i (\hat{\bs \theta}_i - \bs \theta_i)^T \to_p \mf 0 .
\]
By Lemma~\ref{lem:drifting.ml}, 
\[
 \left\| \frac{1}{n} \sum_{i=1}^n {\bs \theta}_i (\hat{\bs \theta}_i - \bs \theta_i)^T  - \E \left[ {\bs \theta}_i (\mf r(\mf x_i, U_i) - \bs \theta_i)^T  \right] \right\| \to_p 0,
\]
where $\E \left[ {\bs \theta}_i (\mf r(\mf x_i, U_i) - \bs \theta_i)^T  \right] = \E\left[ \mf p(\mf x_i) \bs \pi(\mf x_i)^T - \mr{diag}(\mf p(\mf x_i)) \right] \to \mf 0$ by (\ref{eq:kappa.ml.0}), noting the diagonal entries are $\E[p_k(\mf x_i)(\pi_k(\mf x_i)-1)]$ where $0 \leq \E[p_k(\mf x_i)(1-\pi_k(\mf x_i))] \leq (\mf H_{k,k}) \to 0$ and the off-diagonals are $\E[p_k(\mf x_i)\pi_j(\mf x_i)]$, where $\sum_{j \neq k} \E[p_k(\mf x_i)\pi_j(\mf x_i)] \leq \E[p_k(\mf x_i)(1-\pi_k(\mf x_i))] \to 0$. This completes the verification of Assumption~\ref{assumption:drifting.general}\ref{assumption:drifting.lln}.

Now consider Assumption~\ref{assumption:drifting.general}\ref{assumption:drifting.clt}. For the first part, we have
\[
 \frac{1}{\sqrt{n}} \sum_{i=1}^n (\hat{\bs{\xi}}_i - \bs{\xi}_i) \varepsilon_i = 
 \left[ \begin{array}{c}
 \frac{1}{\sqrt n} \sum_{i=1}^n (\hat{\bs{\theta}}_i - \bs{\theta}_i) \varepsilon_i \\
 \mf 0
 \end{array} \right].
\]
Note that $\sqrt n \, \E \left[ (\mf r(\mf x_i, U_i) - \bs{\theta}_i) \varepsilon_i \right] = \sqrt n \, \E \left[ \bs \pi(\mf x_i) \varepsilon_i \right] \to \mf 0$ by Assumption~\ref{assumption:drifting.ml}\ref{assumption:errors.ml.2} and the fact that $\E[\varepsilon_i \bs \theta_i] = \mf 0$. Hence, by Lemma~\ref{lem:drifting.ml} using Assumption~\ref{assumption:drifting.ml}\ref{assumption:theta.hat.ml.2}-\ref{assumption:dgp.ml.2}, we have
\[
 \frac{1}{\sqrt{n}} \sum_{i=1}^n (\hat{\bs{\theta}}_i - \bs{\theta}_i) \varepsilon_i \to_p \mf 0.
\]
It follows that $\frac{1}{\sqrt{n}} \sum_{i=1}^n \hat{\bs{\xi}}_i \varepsilon_i = \frac{1}{\sqrt{n}} \sum_{i=1}^n \bs{\xi}_i \varepsilon_i + o_p(1)$. The first part of Assumption~\ref{assumption:drifting.general}\ref{assumption:drifting.clt} now holds by the central limit theorem and Assumption~\ref{assumption:drifting.ml}\ref{assumption:dgp.ml.2}.

To complete the verification of Assumption~\ref{assumption:drifting.general}\ref{assumption:drifting.clt}, we start by writing
\begin{multline*}
 \quad \frac 1n \sum_{i=1}^n \hat \varepsilon_i^2 \hat{\bs{\xi}}_i^{\phantom{T}}\!\!\!\;\hat{\bs{\xi}}_i^T 
 = \frac 1n \sum_{i=1}^n \varepsilon_i^2 \bs{\xi}_i^{\phantom{T}}\!\!\!\;\bs{\xi}_i^T
 + \frac 1n \sum_{i=1}^n \varepsilon_i^2 \left( \hat{\bs{\xi}}_i^{\phantom{T}}\!\!\!\;\hat{\bs{\xi}}_i^T - {\bs{\xi}}_i^{\phantom{T}}\!\!\!\;{\bs{\xi}}_i^T \right) \\
 + \frac 1n \sum_{i=1}^n (\hat{\varepsilon}_i^2 - \varepsilon_i^2)\hat{\bs{\xi}}_i^{\phantom{T}}\!\!\!\;\hat{\bs{\xi}}_i^T 
 =: T_{1,n} + T_{2,n} + T_{3,n}, \quad
\end{multline*}
where $T_{1,n} \to_p \E \left[ \varepsilon_i^2 \bs{\xi}_i^{\phantom{T}}\!\!\!\;\bs{\xi}_i^T \right]$ by Assumption~\ref{assumption:drifting.ml}\ref{assumption:dgp.ml.2}. To show $T_{2,n} \to_p \mf 0$, it suffices to show
\begin{equation}\label{eq:drifting.ml.3}
 T_{2,a,n} := \frac 1n \sum_{i=1}^n \varepsilon_i^2 \left( \hat{\bs{\theta}}_i^{\phantom{T}}\!\!\!\;\hat{\bs{\theta}}_i^T - {\bs{\theta}}_i^{\phantom{T}}\!\!\!\;{\bs{\theta}}_i^T \right)  \to_p \mf 0,
\end{equation}
and
\begin{equation}\label{eq:drifting.ml.4}
 T_{2,b,n} := \frac 1n \sum_{i=1}^n \varepsilon_i^2 \mf q_i \left( \hat{\bs{\theta}}_i - {\bs{\theta}}_i \right)^T  \to_p \mf 0.
\end{equation}
For $T_{2,a,n}$, we may first deduce by Assumption~\ref{assumption:drifting.ml}\ref{assumption:theta.hat.ml.2}-\ref{assumption:dgp.ml.2} that
\[
 \left\| T_{2,a,n} - \frac 1n \sum_{i=1}^n \varepsilon_i^2 \left( \mf r(\mf x_i, U_i) \mf r(\mf x_i, U_i)^T - {\bs{\theta}}_i^{\phantom{T}}\!\!\!\;{\bs{\theta}}_i^T \right) \right\| \to_p 0.
\] 
Then since ${\bs{\theta}}_i^{\phantom{T}}\!\!\!\;{\bs{\theta}}_i^T = \mr{diag}(\bs \theta_i)$ and similarly for $\mf r(\mf x_i, U_i)$, we have by Cauchy--Schwarz that
\[
 \left\|  \frac 1n \sum_{i=1}^n \varepsilon_i^2 \left( \mf r(\mf x_i, U_i) \mf r(\mf x_i, U_i)^T - {\bs{\theta}}_i^{\phantom{T}}\!\!\!\;{\bs{\theta}}_i^T \right) \right\|
 \leq 
 \left( \frac 1n \sum_{i=1}^n \varepsilon_i^4 \right)^{1/2} \left( \frac 1n \sum_{i=1}^n \| \mf r(\mf x_i, U_i) - \bs \theta_i \|^2 \right)^{1/2},
\]
where the first term on the right-hand side is $O_p(1)$ by Assumption~\ref{assumption:drifting.ml}\ref{assumption:dgp.ml.2} and the second term is $o_p(1)$  because $\E \left[ \| \mf r(\mf x_i, U_i) - \bs \theta_i \|^2 \right] = \mr{tr}(\mf H) \to 0$ by (\ref{eq:kappa.ml.0}), proving (\ref{eq:drifting.ml.3}).
For $T_{2,b,n}$, we may similarly deduce by Assumption~\ref{assumption:drifting.ml}\ref{assumption:theta.hat.ml.2}-\ref{assumption:dgp.ml.2} that
\[
 \left\| T_{2,b,n} - \frac 1n \sum_{i=1}^n \varepsilon_i^2 \mf q_i \left( \mf r(\mf x_i, U_i) - {\bs{\theta}}_i \right)^T \right\| \to_p 0.
\] 
Then by H\"older's inequality, we have 
\begin{multline*}
 \left\| \frac 1n \sum_{i=1}^n \varepsilon_i^2 \mf q_i \left( \mf r(\mf x_i, U_i) - {\bs{\theta}}_i \right)^T \right\| \\
 \leq  \left( \frac 1n \sum_{i=1}^n \varepsilon_i^4 \right)^{1/2} \left( \frac 1n \sum_{i=1}^n \|\mf q_i\|^4 \right)^{1/4} \left( \frac 1n \sum_{i=1}^n \|\mf r(\mf x_i, U_i) - {\bs{\theta}}_i\|^4 \right)^{1/4} .
\end{multline*}
The first two terms on the right-hand side are $O_p(1)$ by Assumption~\ref{assumption:drifting.ml}\ref{assumption:dgp.ml.2}. For the third term, note that $\|\mf r(\mf x_i, U_i) - {\bs{\theta}}_i\|^4 \leq 4\|\mf r(\mf x_i, U_i) - {\bs{\theta}}_i\|^2$ because $\|\mf r(\mf x_i, U_i) - {\bs{\theta}}_i\|^2 \in \{0,1,2\}$. Hence, the third term is $o_p(1)$  by (\ref{eq:kappa.ml.0}), proving (\ref{eq:drifting.ml.4}).

Finally, note $\hat \varepsilon_i^2 - \varepsilon_i^2 = ({\bs\xi}_i^T {\bs \psi} - \hat{\bs\xi}_i^T \hat{\bs \psi})^2 + 2 \varepsilon_i ({\bs\xi}_i^T {\bs \psi} - \hat{\bs\xi}_i^T \hat{\bs \psi})$. Hence, to show $T_{3,n} \to_p \mf 0$, it suffices to show
\begin{equation}\label{eq:drifting.ml.5}
 T_{3,a,n} := \frac 1n \sum_{i=1}^n ({\bs\xi}_i^T {\bs \psi} - \hat{\bs\xi}_i^T \hat{\bs \psi})^2  \hat{\bs{\xi}}_i^{\phantom{T}}\!\!\!\;\hat{\bs{\xi}}_i^T   \to_p \mf 0,
\end{equation}
and
\begin{equation}\label{eq:drifting.ml.6}
 T_{3,b,n} := \frac 1n \sum_{i=1}^n \varepsilon_i ({\bs\xi}_i^T {\bs \psi} - \hat{\bs\xi}_i^T \hat{\bs \psi})  \hat{\bs{\xi}}_i^{\phantom{T}}\!\!\!\;\hat{\bs{\xi}}_i^T \to_p \mf 0.
\end{equation}
By H\"older's inequality, we have 
\[
 \|T_{3,a,n}\| \leq \left( \frac 1n \sum_{i=1}^n ({\bs\xi}_i^T {\bs \psi} - \hat{\bs\xi}_i^T \hat{\bs \psi})^4 \right)^{1/2} \left( \frac 1n \sum_{i=1}^n \|\hat{\bs{\xi}}_i\|^4 \right)^{1/2},
\]
and
\[
 \|T_{3,b,n}\| \leq \left( \frac 1n \sum_{i=1}^n ({\bs\xi}_i^T {\bs \psi} - \hat{\bs\xi}_i^T \hat{\bs \psi})^4 \right)^{1/4} \left( \frac 1n \sum_{i=1}^n \|\hat{\bs{\xi}}_i\|^4 \right)^{1/2} \left( \frac 1n \sum_{i=1}^n \varepsilon_i^4 \right)^{1/4},
\]
where $\frac 1n \sum_{i=1}^n \|\hat{\bs{\xi}}_i\|^4 = O_p(1)$  and $\frac 1n \sum_{i=1}^n \varepsilon_i^4 = O_p(1)$ by Assumption~\ref{assumption:drifting.ml}\ref{assumption:theta.hat.ml.2}-\ref{assumption:dgp.ml.2}. Moreover, ${\bs\xi}_i^T {\bs \psi} - \hat{\bs\xi}_i^T \hat{\bs \psi} = (\bs{\theta}_i - \hat{\bs \theta}_i)^T \bs{\gamma} + \hat{\bs \xi}_i^T(\bs \psi - \hat{\bs \psi})$. Then, since $(a + b)^4 \leq 8 a^4 + 8 b^4$, we have
\[
 \frac 1n \sum_{i=1}^n ({\bs\xi}_i^T {\bs \psi} - \hat{\bs\xi}_i^T \hat{\bs \psi})^4
 \leq
 \left( \frac 8n \sum_{i=1}^n \| \bs{\theta}_i - \hat{\bs \theta}_i \|^4 \right) \left\| \bs{\gamma} \right\|^4 + \left( \frac 8n \sum_{i=1}^n \|\hat{\bs{\xi}}_i\|^4 \right) \|\hat{\bs \psi} - \bs \psi\|^4,
\]
where the second term on the right-hand side is $o_p(1)$ by consistency of $\hat{\bs \psi}$ and the fact that $\frac 1n \sum_{i=1}^n \|\hat{\bs{\xi}}_i\|^4 = O_p(1)$, as established above. For the first term on the right-hand side, we have $\frac 1n \sum_{i=1}^n \| \bs{\theta}_i - \hat{\bs \theta}_i \|^4 = \frac 1n \sum_{i=1}^n \| \bs{\theta}_i - r(\mf x_i, U_i) \|^4 + o_p(1)$  by Assumption~\ref{assumption:drifting.ml}\ref{assumption:theta.hat.ml.2}. Then arguing as above we have $\frac 1n \sum_{i=1}^n \| \bs{\theta}_i - r(\mf x_i, U_i) \|^4  \to_p 0$, proving (\ref{eq:drifting.ml.5}) and (\ref{eq:drifting.ml.6}).
\end{proof}

\subsection{Topic Models}

Throughout this section and the next, let $\hat{\mf p}_i = \frac{\mf x_i}{C_i}$ and $\mf p_i = \E[\frac{\mf x_i}{C_i}|C_i, \bs w_i]$. The next two lemmas apply in both fixed-DGP and sequence-of-DGPs frameworks. 
\begin{lemma}\label{lem:expected.p}
Suppose that (\ref{eq:obs.1}) holds. Then
\[
 \E\left[ \left. \hat{\mf{p}}_i^{\phantom{T}}\!\!\!\;\hat{\mf{p}}_i^T \right|  C_i, \bs{w}_i \right] = \mf{B}^{T} \bs{w}_i^{\phantom{T}}\!\!\!\;\bs{w}_i^T \mf B + \frac{1}{C_i} \left( \mr{diag}(\mf{B}^T \bs{w}_i ) - \mf{B}^{T} \bs{w}_i^{\phantom{T}}\!\!\!\;\bs{w}_i^T \mf B \right) ,
\]
and
\[
 \E\left[ \left. (\hat{\mf{p}}_i - \mf{p}_i) (\hat{\mf{p}}_i - \mf{p}_i)^T \right|  C_i, \bs{w}_i \right] = \frac{1}{C_i}  \left( \mr{diag}(\mf{B}^T \bs{w}_i ) - \mf{B}^{T} \bs{w}_i^{\phantom{T}}\!\!\!\;\bs{w}_i^T \mf B \right) .
\]
\end{lemma}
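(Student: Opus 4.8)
The plan is to recognize that, conditional on $(C_i, \bs\theta_i)$, the vector $\mf x_i$ is $\mr{Multinomial}(C_i, \mf p_i)$ with $\mf p_i = \mf B^T \bs\theta_i$, so that $\hat{\mf p}_i = \mf x_i / C_i$ is a scaled multinomial count vector whose conditional first two moments are standard. First I would record the two building blocks: the classical multinomial mean and covariance, namely $\E[\mf x_i \mid C_i, \bs\theta_i] = C_i \mf p_i$ and $\mr{Cov}[\mf x_i \mid C_i, \bs\theta_i] = C_i\big(\mr{diag}(\mf p_i) - \mf p_i^{\phantom T} \mf p_i^T\big)$. Dividing by $C_i$ gives the conditional mean $\E[\hat{\mf p}_i \mid C_i, \bs\theta_i] = \mf p_i$ and the conditional covariance $\mr{Cov}[\hat{\mf p}_i \mid C_i, \bs\theta_i] = \tfrac{1}{C_i}\big(\mr{diag}(\mf p_i) - \mf p_i^{\phantom T} \mf p_i^T\big)$, the factor $1/C_i$ arising because scaling by $1/C_i$ scales the covariance by $1/C_i^2$.

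For the first identity, I would use the second-moment decomposition
\[
 \E[\hat{\mf p}_i^{\phantom T} \hat{\mf p}_i^T \mid C_i, \bs\theta_i] = \mr{Cov}[\hat{\mf p}_i \mid C_i, \bs\theta_i] + \E[\hat{\mf p}_i \mid C_i, \bs\theta_i]\, \E[\hat{\mf p}_i \mid C_i, \bs\theta_i]^T,
\]
which equals $\tfrac{1}{C_i}\big(\mr{diag}(\mf p_i) - \mf p_i^{\phantom T} \mf p_i^T\big) + \mf p_i^{\phantom T} \mf p_i^T$. Substituting $\mf p_i = \mf B^T \bs\theta_i$, so that $\mr{diag}(\mf p_i) = \mr{diag}(\mf B^T \bs\theta_i)$ and $\mf p_i^{\phantom T}\mf p_i^T = \mf B^T \bs\theta_i^{\phantom T} \bs\theta_i^T \mf B$, and collecting the two $\mf B^T \bs\theta_i^{\phantom T} \bs\theta_i^T \mf B$ terms yields exactly the claimed expression. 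For the second identity, I would observe that since $\E[\hat{\mf p}_i \mid C_i, \bs\theta_i] = \mf p_i$, the matrix $\E[(\hat{\mf p}_i - \mf p_i)(\hat{\mf p}_i - \mf p_i)^T \mid C_i, \bs\theta_i]$ is \emph{by definition} the conditional covariance $\mr{Cov}[\hat{\mf p}_i \mid C_i, \bs\theta_i]$, which is the term $\tfrac{1}{C_i}\big(\mr{diag}(\mf B^T \bs\theta_i) - \mf B^T \bs\theta_i^{\phantom T} \bs\theta_i^T \mf B\big)$ already computed above.

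Honestly there is no substantive obstacle here: the result is a direct consequence of the multinomial moment formulas together with the substitution $\mf p_i = \mf B^T \bs\theta_i$. The only point requiring any care is bookkeeping in that substitution, ensuring that $\mr{diag}(\cdot)$ is applied to the $V$-vector $\mf B^T \bs\theta_i$ and that the outer products are written in the $\mf B^T \bs\theta_i^{\phantom T} \bs\theta_i^T \mf B$ form rather than $\bs\theta_i^{\phantom T}\bs\theta_i^T$. If one prefers not to invoke the multinomial covariance as a black box, it can be reconstructed entrywise via $\mr{Cov}[x_{i,u}, x_{i,v} \mid C_i, \bs\theta_i] = C_i\big(p_{i,u}\indicator{u = v} - p_{i,u} p_{i,v}\big)$, which follows from writing $\mf x_i$ as a sum of $C_i$ i.i.d.\ categorical indicator vectors; this makes the derivation self-contained but adds nothing conceptually.
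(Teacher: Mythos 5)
Your proposal is correct and follows essentially the same route as the paper: both compute the conditional second moment of $\hat{\mf{p}}_i$ via the mean--variance decomposition using the standard multinomial moment formulas, then obtain the second identity from $\E[\hat{\mf{p}}_i \mid C_i, \bs{\theta}_i] = \mf{p}_i = \mf{B}^T \bs{\theta}_i$. The only cosmetic difference is that the paper works with $\mf{x}_i$ and divides by $C_i^2$ explicitly, whereas you state the scaled moments of $\hat{\mf{p}}_i$ directly.
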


\begin{proof}[Proof of Lemma~\ref{lem:expected.p}]
First note by (\ref{eq:obs.1}) that 
\[
\begin{aligned}
 \E\left[ \left. \hat{\mf{p}}_i^{\phantom{T}}\!\!\!\;\hat{\mf{p}}_i^T \right|  C_i, \bs{w}_i  \right]
 & = \frac{1}{C_i^2} \E\left[ \left. \mf{x}_i^{\phantom{T}}\!\!\!\;\mf{x}_i^T \right| C_i, \bs{w}_i \right]  \\
 & = \frac{1}{C_i^2} \left( \E\left[ \left. \mf{x}_i \right| C_i, \bs{w}_i \right] \E\left[ \left. \mf{x}_i \right| C_i, \bs{w}_i \right]^{T} + \mathrm{Var}\left[ \left. \mf{x}_i \right| C_i, \bs{w}_i \right] \right)  \\
 & = \mf{B}^{T} \bs{w}_i^{\phantom{T}}\!\!\!\;\bs{w}_i^T \mf B + \frac{1}{C_i} \left( \mr{diag}(\mf{B}^T \bs{w}_i ) - \mf{B}^{T} \bs{w}_i^{\phantom{T}}\!\!\!\;\bs{w}_i^T \mf B \right)   ,
\end{aligned}
\]
where the last line follows from the mean and variance of the multinomial distribution. The second result now follows because $\E\left[ \left. \hat{\mf p}_i \right| C_i, \bs{w}_i \right] = \mf{p}_i = \mathbf{B}^T \boldsymbol{w}_i$. 
\end{proof}

\begin{lemma}\label{lem:convergence.theta}
Let Assumption~\ref{assumption:fixed}\ref{assumption:B.rank}-\ref{assumption:theta.hat} hold. Then
\begin{multline*}
 \Bigg\| \frac 1n \sum_{i=1}^n \hat{\bs{\theta}}_i^{\phantom{T}}\!\!\!\;\hat{\bs{\theta}}_i^T 
 - \E\left[\bs{\theta}_i^{\phantom{T}}\!\!\!\;\bs{\theta}_i^T\right] \\
 - \E\left[\frac{1}{C_i} \right] \left( \bs{S} (\mf{B} \mf{B}^T)^{-1} \mf{B} \, \mr{diag}(\mf{B}^T \E[\bs{w}_i]) \mf{B}^T (\mf{B} \mf{B}^T)^{-1} \bs{S}^T - \E\left[\bs{\theta}_i^{\phantom{T}}\!\!\!\;\bs{\theta}_i^T\right] \right) \Bigg\| \to_p 0.
\end{multline*}
\end{lemma}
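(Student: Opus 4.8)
The plan is to reduce the claim to a weak law of large numbers for $\frac 1n \sum_{i=1}^n \hat{\mf p}_i\hat{\mf p}_i^T$ by successively replacing $\hat{\bs{\theta}}_i$ with the plug-in quantity $\tilde{\bs{\theta}}_i := (\hat{\mf B}\hat{\mf B}^T)^{-1}\hat{\mf B}\hat{\mf p}_i$ and then replacing the estimated linear map by $\mf Q_{\mf B}$. First I would record two boundedness facts. Since each $\hat{\mf p}_i$ is a probability vector, $\|\hat{\mf p}_i\| \le 1$; and since $\mf B$ has full rank (Assumption~\ref{assumption:fixed}\ref{assumption:B.rank}) and $\hat{\mf B} \to_p \mf B$ (Assumption~\ref{assumption:fixed}\ref{assumption:B.consistent}), matrix inversion is continuous at $\mf B\mf B^T$, so the continuous mapping theorem gives $(\hat{\mf B}\hat{\mf B}^T)^{-1}\hat{\mf B} \to_p \mf Q_{\mf B}$ and hence $\max_{1 \le i \le n}\|\tilde{\bs{\theta}}_i\| = O_p(1)$. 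Combined with Assumption~\ref{assumption:fixed}\ref{assumption:theta.hat}, which states $\max_i \|\hat{\bs{\theta}}_i - \tilde{\bs{\theta}}_i\| \to_p 0$, this also yields $\max_i \|\hat{\bs{\theta}}_i\| = O_p(1)$.

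Using the identity $\hat{\bs{\theta}}_i\hat{\bs{\theta}}_i^T - \tilde{\bs{\theta}}_i\tilde{\bs{\theta}}_i^T = (\hat{\bs{\theta}}_i - \tilde{\bs{\theta}}_i)\hat{\bs{\theta}}_i^T + \tilde{\bs{\theta}}_i(\hat{\bs{\theta}}_i - \tilde{\bs{\theta}}_i)^T$ and the triangle inequality, I would bound
\[
 \left\| \frac 1n \sum_{i=1}^n \left( \hat{\bs{\theta}}_i\hat{\bs{\theta}}_i^T - \tilde{\bs{\theta}}_i\tilde{\bs{\theta}}_i^T \right) \right\| \le \max_{1 \le i \le n} \| \hat{\bs{\theta}}_i - \tilde{\bs{\theta}}_i \| \cdot \max_{1 \le i \le n} \left( \| \hat{\bs{\theta}}_i \| + \| \tilde{\bs{\theta}}_i \| \right) = o_p(1) \cdot O_p(1),
\]
so the two averages coincide in the limit. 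Next, writing $\frac 1n \sum_i \tilde{\bs{\theta}}_i\tilde{\bs{\theta}}_i^T = (\hat{\mf B}\hat{\mf B}^T)^{-1}\hat{\mf B} \big( \frac 1n \sum_i \hat{\mf p}_i\hat{\mf p}_i^T \big) \hat{\mf B}^T (\hat{\mf B}\hat{\mf B}^T)^{-1}$, I would appeal to Slutsky together with $(\hat{\mf B}\hat{\mf B}^T)^{-1}\hat{\mf B} \to_p \mf Q_{\mf B}$ to replace the estimated map by $\mf Q_{\mf B}$, provided the inner average converges.

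For that inner average, the entries of $\hat{\mf p}_i\hat{\mf p}_i^T$ lie in $[0,1]$, so each coordinate has variance bounded by a constant and a Chebyshev argument (valid for the triangular array that arises in the sequence-of-populations case as well) gives $\frac 1n \sum_i \hat{\mf p}_i\hat{\mf p}_i^T - \E[\hat{\mf p}_i\hat{\mf p}_i^T] \to_p 0$. By iterated expectations, Lemma~\ref{lem:expected.p}, and the maintained independence of $C_i$ and $\bs{\theta}_i$,
\[
 \E\left[ \hat{\mf p}_i\hat{\mf p}_i^T \right] = \mf B^T \E[\bs{\theta}_i\bs{\theta}_i^T] \mf B + \E\left[ \tfrac{1}{C_i} \right] \left( \mr{diag}(\mf B^T \E[\bs{\theta}_i]) - \mf B^T \E[\bs{\theta}_i\bs{\theta}_i^T] \mf B \right).
\]
Multiplying on the left by $\mf Q_{\mf B}$ and on the right by $\mf Q_{\mf B}^T$ and using $\mf Q_{\mf B}\mf B^T = (\mf B\mf B^T)^{-1}\mf B\mf B^T = \mf I$ collapses the leading term to $\E[\bs{\theta}_i\bs{\theta}_i^T]$, which reproduces exactly the claimed centering. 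Note that no limit of $\E[C_i^{-1}]$ is required, since the lemma compares the sample moment against the population-$n$ expression.

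The only step requiring genuine care is the uniform replacement: Assumption~\ref{assumption:fixed}\ref{assumption:theta.hat} controls the $\hat{\bs{\theta}}_i$ only through their maximal deviation from $\tilde{\bs{\theta}}_i$, so I must first secure the uniform-in-$i$ bound $\max_i \|\hat{\bs{\theta}}_i\| = O_p(1)$ (from $\|\hat{\mf p}_i\| \le 1$ and consistency of $(\hat{\mf B}\hat{\mf B}^T)^{-1}\hat{\mf B}$) before the cross-term estimate above can be applied. The remaining ingredients---the weak law of large numbers for the bounded array $\hat{\mf p}_i\hat{\mf p}_i^T$, the Slutsky replacement of the estimated map, and the algebraic collapse through $\mf Q_{\mf B}\mf B^T = \mf I$---are routine.
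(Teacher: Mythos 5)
Your proof is correct and follows essentially the same route as the paper's: reduce to a Chebyshev law of large numbers for the bounded array $\hat{\mf{p}}_i^{\phantom{T}}\!\!\!\;\hat{\mf{p}}_i^T$, replace $(\hat{\mf{B}}\hat{\mf{B}}^T)^{-1}\hat{\mf{B}}$ by $\mf{Q}_{\mf{B}}$ via Slutsky, and conclude with Lemma~\ref{lem:expected.p} and independence of $C_i$ and $\bs{\theta}_i$. The only difference is that you spell out the cross-term bound behind the initial replacement of $\hat{\bs{\theta}}_i$ by $(\hat{\mf{B}}\hat{\mf{B}}^T)^{-1}\hat{\mf{B}}\hat{\mf{p}}_i$ (via $\max_i\|\hat{\bs{\theta}}_i\| = O_p(1)$), a detail the paper's proof leaves implicit in its appeal to Assumption~\ref{assumption:fixed}\ref{assumption:theta.hat}.
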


\begin{proof}[Proof of Lemma~\ref{lem:convergence.theta}]
In view of Assumption~\ref{assumption:fixed}\ref{assumption:theta.hat}, we have
\[
 \left\|\frac 1n \sum_{i=1}^n \hat{\bs{\theta}}_i^{\phantom{T}}\!\!\!\;\hat{\bs{\theta}}_i^T
 - \bs{S} (\hat{\mf{B}} \hat{\mf{B}}^T)^{-1} \hat{\mf{B}} \left( \frac 1n \sum_{i=1}^n \hat{\mf{p}}_i^{\phantom{T}}\!\!\!\;\hat{\mf{p}}_i^T \right) \hat{\mf{B}}^T(\hat{\mf{B}} \hat{\mf{B}}^T)^{-1} \bs{S}^T \right\| \to_p 0
\]
where $(\hat{\mf{B}} \hat{\mf{B}}^T)^{-1}$ exists with probability approaching one by Assumption~\ref{assumption:fixed}\ref{assumption:B.rank}-\ref{assumption:B.consistent}. Each element of $\hat{\mf{p}}_i^{\phantom{T}}\!\!\!\;\hat{\mf{p}}_i^T$ is bounded between $0$ and $1$, so we may deduce by Chebyshev's inequality that 
\[
 \left\| \frac 1n \sum_{i=1}^n \hat{\mf{p}}_i^{\phantom{T}}\!\!\!\;\hat{\mf{p}}_i^T 
 - \E \left[ \hat{\mf{p}}_i^{\phantom{T}}\!\!\!\;\hat{\mf{p}}_i^T \right] \right\| \to_p 0.
\]
Hence, by Assumption~\ref{assumption:fixed}\ref{assumption:B.consistent} and Slutsky's theorem, we have
\[
 \left\| \frac 1n \sum_{i=1}^n \hat{\bs{\theta}}_i^{\phantom{T}}\!\!\!\;\hat{\bs{\theta}}_i^T
 - \bs{S} (\mf{B} \mf{B}^T)^{-1} \mf{B} \, \E \left[ \hat{\mf{p}}_i^{\phantom{T}}\!\!\!\;\hat{\mf{p}}_i^T \right] \mf{B}^T (\mf{B} \mf{B}^T)^{-1} \bs{S}^T \right\| \to_p 0 .
\]
The result follows by Lemma~\ref{lem:expected.p} and independence of $C_i$ and $\bs{w}_i$.
\end{proof}

\begin{proof}[Proof of Theorem~\ref{theorem:two-step.topic}]
Assumption~\ref{assumption:fixed.general}\ref{assumption:general.moments} holds by Assumption~\ref{assumption:fixed}\ref{assumption:dgp.q} and the fact that $\bs{\theta}_i = \boldsymbol S \boldsymbol w_i$ where $\boldsymbol w_i$ takes values in $\Delta^{K-1}$, hence $\|\bs{\theta}_i\| \leq 1$.

The first part of Assumption~\ref{assumption:fixed.general}\ref{assumption:general.lln} holds because $\E[\| \bs \xi_i\|^2 ] < \infty$. For the second part, we have 
\[
 \frac 1n \sum_{i=1}^n \hat{\bs{\theta}}_i^{\phantom{T}}\!\!\!\;\hat{\bs{\theta}}_i^T 
 \to_p 
 \E\left[\bs{\theta}_i^{\phantom{T}}\!\!\!\;\bs{\theta}_i^T\right] + \mathbf H
\]
by Lemma~\ref{lem:convergence.theta}. Further, $\frac 1n \sum_{i=1}^n {\bs{\theta}}_i^{\phantom{T}}\!\!\!\;{\bs{\theta}}_i^T \to_p \E \left[ {\bs{\theta}}_i^{\phantom{T}}\!\!\!\;{\bs{\theta}}_i^T \right]$. To complete the proof of the second, third, and fourth parts of Assumption~\ref{assumption:fixed.general}\ref{assumption:general.lln}, it suffices to show that 
\begin{equation} \label{eq:proposition:two-step.1}
 \frac 1n \sum_{i=1}^n (\hat{\boldsymbol{\theta}}_i^{\phantom{T}} - {\bs{\theta}}_i^{\phantom{T}}\!\!\!\;)\bs{\xi}_i^T \to_p \mf 0.
\end{equation}
To this end, in view of Assumption~\ref{assumption:fixed}\ref{assumption:theta.hat}-\ref{assumption:dgp.q}, we have
\begin{multline*}
 \left\|\frac 1n \sum_{i=1}^n \hat{\bs{\theta}}_i^{\phantom{T}}\!\!\!\;\bs{\xi}_i^T
 - \bs{S} (\hat{\mf{B}} \hat{\mf{B}}^T)^{-1} \hat{\mf{B}} \left( \frac 1n \sum_{i=1}^n \hat{\mf{p}}_i^{\phantom{T}}\!\!\!\;\bs{\xi}_i^T \right) \right\| \\
 \leq \left( \max_{1 \leq i \leq n} \| \hat{\boldsymbol{\theta}}_i - \bs S (\hat{\mf{B}} \hat{\mf{B}}^T)^{-1} \hat{\mf{B}} \hat{\mf{p}}_i \| \right) \times \frac 1n \sum_{i=1}^n \| \bs{\xi}_i\| \to_p 0.
\end{multline*}
Note $(\hat{\mf{B}} \hat{\mf{B}}^T)^{-1} \hat{\mf{B}} \to_p (\mf{B} \mf{B}^T)^{-1} \mf{B}$ by Assumption~\ref{assumption:fixed}\ref{assumption:B.rank}-\ref{assumption:B.consistent}, and $\frac 1n \sum_{i=1}^n \hat{\mf{p}}_i^{\phantom{T}}\!\!\!\!\;\bs{\xi}_i^T = O_p(1)$ by Assumption~\ref{assumption:fixed}\ref{assumption:dgp.q} and the fact that $\|\hat{\mf{p}}_i^{\phantom{T}}\!\!\!\;\bs{\xi}_i^T \| \leq \|\bs{\xi}_i\|$. Hence,
\begin{multline*}
 \left\| \bs{S} (\hat{\mf{B}} \hat{\mf{B}}^T)^{-1} \hat{\mf{B}} \left( \frac 1n \sum_{i=1}^n \hat{\mf{p}}_i^{\phantom{T}}\!\!\!\;\bs{\xi}_i^T \right) 
 - \bs{S} ({\mf{B}} {\mf{B}}^T)^{-1} {\mf{B}} \left( \frac 1n \sum_{i=1}^n \hat{\mf{p}}_i^{\phantom{T}}\!\!\!\;\bs{\xi}_i^T \right)  \right\| \\
 \leq \| \bs{S} \| \left\| (\hat{\mf{B}} \hat{\mf{B}}^T)^{-1} \hat{\mf{B}} - (\mf{B} \mf{B}^T)^{-1} \mf{B} \right\| \times \left\| \frac 1n \sum_{i=1}^n \hat{\mf{p}}_i^{\phantom{T}}\!\!\!\;\bs{\xi}_i^T  \right\| \to_p 0.
\end{multline*}
Finally,
\[
 \left\| \bs{S} ({\mf{B}} {\mf{B}}^T)^{-1} {\mf{B}} \left( \frac 1n \sum_{i=1}^n \hat{\mf{p}}_i^{\phantom{T}}\!\!\!\;\bs{\xi}_i^T \right) - \frac 1n \sum_{i=1}^n \bs{\theta}_i^{\phantom{T}}\!\!\!\;\bs{\xi}_i^T \right\| 
 = \left\| \bs{S} ({\mf{B}} {\mf{B}}^T)^{-1} {\mf{B}} \left( \frac 1n \sum_{i=1}^n (\hat{\mf{p}}_i^{\phantom{T}} - \mf{p}_i^{\phantom{T}}\!\!)\bs{\xi}_i^T  \right) \right\|.
\]
But 
\[
  \frac 1n \sum_{i=1}^n (\hat{\mf{p}}_i^{\phantom{T}} - \mf{p}_i^{\phantom{T}}\!\!)\bs{\xi}_i^T  \to_p \E \left[ (\hat{\mf{p}}_i^{\phantom{T}} - \mf{p}_i^{\phantom{T}}\!\!)\bs{\xi}_i^T \right] = \mf 0 
\]
by independence of $\mf x_i$ and $\mf q_i$ conditional on $(C_i,\bs w_i)$, and the fact that $\E[\hat{\mf p}_i| C_i,\bs w_i] = \mf p_i$. This proves (\ref{eq:proposition:two-step.1}), from which we also conclude that $\mathbf G = \mathbf 0$.

To verify the final part of Assumption~\ref{assumption:fixed.general}\ref{assumption:general.lln}, first note
\begin{equation} \label{eq:proposition:two-step.2}
 \frac 1n \sum_{i=1}^n \hat{\bs{\xi}}_i \varepsilon_i = 
 \left[ \begin{array}{c} \frac 1n \sum_{i=1}^n \hat{\bs{\theta}}_i \varepsilon_i \\ 
 \frac 1n \sum_{i=1}^n \mathbf q_i \varepsilon_i 
 \end{array} \right],
\end{equation}
where $\frac 1n \sum_{i=1}^n \mathbf q_i \varepsilon_i  \to_p \mf 0$ by Assumption~\ref{assumption:fixed}\ref{assumption:dgp.q}. Moreover, by Assumption~\ref{assumption:fixed}\ref{assumption:theta.hat}-\ref{assumption:dgp.q}, we have
\[
 \left\| \frac 1n \sum_{i=1}^n \hat{\bs{\theta}}_i \varepsilon_i 
 - \bs{S} (\hat{\mf{B}} \hat{\mf{B}}^T)^{-1} \hat{\mf{B}} \left( \frac 1n \sum_{i=1}^n \hat{\mf{p}}_i \varepsilon_i \right) \right\| \to_p 0.
\]
We also have $(\hat{\mf{B}} \hat{\mf{B}}^T)^{-1} \hat{\mf{B}} \to_p (\mf{B} \mf{B}^T)^{-1} \mf{B}$ by Assumption~\ref{assumption:fixed}\ref{assumption:B.rank}-\ref{assumption:B.consistent}. Moreover, Assumption~\ref{assumption:fixed}\ref{assumption:dgp.q} and the fact that $\hat{\mathbf p}_i$ takes values in the simplex imply that $\E[ \|\hat{\mathbf p}_i \varepsilon_i\|] < \infty$. Hence,
\[
 \frac 1n \sum_{i=1}^n \hat{\mf{p}}_i \varepsilon_i 
 \to_p \E \left[  \hat{\mf{p}}_i \varepsilon_i \right] = \mathbf 0 
\]
by independence of $(\mf{x}_i,C_i)$ and $\varepsilon_i$ conditional on $(\bs w_i, \mf q_i)$, and  $\E[ \varepsilon_i ( \bs w_i, \mf q_i)] = \mf 0$. 
\end{proof}

\subsection{Additional Results for the Proof of Theorem~\ref{theorem:two-step.drifting}}

\begin{lemma}\label{lem:theta.bias}
Let Assumption~\ref{assumption:drifting}\ref{assumption:kappa.topic}-\ref{assumption:theta.rate} hold. Then
\[
 \frac{1}{\sqrt{n}} \sum_{i=1}^n \hat{\bs{\theta}}_i^{\phantom{T}}\!\!\!\;(\bs{\theta}_i - \hat{\bs{\theta}}_i)^T \to_p - \kappa \left( \bs S (\mf{B} \mf{B}^T)^{-1} \mf{B} \, \mr{diag}(\mf{B}^T \E[\bs{w}_i]) \mf{B}^T (\mf{B} \mf{B}^T)^{-1} \bs S^T - \E[ \bs{\theta}_i^{\phantom{T}}\!\!\!\;\bs{\theta}_i^T ] \right)  .
\]
\end{lemma}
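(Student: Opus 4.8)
The plan is to replace the estimator $\hat{\bs\theta}_i$ by the infeasible linear projection $\mf Q_{\mf B}\hat{\mf p}_i$ and then evaluate the limit of the resulting quadratic form directly using Lemma~\ref{lem:expected.p}. Write $\tilde{\bs\theta}_i = \mf Q_{\mf B}\hat{\mf p}_i$, so that $\bs\theta_i = \mf Q_{\mf B}\mf p_i$ and hence $\bs\theta_i - \tilde{\bs\theta}_i = -\mf Q_{\mf B}(\hat{\mf p}_i - \mf p_i)$. The first step is to show that
\[
 \frac{1}{\sqrt n}\sum_{i=1}^n\hat{\bs\theta}_i(\bs\theta_i-\hat{\bs\theta}_i)^T - \frac{1}{\sqrt n}\sum_{i=1}^n\tilde{\bs\theta}_i(\bs\theta_i-\tilde{\bs\theta}_i)^T \to_p \mf 0.
\]
Setting $\mf r_i = \hat{\bs\theta}_i - \tilde{\bs\theta}_i$, I would bound $\max_i\|\mf r_i\|$ by combining Assumption~\ref{assumption:drifting}\ref{assumption:theta.rate} (which controls $\hat{\bs\theta}_i - (\hat{\mf B}\hat{\mf B}^T)^{-1}\hat{\mf B}\hat{\mf p}_i$ at rate $o_p(n^{-1/2})$) with the bound $\|((\hat{\mf B}\hat{\mf B}^T)^{-1}\hat{\mf B} - \mf Q_{\mf B})\hat{\mf p}_i\| \le \|(\hat{\mf B}\hat{\mf B}^T)^{-1}\hat{\mf B} - \mf Q_{\mf B}\|$, using $\|\hat{\mf p}_i\|\le 1$. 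Because $\mf B\mapsto(\mf B\mf B^T)^{-1}\mf B$ is continuously differentiable at the full-rank matrix $\mf B$ (Assumption~\ref{assumption:drifting}\ref{assumption:B.rank}), it is locally Lipschitz, so $\sqrt n\|(\hat{\mf B}\hat{\mf B}^T)^{-1}\hat{\mf B} - \mf Q_{\mf B}\| \le L\sqrt n\|\hat{\mf B}-\mf B\|\to_p 0$ by Assumption~\ref{assumption:drifting}\ref{assumption:B.rate}. Hence $\max_i\|\mf r_i\| = o_p(n^{-1/2})$. Expanding the difference of the two sums into terms of the form $\frac{1}{\sqrt n}\sum_i\tilde{\bs\theta}_i\mf r_i^T$, $\frac{1}{\sqrt n}\sum_i\mf r_i(\bs\theta_i-\tilde{\bs\theta}_i)^T$, and $\frac{1}{\sqrt n}\sum_i\mf r_i\mf r_i^T$, each is $o_p(1)$ because $\|\tilde{\bs\theta}_i\|=\|\mf Q_{\mf B}\hat{\mf p}_i\|$ and $\|\bs\theta_i-\tilde{\bs\theta}_i\|$ are uniformly bounded (everything lives in a simplex and $\mf Q_{\mf B}$ is fixed), making these at most $\sqrt n\,(\max_i\|\mf r_i\|)\cdot O(1)$ and $\sqrt n\,(\max_i\|\mf r_i\|)^2$.

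The second step computes the limit of $\frac{1}{\sqrt n}\sum_i\tilde{\bs\theta}_i(\bs\theta_i-\tilde{\bs\theta}_i)^T = -\mf Q_{\mf B}\big(\frac{1}{\sqrt n}\sum_i\hat{\mf p}_i(\hat{\mf p}_i-\mf p_i)^T\big)\mf Q_{\mf B}^T$. Splitting $\hat{\mf p}_i = \mf p_i + (\hat{\mf p}_i-\mf p_i)$ gives
\[
 \frac{1}{\sqrt n}\sum_{i=1}^n\hat{\mf p}_i(\hat{\mf p}_i-\mf p_i)^T = \frac{1}{\sqrt n}\sum_{i=1}^n\mf p_i(\hat{\mf p}_i-\mf p_i)^T + \frac{1}{\sqrt n}\sum_{i=1}^n(\hat{\mf p}_i-\mf p_i)(\hat{\mf p}_i-\mf p_i)^T.
\]
The first sum has mean zero (since $\E[\hat{\mf p}_i-\mf p_i|C_i,\bs\theta_i] = 0$ and $\mf p_i$ is $\bs\theta_i$-measurable) and variance of each entry $O(\E[1/C_i])\to 0$, so it is $o_p(1)$ by Chebyshev's inequality. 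For the second sum I would use Lemma~\ref{lem:expected.p} to write its conditional mean as $\frac{1}{\sqrt n}\sum_i\frac{1}{C_i}(\mr{diag}(\mf B^T\bs\theta_i) - \mf B^T\bs\theta_i\bs\theta_i^T\mf B)$; its expectation converges to $\kappa(\mr{diag}(\mf B^T\E[\bs\theta_i]) - \mf B^T\E[\bs\theta_i\bs\theta_i^T]\mf B)$ by independence of $C_i$ and $\bs\theta_i$ together with (\ref{eq:kappa}), and concentration around this expectation follows from Chebyshev, since the relevant variances involve fourth central moments of the multinomial proportions, which are $O(1/C_i^2)$, and $\E[1/C_i^2]\le\E[1/C_i]\to 0$ (using $C_i\ge 1$). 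Since all objects are of fixed dimension, entrywise convergence yields convergence in norm.

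Finally I would conjugate by $\mf Q_{\mf B} = (\mf B\mf B^T)^{-1}\mf B$ and simplify: the diagonal term gives $(\mf B\mf B^T)^{-1}\mf B\,\mr{diag}(\mf B^T\E[\bs\theta_i])\,\mf B^T(\mf B\mf B^T)^{-1}$, while the quadratic term collapses via $\mf Q_{\mf B}\mf B^T = I$ to $\E[\bs\theta_i\bs\theta_i^T]$, producing exactly the claimed limit (with the leading minus sign coming from $\bs\theta_i - \tilde{\bs\theta}_i = -\mf Q_{\mf B}(\hat{\mf p}_i-\mf p_i)$). I expect the main obstacle to be the first step: controlling the replacement error at the $\sqrt n$ scale, which needs both the faster-than-root-$n$ rates in Assumption~\ref{assumption:drifting}\ref{assumption:B.rate}-\ref{assumption:theta.rate} and the local Lipschitz property of the pseudoinverse map, rather than mere consistency. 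A secondary subtlety is that all moment bounds must go through under only Assumption~\ref{assumption:drifting}\ref{assumption:B.rank}-\ref{assumption:theta.rate}, i.e., without Assumption~\ref{assumption:drifting}\ref{assumption:C}; this is handled by the elementary bound $1/C_i^2\le 1/C_i$.
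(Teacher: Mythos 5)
Your proposal is correct and takes essentially the same route as the paper's proof: both reduce $\hat{\bs\theta}_i$ to a linear projection of $\hat{\mf p}_i$ using Assumption~\ref{assumption:drifting}\ref{assumption:theta.rate} together with the faster-than-root-$n$ rate for $\hat{\mf B}$ (so the $\hat{\mf B}$-versus-$\mf B$ error vanishes at the $\sqrt n$ scale), evaluate $\frac{1}{\sqrt n}\sum_{i=1}^n \hat{\mf p}_i(\hat{\mf p}_i-\mf p_i)^T$ via Lemma~\ref{lem:expected.p} plus Chebyshev-type concentration, and finish by conjugating with $\mf Q_{\mf B}$ and simplifying through $\mf Q_{\mf B}\mf B^T = \mf I$. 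The paper's bookkeeping differs only cosmetically: it keeps the $\hat{\mf B}$-based projection and isolates the pseudoinverse discrepancy as a separate term $T_{1,n}$, and it centers $\hat{\mf p}_i(\hat{\mf p}_i-\mf p_i)^T$ in a single step using the identity $\E[\hat{\mf p}_i(\hat{\mf p}_i-\mf p_i)^T]=\E[(\hat{\mf p}_i-\mf p_i)(\hat{\mf p}_i-\mf p_i)^T]$ and a Frobenius-norm bound exploiting $\sum_{j}\hat{p}_{i,j}^2\le 1$, rather than your cross-plus-quadratic split\textemdash which also renders your fourth-moment/$\E[C_i^{-2}]$ detour unnecessary, though it is harmless.
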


\begin{proof}[Proof of Lemma~\ref{lem:theta.bias}]
First note that $\left\| \frac{1}{\sqrt{n}} \sum_{i=1}^n \hat{\bs{\theta}}_i^{\phantom{T}}\!\!\!\;(\bs{\theta}_i - \hat{\bs{\theta}}_i)^T  - T_{1,n} - T_{2,n} \right\| \to_p 0$ by Assumption~\ref{assumption:drifting}\ref{assumption:theta.rate}, where
\[
\begin{aligned}
 T_{1,n} & = \bs S (\hat{\mf{B}} \hat{\mf{B}}^T)^{-1} \hat{\mf{B}} \left( \left( \frac 1n \sum_{i=1}^n \hat{\mf{p}}_i^{\phantom{T}}\!\!\!\; \mf{p}_i^T \right) \sqrt{n} \left( \mf{B}^T (\mf{B} \mf{B}^T)^{-1} - \hat{\mf{B}}^T(\hat{\mf{B}} \hat{\mf{B}}^T)^{-1} \right) \right) \bs S^T  \\
 T_{2,n} & = \bs S (\hat{\mf{B}} \hat{\mf{B}}^T)^{-1} \hat{\mf{B}} \left( \frac{1}{\sqrt{n}} \sum_{i=1}^n \hat{\mf{p}}_i ( \mf{p}_i - \hat{\mf{p}}_i)^T \right) \hat{\mf{B}}^T(\hat{\mf{B}} \hat{\mf{B}}^T)^{-1} \bs S^T .
\end{aligned}
\]
Assumption~\ref{assumption:drifting}\ref{assumption:B.rank.2}-\ref{assumption:B.rate} implies that $\sqrt n(\mf{B}^T (\mf{B} \mf{B}^T)^{-1} - \hat{\mf{B}}^T(\hat{\mf{B}} \hat{\mf{B}}^T)^{-1}) \to_p \mathbf 0$. Moreover, as $\|\frac 1n \sum_{i=1}^n \hat{\mf{p}}_i^{\phantom{T}}\!\!\!\; \mf{p}_i^T\| \leq 1$, it follows that $T_{1,n} \to_p \mathbf 0$. 

For term $T_{2,n}$, note by Lemma~\ref{lem:expected.p} that 
\begin{align}
 \E \left[ \hat{\mf{p}}_i \left( \hat{\mf{p}}_i - \mf{p}_i \right)^T \right]
 & = \E \left[ \left( \hat{\mf{p}}_i - \mf{p}_i \right) \left( \hat{\mf{p}}_i - \mf{p}_i \right)^T \right] \notag \\
 & = \E \left[ \frac{1}{C_i} \right] \left( \mr{diag}\left(\mf{B}^T \E\left[\bs{w}_i\right]\right) - \mf{B}^T \E\left[ \bs{w}_i^{\phantom{T}}\!\!\!\;\bs{w}_i^T \right] \mf{B} \right) . \label{eq:two-step.drifting.1}
\end{align}
Let $\mf{X}_i = \hat{\mf{p}}_i \left( \hat{\mf{p}}_i - \mf{p}_i \right)^T  - \E \left[ \hat{\mf{p}}_i \left( \hat{\mf{p}}_i - \mf{p}_i \right)^T \right]$. Then with $\|\cdot\|_F$ denoting the Frobenius norm,
\[
 \begin{aligned}
 \E \left[ \left\| \frac{1}{\sqrt n} \sum_{i=1}^n \mathbf{X}_i \right\|^2_F \right]
 = \sum_{j=1}^V \sum_{k=1}^V \E \left[ \left( \mathbf{X}_i \right)_{j,k}^2 \right] 
 & \leq \sum_{j=1}^V \sum_{k=1}^V \E \left[ \left( \hat{\mf{p}}_{i,j} \right)^2 \left( \hat{\mf{p}}_{i,k} - \mf{p}_{i,k} \right)^2 \right] \\
 & \leq \sum_{k=1}^V \E \left[ \left( \hat{\mf{p}}_{i,k} - \mf{p}_{i,k} \right)^2 \right] \to 0,
 \end{aligned}
\]
where the second inequality is because $\hat{\mathbf p}_i$ is in the simplex and the convergence to zero holds in view of (\ref{eq:kappa.topic}) and (\ref{eq:two-step.drifting.1}). It follows that
\[
 \left\| \frac{1}{\sqrt{n}} \sum_{i=1}^n \hat{\mf{p}}_i ( \mf{p}_i - \hat{\mf{p}}_i)^T - \sqrt n \, \E \left[ \hat{\mf{p}}_i \left( \mf{p}_i - \hat{\mf{p}}_i \right)^T \right] \right\| \to_p 0.
\]
We conclude that $T_{2,n} \to_p - \kappa \left( \bs S (\mf{B} \mf{B}^T)^{-1} \mf{B} \, \mr{diag}(\mf{B}^T \E[\bs{w}_i]) \mf{B}^T (\mf{B} \mf{B}^T)^{-1} \bs S^T - \E[ \bs{\theta}_i^{\phantom{T}}\!\!\!\;\bs{\theta}_i^T ] \right)$ by (\ref{eq:two-step.drifting.1}) and Assumption~\ref{assumption:drifting}\ref{assumption:kappa.topic}-\ref{assumption:B.rate}
\end{proof}

\begin{lemma}\label{lem:theta.q.bias}
Let Assumption~\ref{assumption:drifting}\ref{assumption:kappa.topic}-\ref{assumption:dgp.q.2} hold. Then
\[
 \frac{1}{\sqrt{n}} \sum_{i=1}^n \bs{\xi}_i^{\phantom{T}}\!\!\!\;(\bs{\theta}_i - \hat{\bs{\theta}}_i)^T \to_p \mathbf 0 .
\]
\end{lemma}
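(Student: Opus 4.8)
The plan is to follow the template of the proof of Lemma~\ref{lem:theta.bias}, but to exploit the fact that the numeric regressors $\mf q_i$ are (conditionally) \emph{uncorrelated} with the measurement error in $\hat{\mf p}_i$, so that the limit is exactly $\mathbf 0$ rather than a nonzero multiple of $\kappa$. Write $\mf A = (\mf B \mf B^T)^{-1}\mf B$ and $\hat{\mf A} = (\hat{\mf B}\hat{\mf B}^T)^{-1}\hat{\mf B}$, so that $\bs\theta_i = \mf A \mf p_i$ by Assumption~\ref{assumption:drifting}\ref{assumption:B.rank.2}. First I would use Assumption~\ref{assumption:drifting}\ref{assumption:theta.rate} together with the crude bound
\[
 \Bigl\| \tfrac{1}{\sqrt n}\sum_{i=1}^n \mf q_i\,(\hat{\bs\theta}_i - \hat{\mf A}\hat{\mf p}_i)^T \Bigr\| \le \Bigl( \sqrt n \max_{1\le i\le n}\|\hat{\bs\theta}_i - \hat{\mf A}\hat{\mf p}_i\|\Bigr)\times \tfrac1n\sum_{i=1}^n\|\mf q_i\|
\]
to reduce the problem to showing $\tfrac{1}{\sqrt n}\sum_i \mf q_i(\bs\theta_i - \hat{\mf A}\hat{\mf p}_i)^T \to_p \mathbf 0$; here $\tfrac1n\sum_i\|\mf q_i\| \to_p \E[\|\mf q_i\|] < \infty$ by the LLN and Assumption~\ref{assumption:drifting}\ref{assumption:dgp.q.2}, and the first factor is $o_p(1)$.

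Next I would split $\bs\theta_i - \hat{\mf A}\hat{\mf p}_i = \mf A(\mf p_i - \hat{\mf p}_i) + (\mf A - \hat{\mf A})\hat{\mf p}_i$, yielding two contributions. The $(\mf A-\hat{\mf A})$ term, namely $\bigl(\tfrac1n\sum_i\mf q_i\hat{\mf p}_i^T\bigr)\,\sqrt n(\mf A-\hat{\mf A})^T$, vanishes because $\tfrac1n\sum_i\mf q_i\hat{\mf p}_i^T = O_p(1)$ (each $\hat{\mf p}_i$ lies in the simplex and $\tfrac1n\sum_i\|\mf q_i\|$ converges) while $\sqrt n(\mf A - \hat{\mf A}) \to_p \mathbf 0$ by Assumption~\ref{assumption:drifting}\ref{assumption:B.rate}. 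It then remains to treat $\bigl(\tfrac{1}{\sqrt n}\sum_i\mf q_i(\mf p_i-\hat{\mf p}_i)^T\bigr)\mf A^T$, for which it suffices to establish $\tfrac{1}{\sqrt n}\sum_i\mf q_i(\mf p_i - \hat{\mf p}_i)^T \to_p \mathbf 0$.

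This last display is where the conditional-independence assumptions do the work, and I expect it to be the crux. Because $(\mf q_i,Y_i)$ and $\mf x_i$ are independent given $(C_i,\bs\theta_i)$, we have $\E[\mf q_i(\mf p_i-\hat{\mf p}_i)^T\mid C_i,\bs\theta_i] = \E[\mf q_i\mid C_i,\bs\theta_i]\bigl(\mf p_i - \E[\hat{\mf p}_i\mid C_i,\bs\theta_i]\bigr)^T = \mathbf 0$, since $\E[\hat{\mf p}_i\mid C_i,\bs\theta_i] = \mf p_i$. Hence the summands $\mf X_i := \mf q_i(\mf p_i - \hat{\mf p}_i)^T$ are i.i.d.\ mean-zero, and a second-moment (Chebyshev) argument gives $\E\bigl[\|\tfrac{1}{\sqrt n}\sum_i\mf X_i\|_F^2\bigr] = \E[\|\mf q_i\|^2\|\mf p_i-\hat{\mf p}_i\|^2]$. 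Conditioning and applying Lemma~\ref{lem:expected.p} yields $\E[\|\hat{\mf p}_i - \mf p_i\|^2\mid C_i,\bs\theta_i] = \tfrac{1}{C_i}(1 - \|\mf p_i\|^2) \le \tfrac1{C_i}$ (the trace of $\mr{diag}(\mf B^T\bs\theta_i)$ equals $1$), so by independence of $C_i$ from $(\bs\theta_i,\mf q_i)$ this is at most $\E[1/C_i]\,\E[\|\mf q_i\|^2]$. The first factor tends to $0$ by (\ref{eq:kappa}) and the second is finite by Assumption~\ref{assumption:drifting}\ref{assumption:dgp.q.2}, so convergence in $L^2$---hence in probability---follows. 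The main obstacle is purely bookkeeping: correctly pairing the $\sqrt n$-rates in Assumption~\ref{assumption:drifting}\ref{assumption:B.rate}-\ref{assumption:theta.rate} against the $O_p$ magnitudes of the empirical averages so that each cross term is $o_p(1)$. The conceptual content is simply that, unlike in Lemma~\ref{lem:theta.bias}, no term proportional to $\kappa$ survives, because $\mf q_i$ carries no information about the multinomial sampling noise in $\hat{\mf p}_i$.
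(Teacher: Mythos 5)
Your proof is correct and follows the same skeleton as the paper's: reduce via Assumption~\ref{assumption:drifting}\ref{assumption:theta.rate} to the plug-in estimator $\hat{\mf A}\hat{\mf p}_i$, split off the $\hat{\mf B}$-estimation term and kill it with the faster-than-$\sqrt n$ rate in Assumption~\ref{assumption:drifting}\ref{assumption:B.rate}, then handle $\frac{1}{\sqrt n}\sum_i \mf q_i(\mf p_i - \hat{\mf p}_i)^T$ by a mean-zero second-moment (Chebyshev/$L^2$) argument. (Your algebraic split $\bs\theta_i - \hat{\mf A}\hat{\mf p}_i = \mf A(\mf p_i - \hat{\mf p}_i) + (\mf A - \hat{\mf A})\hat{\mf p}_i$ places $\hat{\mf p}_i$ rather than $\mf p_i$ against the $(\mf A - \hat{\mf A})$ factor, the mirror image of the paper's $T_{1,n}$, $T_{2,n}$; both work identically since $\|\hat{\mf p}_i\| \le 1$.) The one genuine difference is local: to bound $\E\bigl[\|\mf q_i\|^2\|\hat{\mf p}_i - \mf p_i\|^2\bigr]$ you condition on $(C_i,\bs\theta_i)$, factor the expectation using the conditional independence of $\mf q_i$ and $\mf x_i$, and apply the trace identity from Lemma~\ref{lem:expected.p} to get the clean bound $\E[1/C_i]\,\E[\|\mf q_i\|^2] \to 0$ via (\ref{eq:kappa}); the paper instead applies Cauchy--Schwarz entrywise, $\E[q_{i,j}^2(\hat p_{i,k}-p_{i,k})^2] \le \E[q_{i,j}^4]^{1/2}\E[(\hat p_{i,k}-p_{i,k})^4]^{1/2}$, which invokes the fourth-moment condition $\E[\|\mf q_i\|^4] < \infty$ from Assumption~\ref{assumption:drifting}\ref{assumption:dgp.q.2}. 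Your route is slightly sharper---it shows this lemma needs only $\E[\|\mf q_i\|^2] < \infty$ (the fourth moment is still needed elsewhere in the proof of Theorem~\ref{theorem:two-step.drifting})---and it makes explicit the conceptual point that the limit is exactly $\mathbf 0$ because $\mf q_i$ is conditionally uncorrelated with the multinomial noise, whereas the paper's Cauchy--Schwarz bound is cruder but avoids spelling out the conditional factorization a second time.
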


\begin{proof}[Proof of Lemma~\ref{lem:theta.q.bias}]
First note that $\left\| \frac{1}{\sqrt{n}} \sum_{i=1}^n \bs{\xi}_i^{\phantom{T}}\!\!\!\;(\bs{\theta}_i - \hat{\bs{\theta}}_i)^T  - T_{1,n} - T_{2,n} \right\| \to_p 0$ by Assumption~\ref{assumption:drifting}\ref{assumption:theta.rate}-\ref{assumption:dgp.q.2},
where
\[
\begin{aligned}
 T_{1,n} & = \left( \left( \frac 1n \sum_{i=1}^n \bs{\xi}_i^{\phantom{T}}\!\!\!\; \mf{p}_i^T \right) \sqrt{n} \left( \mf{B}^T (\mf{B} \mf{B}^T)^{-1} - \hat{\mf{B}}^T(\hat{\mf{B}} \hat{\mf{B}}^T)^{-1} \right) \right) \bs S^T \\
 T_{2,n} & = \left( \frac{1}{\sqrt{n}} \sum_{i=1}^n \bs{\xi}_i ( \mf{p}_i - \hat{\mf{p}}_i)^T \right) \hat{\mf{B}}^T(\hat{\mf{B}} \hat{\mf{B}}^T)^{-1} \bs S^T .
\end{aligned}
\]
Assumption~\ref{assumption:drifting}\ref{assumption:B.rank.2}-\ref{assumption:B.rate} implies $\sqrt n(\mf{B}^T (\mf{B} \mf{B}^T)^{-1} - \hat{\mf{B}}^T(\hat{\mf{B}} \hat{\mf{B}}^T)^{-1}) \to_p \mathbf 0$. We also have that $\|\frac 1n \sum_{i=1}^n \bs{\xi}_i^{\phantom{T}}\!\!\!\; \mf{p}_i^T\| \leq \frac 1n \sum_{i=1}^n  \|\bs{\xi}_i\| = O_p(1)$,  by Assumption~\ref{assumption:drifting}\ref{assumption:dgp.q.2}. Hence, $T_{1,n} \to_p \mathbf 0$. For $T_{2,n}$, note that $\E [\bs{\xi}_i ( \hat{\mf{p}}_i - \mf{p}_i )^T ] = \mathbf 0$ by independence of $\mf{x}_i$ and $\mf{q}_i$ conditional on $(C_i,\bs{w}_i)$ and the fact that $\E[\hat{\mf p}_i| C_i,\bs w_i] = \mf p_i$. Let $\mf{X}_i = \bs{\xi}_i \left( \hat{\mf{p}}_i - \mf{p}_i \right)^T$ and let $D$ denote the dimension of $\bs{\xi}_i$. Then
\[
 \begin{aligned}
 \E \left[ \left\| \frac{1}{\sqrt n} \sum_{i=1}^n \mathbf{X}_i \right\|^2_F \right]
 = \sum_{j=1}^D \sum_{k=1}^V \E \left[ \left( \mathbf{X}_i \right)_{j,k}^2 \right]
 & = \sum_{j=1}^D \sum_{k=1}^V \E \left[ \left( \bs{\xi}_{i,j} \right)^2 \left( \hat{\mf{p}}_{i,k} - \mf{p}_{i,k} \right)^2 \right] \\
 & \leq \sum_{j=1}^D \sum_{k=1}^V \E \left[ \left( \bs{\xi}_{i,j} \right)^4 \right]^{1/2} \E \left[ \left( \hat{\mf{p}}_{i,k} - \mf{p}_{i,k} \right)^4 \right]^{1/2} \\
 & \leq \mathrm{constant} \times \sum_{k=1}^V \E \left[ \left( \hat{\mf{p}}_{i,k} - \mf{p}_{i,k} \right)^2 \right]^{1/2}  \to 0,
 \end{aligned}
\]
where the first inequality is by Cauchy-Schwarz, the second is by Assumption~\ref{assumption:drifting}\ref{assumption:dgp.q.2} and the fact that $|\hat{\mf{p}}_{i,k} - \mf{p}_{i,k}| \leq 1$, and convergence to zero is by (\ref{eq:two-step.drifting.1}) and Assumption~\ref{assumption:drifting}\ref{assumption:kappa.topic}. It follows that $\frac{1}{\sqrt n} \sum_{i=1}^n \mathbf{X}_i  \to_p \mathbf0$. We conclude by Assumption~\ref{assumption:drifting}\ref{assumption:B.rank.2}-\ref{assumption:B.rate} that $T_{2,n} \to_p \mathbf0$.
\end{proof}

\begin{lemma}\label{lem:p.uniform}
Let Assumption~\ref{assumption:drifting}\ref{assumption:C} hold. Then
\[
 \max_{1 \leq i \leq n} \|\hat{\mf{p}}_i - \mf{p}_i\| \to_p 0.
\]
\end{lemma}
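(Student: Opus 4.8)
The plan is to reduce the Euclidean-norm statement to a coordinate-wise maximum that can be controlled by a binomial tail bound together with a union bound, and then to exploit the $(\log n)^{1+\epsilon}$ growth of $C_i$ from Assumption~\ref{assumption:drifting}\ref{assumption:C} to beat the union-bound factor. First I would pass from the $\ell^2$ norm to the $\ell^\infty$ norm using the fact that $\hat{\mf p}_i$ and $\mf p_i$ both lie in the simplex. Writing $v = \hat{\mf p}_i - \mf p_i$, one has $\|v\|_1 = \sum_{k} |\hat p_{i,k} - p_{i,k}| \leq 2$, and hence $\|v\|^2 = \sum_k v_k^2 \leq \|v\|_\infty \|v\|_1 \leq 2 \|v\|_\infty$. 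Consequently
\[
 \max_{1 \leq i \leq n} \|\hat{\mf p}_i - \mf p_i\|^2 \leq 2 \max_{1 \leq i \leq n} \max_{1 \leq k \leq V} |\hat p_{i,k} - p_{i,k}|,
\]
so it suffices to prove $\max_{i,k} |\hat p_{i,k} - p_{i,k}| \to_p 0$.

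For the coordinate-wise control, I would condition on $C_i$ and use that $C_i \hat p_{i,k} = x_{i,k}$ is $\mathrm{Binomial}(C_i,p_{i,k})$ by (\ref{eq:obs.1}). Hoeffding's inequality (or Chernoff, as used in the proof of Proposition~\ref{prop:toy.phi}) gives $\Pr(|\hat p_{i,k} - p_{i,k}| > t \mid C_i) \leq 2 e^{-2 C_i t^2}$ for every $t > 0$. Since Assumption~\ref{assumption:drifting}\ref{assumption:C} guarantees $C_i \geq c (\log n)^{1+\epsilon}$ almost surely for some $c,\epsilon > 0$, this conditional bound is dominated, almost surely, by $2 e^{-2 c t^2 (\log n)^{1+\epsilon}}$, which no longer depends on the realization of $C_i$. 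Taking expectations and applying a union bound over the $nV$ pairs $(i,k)$ yields, for any fixed $t>0$,
\[
 \Pr\Big( \max_{i,k} |\hat p_{i,k} - p_{i,k}| > t \Big) \leq 2 n V \exp\!\big( - 2 c t^2 (\log n)^{1+\epsilon} \big) = 2V \exp\!\big( \log n - 2 c t^2 (\log n)^{1+\epsilon} \big).
\]

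The final step is to observe that the right-hand side tends to zero: because $\epsilon > 0$, the term $(\log n)^{1+\epsilon}$ grows strictly faster than $\log n$, so the exponent diverges to $-\infty$ and the probability vanishes for each fixed $t$. This gives $\max_{i,k}|\hat p_{i,k}-p_{i,k}| \to_p 0$ and hence the claim. The crux of the argument---and the only place where Assumption~\ref{assumption:drifting}\ref{assumption:C} is essential---is precisely this last balancing act: the union bound over $n$ observations injects a factor $n$ (equivalently $\log n$ in the exponent), and the per-observation binomial tail must decay faster than $1/n$ uniformly in $t$. A lower bound $C_i \gtrsim \log n$ alone would only give a tail of order $n^{1 - 2ct^2}$, which fails for small $t$; the extra power $\epsilon$ is exactly what makes the bound uniform over all fixed thresholds. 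If $V$ is allowed to grow with $n$, the same argument goes through provided $\log V = o((\log n)^{1+\epsilon})$, since the passage to the $\ell^\infty$ norm is itself dimension-free.
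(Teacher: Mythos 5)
Your proof is correct, and it reaches the conclusion via a genuinely different concentration tool than the paper. The paper controls the full vector deviation at once: conditional on $\{(C_i,\boldsymbol{\theta}_i)\}_{i=1}^n$, it applies the multinomial $\ell^1$ concentration inequality of \cite{mardiaConcentrationInequalitiesEmpirical2019} together with a union bound over the $n$ observations, obtaining $\Pr\left( \max_{1\leq i \leq n}\|\hat{\mf{p}}_i - \mf{p}_i\|_1 > t \right) \leq n(2^V-2)\exp\{-c(\log n)^{1+\epsilon}t^2/(2K)\} \to 0$, and then passes from $\ell^1$ to the Euclidean norm since the former dominates the latter. You instead work coordinate-wise: marginal binomiality of each $x_{i,k}$ given $(C_i,\boldsymbol{\theta}_i)$, Hoeffding's bound $2e^{-2C_it^2}$, a union bound over the $nV$ pairs, and the simplex norm comparison $\|v\|^2 \leq \|v\|_\infty\|v\|_1 \leq 2\|v\|_\infty$. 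Both arguments hinge on the identical final step\textemdash Assumption~\ref{assumption:drifting}\ref{assumption:C} makes the exponent $(\log n)^{1+\epsilon}$ beat the $\log n$ injected by the union bound over observations\textemdash and your closing observation that $C_i \gtrsim \log n$ alone would fail uniformly over small thresholds $t$ is exactly the paper's rationale for the form of the assumption. The trade-off between the two routes: the paper's is a one-liner once the vector-level inequality is cited and controls the $\ell^1$ deviation directly at threshold $t$, whereas your $\ell^\infty$ reduction controls $\|\cdot\|$ only at threshold $\sqrt{2t}$, which is immaterial for a fixed-threshold limit. In exchange, your union-bound cost is $nV$ rather than $n\,2^V$, so your argument is more elementary (no external concentration lemma beyond Hoeffding) and dimension-robust: as you note, it tolerates $V$ growing with $n$ provided $\log V = o((\log n)^{1+\epsilon})$, while the paper's $2^V$ factor would require $V$ itself to be $o((\log n)^{1+\epsilon})$. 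For the paper's fixed-$V$ setting the two are equally good. One small point of hygiene: since $\mf{p}_i = \mf{B}^T\boldsymbol{\theta}_i$ is random, you should condition on $(C_i,\boldsymbol{\theta}_i)$ jointly (as the paper does) rather than on $C_i$ alone; this costs nothing because the Hoeffding bound is uniform in $p_{i,k}$, but it is what licenses treating $p_{i,k}$ as a constant inside the tail bound.
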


\begin{proof}[Proof of Lemma~\ref{lem:p.uniform}]
Let $\|\,\cdot\,\|_1$ be the $\ell^1$ norm. As $C_i\hat{\mf{p}}_i|(C_i, \bs{w}_i)  \sim \mbox{Multinomial}(C_i, \mf{p}_i)$, for all $t > 0$ we have
\[
 \Pr\left( \left. \max_{1 \leq i \leq n} \|\hat{\mf{p}}_i - \mf{p}_i\|_1 > t \right| \{(C_i, \bs{w}_i)\}_{i=1}^n \right)
 \leq \sum_{i=1}^n (2^V - 2) \exp\left\{-\frac{C_i t^2}{2}\right\} 
\]
by the union bound and Lemma 1 of \cite{mardiaConcentrationInequalitiesEmpirical2019}. Then by Assumption~\ref{assumption:drifting}\ref{assumption:C}, 
\[
 \Pr\left( \max_{1 \leq i \leq n} \|\hat{\mf{p}}_i - \mf{p}_i\|_1 > t  \right) 
 \leq n (2^V - 2) \exp \left\{-\frac{c (\log n)^{1+\epsilon} t^2}{2} \right\} ,
\]
where $c,\epsilon > 0$. Hence, $\max_{1 \leq i \leq n} \|\hat{\mf{p}}_i - \mf{p}_i\|_1 \to_p 0$. The result now follows because the $\ell^1$ norm is weakly greater than the Euclidean norm.
\end{proof}

\end{document}